\documentclass[envcountsame]{llncs}

\usepackage{amssymb,amsmath}
\usepackage{epsfig}

\setcounter{tocdepth}{3}
\usepackage{graphicx}
\usepackage{color}

\usepackage{ifthen}
\newcounter{doctype}\setcounter{doctype}{1}
%%
%% prooftype = 0 <=> for long versions
%% prooftype = 1 <=> for short versions
\newcounter{prooftype}\setcounter{prooftype}{1}

\newcommand{{\Nat}}{{\rm I\! N}}
\newcommand{{\Real}}{{\rm I\! R}}

\newcommand{{\prob}}{{\rm \bf P}}
\newcommand{{\mean}}{{\rm \bf E}}
\newcommand{{\var}}{{\rm \bf V}}

\newcommand{{\myalp}}[0]{\ensuremath{\mathcal{A}}}
\newcommand{{\myalple}}[1]{\ensuremath{\mathcal{A}^{#1}}}
\newcommand{{\myalpinf}}[0]{\ensuremath{\mathcal{A}^\infty}}
\newcommand{{\myalpfin}}[0]{\ensuremath{\mathcal{A}^{<\infty}}}
\newcommand{{\myalpall}}[0]{\ensuremath{\mathcal{A}^{\le\infty}}}

%Chapterspecific Commands
\newcommand{{\myalpb}}[0]{\ensuremath{\mathcal{B}}}
\newcommand{{\myalpble}}[1]{\ensuremath{\mathcal{B}^{#1}}}
\newcommand{{\myalpbinf}}[0]{\ensuremath{\mathcal{B}^\infty}}
\newcommand{{\myalpbfin}}[0]{\ensuremath{\mathcal{B}^{<\infty}}}
\newcommand{{\myalpball}}[0]{\ensuremath{\mathcal{B}^{\le\infty}}}

\newcommand{{\wret}}[2]{\ensuremath{A^{(#1)}(#2)}}
\newcommand{{\wreton}}[2]{\ensuremath{B^{(#1)}(#2)}}
\newcommand{{\toalp}}[4]{\ensuremath{\mu\!\left(#1,#2[#3\!\ldots\!#4]\right)}}
\newcommand{{\toalpx}}[5]{\ensuremath{\mu_{#1}\!\left(#2,#3[#4\!\ldots\!#5]\right)}}
\newcommand{{\toalpxy}}[3]{\ensuremath{\mu_{#1}\!\left(#2,#3\right)}}
\newcommand{{\toalpret}}[4]{\ensuremath{\tilde{\mu}\!\left(#1,#2[#3\!\ldots\!#4]\right)}}
\newcommand{{\toalpretx}}[5]{\ensuremath{\tilde{\mu}_{#1}\!\left(#2,#3[#4\!\ldots\!#5]\right)}}
\newcommand{{\myread}}[2]{\ensuremath{\rho_{#1,#2}}}
\newcommand{{\mydel}}[1]{\ensuremath{\rho_{#1}}}

%weitere Befehle%%%%%%%%%%%%%%%%%%%%%%%%%%%%%%%%%%%%%%%%%%%%%%%%%%%%%%%%%%%%
\newcommand{{\df}}{{\mathrm{def}}}
\newcommand{{\argmax}}{{\mathrm{arg max}}}
\newcommand{{\rank}}{{\mathrm{rank}}}
\newcommand{{\proj}}{{\mathrm{proj}}}
\newcommand{{\state}}{{\mathrm{STATE}}}
\newcommand{{\mess}}{{\mathrm{MESS}}}
\newcommand{{\lcp}}{{\mathrm{lcp}}}
\newcommand{{\bigo}}{{\mathcal{O}}}

%Paperspezifische Befehle%%%%%%%%%%%%%%%%%%%%%%%%%%%%%%%%%%%%%%%%%%%%%%%%%%%%%%
\newcommand{{\SUB}}{{\mathrm{SUB}}}
\newcommand{{\INS}}{{\mathrm{INS}}}
\newcommand{{\DEL}}{{\mathrm{DEL}}}
\newcommand{{\ECHO}}{{\mathrm{ECHO}}}

\newcommand{\myprob}[1]{\ensuremath{\prob\!\left\{#1\right\}}}
\newcommand{\mymean}[1]{\ensuremath{\mean\!\left[#1\right]}}
\newcommand{{\mycom}}[2]
{{{\em   \textcolor{red}{\textsf{#1}} } :
    \textsf{#2}
}}

\begin{document}

\mainmatter  % start of an individual contribution

% first the title is needed

\title{Smoothed Analysis of Trie Height \\ by Star-like PFAs}
%\title{Smoothed Analysis of Trie Height}

% a short form should be given in case it is too long for the running head

% the name(s) of the author(s) follow(s) next
%
% NB: Chinese authors should write their first names(s) in front of
% their surnames. This ensures that the names appear correctly in
% the running heads and the author index.
%
\author{Stefan Eckhardt\inst{1} 
\and Sven Kosub\inst{2}
\and Johannes Nowak\inst{1}
}
%
%\authorrunning{Lecture Notes in Computer Science: Authors' Instructions}
% (feature abused for this document to repeat the title also on left hand pages)

% the affiliations are given next; don't give your e-mail address
% unless you accept that it will be published
\institute{Fakult\"at f\"ur Informatik, Technische Universit\"at M\"unchen,\\
Boltzmannstra{\ss}e 3, D-85748 Garching, Germany\\
\texttt{\{eckhardt,nowakj\}@in.tum.de}
\and
Fachbereich Informatik und Informationswissenschaft, Universit\"at Konstanz,\\
Box D 67, D-78457 Konstanz, Germany\\
\texttt{kosub@inf.uni-konstanz.de}
}

%
% NB: a more complex sample for affiliations and the mapping to the
% corresponding authors can be found in the file "llncs.dem"
% (search for the string "\mainmatter" where a contribution starts).
% "llncs.dem" accompanies the document class "llncs.cls".
%

%\toctitle{Lecture Notes in Computer Science}
%\tocauthor{Authors' Instructions}
\maketitle

\begin{abstract}
Tries are general purpose data structures for information retrieval. 
The most significant parameter of a trie is its height $H$ which equals the 
length of the longest common prefix of any two string in the set $A$ over 
which the trie is built. 
Analytical investigations of random tries suggest that 
$\mymean{H}\in O(\log(\|A\|))$, although $H$ is unbounded  in the worst case. 
Moreover, sharp results on the distribution function of $H$ are known 
for many different random string sources. 
But because of the inherent weakness of the modeling behind average-case 
analysis---analyses being dominated by random data---these results 
can utterly explain the fact that in many practical situations the trie
height is logarithmic. 
We propose a new semi-random string model and perform a smoothed analysis 
in order to give a mathematically more rigorous explanation for the 
practical findings.
The perturbation functions which we consider are based on probabilistic 
finite automata (PFA) and we show that the transition probabilities of the
representing PFA completely characterize the asymptotic growth of the smoothed 
trie height. 
%Moreover, we show that this semi-random string model is not trivially included 
%in previous \emph{purely} random string models. 
Our main result is of dichotomous nature--logarithmic or unbounded---and is 
certainly not surprising at first glance, but we also give quantitative upper 
and lower bounds, which are derived using multivariate generating function in 
order to express the computations of the perturbing PFA. 
A direct consequence is the logarithmic trie height for edit perturbations
(i.e., random insertions, deletions and substitutions). 
%Our model is simple and the conditions which are sufficient and necessary
%for logarithmic smoothed trie height are very easy to check. 
%This opens a new perspective on the analysis of tries and trie-like data 
%structures by means of semi-random string models.

\end{abstract}

\section{Introduction}

\paragraph{Motivation.} 
Tries are very simple general purpose data structures for information retrieval. 
This explains why many parameters of tries, such as height, path length or size 
have been and are still subject to extensive average-case analysis under various 
random string models. 
Though almost all investigations of trie height using analytical methods suggest 
the height of a random trie to be logarithmic in the number of strings, it is not 
immediately clear that these results can utterly explain the fact that in many 
practical settings the height is in fact logarithmic in the number of strings and 
thus far from its worst case.
This holds particularly in the case of non-random data.  
Nilsson and Tikkanen~\cite{Nilsson-2002} have experimentally investigated the 
height of PATRICIA trees, or path-compressed tries, and other search structures.  
There, the height of a PATRICIA tree, built over a set of 50,000 unique random 
uniform strings was 16 on average and 20 at most.  
For non-random data consisting of 19,461 strings from geometric data, of 16,542 
ASCII character strings from a book, and of 38,367 strings from Internet routing 
tables, the height of a path-compressed trie, built over these data sets, was on
average 21, 20, and 18, respectively, and at most 30, 41 and 24, respectively.  
These findings suggest that worst-case inputs, i.e., sets for which the height 
of the respective trie is unbounded, are isolated peaks in the input space and
even small deviations from worst-case inputs yield logarithmic trie height.  
In this work we try to give an analytical explanation of these findings.  

The previous average-case approaches typically suffer from two drawbacks: such 
analyses are usually dominated by a high proportion of purely random inputs, 
even if the random inputs are produced by very sophisticated random string models
such as the recently introduced symbolic dynamical systems~\cite{Valle-2001}; 
moreover, even those results that give sharper bound on the higher moments of 
the distribution function of $H_S$ cannot explain the behavior of a trie on an
input that is very close to worst-case.  
{\em Smoothed analysis},
%\footnote{See~http://www.cs.yale.edu/homes/spielman/SmoothedAnalysis/index.html for an up-to-date list of papers
%  concerning smoothed analysis.}, 
introduced by Spielman and Teng in their seminal paper~\cite{Spielman-2004} in 
order to explain the good practical performance of the simplex algorithm which 
is opposed to its bad worst-case behavior, gives a mathematical framework to 
better understand such findings: one is not interested in finding a probability
distribution which models the typical input space more accurately.  
Rather, one aims at answering the following kind of question: are worst-case 
inputs ``isolated peaks'' or ``plateaus''? 
To this end, the smoothed complexity of an algorithm---or more generally of a 
random variable---is defined as the maximum over all inputs of the expected 
running time of the algorithm under slight random perturbations of the respective 
input.  
In order to perform a meaningful smoothed analysis, one must find an {\em adequate} 
perturbation function, i.e., one which resembles those random influences which
real world inputs are typically subject to.
 
In order to perform a meaningful smoothed analysis of the most significant parameter 
of a trie, namely its height, we present a new {\em semi-random} model for strings: 
the set of input strings is chosen in advance by an adversary and then strings are 
randomly perturbed independently using the same perturbation functions. 
The adversary has full information on the parameters of the perturbation function, 
but has no control over the random perturbations and the parameters, once the input 
set is chosen. 
This model fits into the framework of smoothed analysis. 
(A somewhat stronger model for semi-random sources was considered by Santha and 
Vazirani in~\cite{Santha-1986}, though it was not in the context of tries but in 
the context of random and quasi-random number generators: there, the adversary had 
(limited) control over each of the biases in a sequence biased coin flips and 
full knowledge over the previous history.)  
The class of string perturbation functions which we consider can be represented 
by (Mealy-type) probabilistic finite automata (PFAs).
PFAs are a standard tool for modeling unreliable deterministic systems and  
they provide a compact representation for a very natural class of string 
perturbation functions, namely {\em random edit perturbations}, which occur in 
those settings and thus resembles some of the typical random influences that 
strings are exposed to.  
To the best of our knowledge, we are the first to perform a smoothed analysis 
of trie parameters.
 
\paragraph{Results.} 

The main technical contribution of this paper is a characterization of the 
smoothed trie height depending on the probabilistic automaton underlying the 
perturbation function. 
For a star-like perturbation automaton, it is logarithmic if and only if 
certain conditions for the automaton's transitions hold; if the conditions do 
not hold then the height is unbounded (see Theorem \ref{thm:star-like}).  
The logarithmic/unbounded-height dichotomy is certainly not surprising,
but the conditions are very easy to check.  
So, the theorem can be applied to rather complex perturbation models for
which an ad-hoc analysis appears quite involved. 
In order to derive the result, it turns out that we must bound the
coincidence probability of length $k$ by an exponentially decreasing term in 
$k$. 
To do so, we use multivariate rational generating functions to express the 
computations of the perturbing PFA.  
This approach, which is called the \emph{weighted words model} (cf.~\cite{Flajolet-2007}),
seems to fit best the requirements of our analysis.  
A direct consequence of the theorem is a proof of the logarithmic smoothed 
trie height for random edit perturbations (i.e., insertions, deletions,
substitutions).  
We should note that not all plausible string perturbation functions can be 
modeled by star-like automata, e.g., transpositions.

\medskip

\noindent
{\em Due to the page limit, all technical proofs of this paper are omitted.
Instead, they can be found in the full paper \cite{Eckhardt-2007} 
(or in the appendix).}

\section{Preliminaries}

Let $\Nat=\{0,1,2,\dots\}$ and $\Nat_+=\{1,2,\dots\}$.
 Let $\myalp$ denote the finite alphabet. 
The elements of $\myalp$ are called the {\em symbols} of the alphabet.  
For $m\in\Nat$, the finite {\em sequence} $s=(a_1,\ldots,a_m)$ of symbols 
$a_i\in\myalp$ is called a {\em finite string} over $\myalp$ of {\em length} $m$, 
denoted by $|s|$.  
If $m=0$ then the string is called the {\em empty string} and is denoted by 
$\epsilon$.  
An infinite sequence $s=(a_1,a_2,\ldots)$ of symbols such that for $i\in\Nat_+$ 
it holds that $a_i\in\myalp$ is called and {\em infinite string}. 
In this case, we set $|s|=+\infty$.  
A string $s=(a)$ of length one will by abbreviated by $a$.  
For a finite string $s=(a_1,\ldots,a_m)$ of length $m$ and $i\in\{1,\ldots,m\}$ 
we access the $i$-th element $a_i$ by $s[i]$. 
Also, for an infinite string $s$ we access the $i$-th element for $i\in\Nat$ by
$s[i]$ and for every string $s$ it holds the $s[0]=\epsilon$.  
For a finite string $s$ and $i,j\in\{1,\ldots,|s|\}$ satisfying $i<j$, the 
subsequence $(a_i,\ldots,a_{j})$ is called a substring of $s$ and is accessed 
by $s[i\ldots j]$. 
Here, for $i,j\in\{1,\ldots,m\}$ satisfying $i>j$ we define $s[i\ldots j]=\epsilon$ 
as the access to the empty string.
If $s$ is infinite, we access the infinite substring starting at the $i$-th 
position of $s$ by $s[i\ldots]$.
For a symbol $a\in\myalp$ and a string $s$ over the same alphabet we denote by 
$|s|_a$ the number of occurrences of the symbol $a$ in $s$. 
For a finite string $s$, it clearly holds that  $|s|=\sum_{a\in\myalp}|s|_a$.
For  a natural number $m$ we denote by $\myalple{m} $ the set of all strings over $\myalp$
that have length {\em  exactly $m$} and by $\myalp^{\le m}$ the set of all strings that 
have length {\em  at most $m$}.
Let $\myalpinf $ denote the set of all infinite strings over $\myalp$, 
let $\myalpfin $ denote the set of all finite strings over $\myalp$, and let $\myalpall $ 
denote the set of all finite and infinite strings over $\myalp$. 
A  string $s$ is a {\em  prefix} of a string $t$, if $|s|\le|t|$ and for all 
indices $i\in\{1,\ldots,|s|\}$ it holds that $s[i]=t[i]$. 
We write $s\sqsubseteq t$ in this case. 
A prefix $s$ of $t$ is a {\em  proper prefix}, if $|s|<|t|$. 
We write $s\sqsubset t$ in this case. 
Note that for the empty string $\epsilon\sqsubset t$ for every non-empty string $t$.

\section{Towards Smoothed Trie Height}
\label{sec:towards-smooth-trie}

 \subsection{Related Studies: The Height of Random Tries}
  \label{sec:related-work} 

Let $\myalp=\{a_1,\ldots,a_r\}$ be a finite alphabet of cardinality $r\ge 2$ and 
let $A\subseteq \myalpinf$ be a set of $\|A\|=n$ distinct strings. 
Tries were first introduced and analyzed by Fredkin~\cite{Fredkin-1960} and
 Knuth~\cite{Knuth-1997}.
 %: ... (Definition of $r$-ary-tries). %%% OLD SENTENCE
 For the analysis of random tries, i.e., tries built over a set of random strings, 
the $n$-dimensional product space $\Omega=\myalpinf\times\dots\times\myalpinf$ 
together with some joint probability function $\mu:\Omega\rightarrow[0,1]$ 
constitutes the probability space.  
For an $r$-ary trie built over the set $A$ it holds that the height of the trie 
$H_A=\max_{s,t\in A} \lcp(s,t)$, where $\lcp:\myalpall\times\myalpall\rightarrow\Nat_+$ 
measures the length of the longest common prefix of two strings.  
To analyze its behavior, $H_A$ is viewed as a random variable over the above sample 
space $\Omega$.  
Clearly, in the worst case $H_A$ is unbounded for standard tries.  
By choosing some joint probability function, one can analyze the expected value of 
$H_A$ and other asymptotic properties, e.g., its asymptotic density. 
This has been done for various kinds of probability density functions, where in 
general the $n$ strings in the set $A$ are assumed to be independent and 
identically distributed.  
Thus, the joint density function is completely characterized by the density function $\tilde{\mu}:\myalpinf\rightarrow[0,1]$ for one random string. 
Let $Z$ be a random variable that takes values from $\myalp$. 
Then the one-sided infinite sequence $\{Z_i\}_{i=1}^{\infty}$ can be considered
a random string over $\myalp$.

 The oldest model is the {\em memory-less random source}, were each 
symbol corresponds to a possible outcome of a Bernoulli trial~\cite{Knuth-1997}.  
This means, we are given a parameter vector $p=(p_1,\ldots,p_r)\in(0,1)^r$ and
 for all $i\in\Nat_+$ it holds that $\myprob{Z_i=a_j}=p_j$.  
Another model for random strings that is discussed intensively in the literature 
are {\em Markovian sources}~\cite{Szpankowski-1991,Apostolico-1992}: a string can 
be considered the outcome of transitions of a finite and ergodic Markov chain 
with state space $\myalp$ which has reached its stationary distribution.  
These two models can be subsumed under a the wider class of random strings which 
satisfy the mixing property.  
Pittel~\cite{Pittel-1985,Pittel-1986} considered the growth of different types of 
random trees under the assumption that the underlying random process 
$\{Z_i\}_{i=1}^{\infty}$ satisfies the {\em mixing property}: the sequence 
$\{Z_i\}_{i\ge 1}$ satisfies the mixing property, if there exists $n_0\in\Nat$ and
positive constants $c_1,c_2$ such that for {\em all} $1\le m\le m+n_0\le n$ and 
$A\in\mathfrak{F}^{m}_{1}$ and $B\in\mathfrak{F}^{n}_{m+n_0}$ it holds that
$c_1\cdot\myprob{A}\myprob{B}\le \myprob{A\cap B}\le c_2\cdot\myprob{A}\myprob{B}$\mbox{,}
%\begin{equation}
%  \label{eq:def:mixing}
%  c_1\cdot\myprob{A}\myprob{B}\le \myprob{A\cap B}\le   c_2\cdot\myprob{A}\myprob{B}\mbox{,}
%\end{equation}
where for $1\le k\le l$, $\mathfrak{F}^{l}_{k}$ denotes the $\sigma$-field generated 
by the subsequence $\{Z_i\}^{l}_{i=k}$. 
Under this assumption the following limit---the R{\'e}nyi entropy of second order---exists\footnote{originally referred to
as $h_3$ in~\cite{Pittel-1985,Pittel-1986}, but we drop the subscript}
\begin{equation}  \label{eq:RenyiEntropy}
h=\lim_{n\rightarrow\infty} \frac{-\ln{\sum_{\alpha\in\myalple{n}}\myprob{Z_{1}^{n}=\alpha}^2}}{2n}\mbox{,}
\end{equation}
where $Z_{1}^{n} = (Z_1,\ldots,Z_n)$, and the height $H_\textrm{MM}(n)$ of a random 
trie built over a set of $n$ independent strings produced by a mixing source 
satisfies
\begin{equation}
  \label{eq:Height:Mixing}
 H_\textrm{MM}(n)\stackrel{\textrm{w.h.p.}}{\rightarrow}(\ln{n})/h\mbox{.}
\end{equation}
% {\em  Remark:} it can be shown that both the Memory-less random source and the Markovian source satisfy the mixing property and
% furthermore that the limit~\eqref{eq:RenyiEntropy} which is called {\em  R{\'{e}}nyi's Entropy of second order} exists
% (see, e.g.,~\cite{Szpankowski-2001}). This provides an alternative proof for the height of random tries under those
% models for random strings. Generally, if the limit~\eqref{eq:RenyiEntropy} \\
% The following model for random strings is particularly interesting as it presents a dichotomous result and consideres
% strings with unlimited dependency:
Devroye~\cite{Devroye-1982,Devroye-1984,Devroye-1992a} has introduced the 
{\em density model}, where each string can be considered the fractional binary 
expansion of a random variable from $[0,1)$ and all $n$ random variables are 
assumed to be independent having identical density. 
% Let $f:[0,1]\rightarrow [0,1]$ be a densitiy function.  For $1\le i\le 2^k$,
% $k\ge 0$, the dyadic intervals $\mathcal{I}_{k,i}$ of $[0,1]$ are $[\frac{i-1}{2^k},\frac{i}{2^k})$. For a random string
% $\{Z_i\}_{i=1}^{\infty}$ over the alphabet $\{0,1\}$ and $i\in\Nat_+$, let $j(Z_1^{i})= \sum_{k=1}^{i} Z_k\cdot
% 2^{i-k+1}$. Then for the random string $\{Z_i\}_{i=1}^{\infty}$ which equals the fractional binary expansion of $x$
% which is drawn according to $f$ it holds that
% \[
% \myprob{Z_i=a_j}=\int_{\mathcal{I}_{i,j(Z_1^{i})}}\!f(x)\,d x\mbox{.}
% \]
Particularly, it was shown that the height $H_\textrm{DM}(n)$ of a random trie 
under the density model satisfies 
\begin{equation}
  \label{eq:Density}
-1\le
\liminf_{n\rightarrow \infty} \mymean{H_\textrm{DM}(n)} - \frac{\ln{\alpha}+e}{\ln{2}}
\le
\limsup_{n\rightarrow \infty} \mymean{H_\textrm{DM}(n)} - \frac{\ln{\alpha}+e}{\ln{2}}
\le 1
  \mbox{,}
\end{equation}
if $\int f^2(x)\!\,d x<\infty$, and is unbounded, otherwise. 
Here, $\alpha = \frac{n^2\int_{0}^{1}\!f^2(x)\,dx}{2}$ and $e=2.718\ldots$ is 
Euler's constant. 
Note that this model for random strings accounts for unlimited dependency between
symbols.
Another model, that allows for unlimited dependency are {\em  symbolic 
dynamical systems} which were introduced by Valle{\'{e}}~\cite{Valle-2001} as a
very general model for random strings.
Cl{\'{e}}ment, Valle{\'{e}} and Flajolet~\cite{Clement-2001} have analyzed the 
height of random tries under this model for random strings.

 \subsection{Smoothed Trie Height}
 \label{sec:smoothed-trie-height}

Depending on the real world application in which the tries are used, the previous 
analyses of trie height and other trie parameters give satisfactionary explanations 
of their good practical performance, which is opposed to their bad worst-case 
behavior: if successive data items are independent then the analyses with respect 
to the memory-less random source provide a sound mathematical explanation for the 
practical findings. 
If, on the other hand, data items are not independent, then there are many 
situations in which the analyses with respect to the Markovian source give adequate
answers. 
Nevertheless, none of the results on the height of random tries can be accounted 
for a thorough explanation of the practical findings: this is particularly the 
case in situations where tries are built over natural languages or biological data 
like DNA or protein sequences. 
Those analyses which use random string models suffer from the following two drawbacks 
of average-case analyses: first, it is unclear to which amount the analyses are 
dominated by purely random inputs; second, even the w.h.p. results and relatively 
exact knowledge of the distribution function of the height cannot explain the 
behavior of tries on nearly-worst-case inputs.
To answer these kind of questions, it seems appropriate to perform a smoothed 
analysis and to model a string by means of a {\em  semi-random model}, where
non-random inputs are subject to slight random perturbations. 
We initiate this line of research by performing a smoothed analysis of the most 
crucial parameter of a trie, i.e., its height.
Having motivated the need of a smoothed analysis of trie parameters, we now turn
to the formal definition of the smoothed trie height $H(S,n,X)$. 
Here, $S$ and $X$ denote the input set and the string perturbation function,
respectively, and $n$ is the number of strings that are stored in the trie. 
\begin{definition}
  \label{def:smoothed-trie-height}
  Let $\myalp$ be a finite alphabet and let $S\subseteq\myalpinf $ be some non-empty 
  set of infinite strings over  $\myalp$. 
  Given a perturbation function $X:\myalpinf \rightarrow \myalpall $ the 
  {\em  smoothed trie height} for $n$ strings over the set $S$ under the 
  perturbation function $X$, denoted by $H(S,n,X)$, is defined by
  \[
  H(S,n,x)=_\df    \max\limits_{\genfrac{}{}{0pt}{2}{A\subseteq S}{\|A\|=n}}
    \mymean{\max\limits_{s,t\in A}\lcp(X(s),X(t))}\mbox{.}
    \]
  \end{definition}
  Note that we assume that strings are perturbed independently.  
  For our smoothed analysis, the input set $S$ can either be arbitrary, i.e., 
  the above product space over all infinite strings from $\myalp$, or restricted. 
  We consider only the first variants, where the inputs are unconstrained.
  
%  \noindent
%  {\em  Remark: }
%  Yet, there are some models for random strings which allow for unlimited
% dependency between symbols and thus are in some sense suited to model such data. Those are particularly the Density
% Model and the Symbolic Dynamical Systems. Nevertheless, performing a smoothed analysis using any of these models is
% involved from a modeling point of view: we need to reinterprete the random influence that real world strings are subject
% to in terms of these random string models. Since this seems quite involved, we do not follow this vein here, but we
% propose a new framwork that allows us the model these random influences in a very natural way. 

\subsection{Perturbations by Probabilistic Finite Automata}
\label{sec:pert-prob-finite}
In this subsection we present our perturbation model which is based on probabilistic 
finite automata.
%In order to perform a meaningful smoothed analysis we propose to consider
%perturbation functions that locally manipulate strings by random substitutions, 
%insertions and deletions and by a class of perturbation functions that generalize 
%these operations. 
%Those perturbation functions do resemble random influences which inputs are 
%exposed to.

\paragraph{(Mealy-type) Probabilistic Finite Automata.}
\label{sec:mealy-type-pfas}
A probabilistic finite automaton~\cite{Paz-1971,Rabin-1963} is a standard way 
to model an unreliable deterministic system or a communication channel.
We suggest to consider random perturbation functions representable by 
probabilistic automata.
It is not our aim to develop a general theory of automata-based perturbation
functions. 
Instead, we use probabilistic finite automata as a compact, but nevertheless 
fairly general representation for string perturbation functions. 
We will define the probabilistic finite automata in a slightly non-standard way 
by separating input states from output states. 
This provides an easy way to describe automata computing non-length-respecting 
input-output relations.

A (Mealy-type) probabilistic finite automaton (PFA) over a finite alphabet 
$\myalp$ is a tuple $X=(R,W,\mu_R,\mu_W,\sigma)$ where:
\begin{itemize}
\item $R$ is a non-empty, finite set of {\em  input states}.
\item $W$ is a non-empty, finite set of {\em  output states}.
\item $\mu_R:R\times \myalp\times(R\cup W)\rightarrow [0,1]$ is the  
	{\em  transition probability function for input states} satisfying
  \[
  (\forall q\in R)(\forall a\in \myalp)~~\sum_{p\in R\cup W} \mu_R(q,a,p)=1\mbox{.}
  \]
  The semantics of the function $\mu_R$ is: if the PFA $X$ is in input state $q$ and the symbol
  $a$ is read, move into state $p$ with probability $\mu_R(q,a,p)$. Note that possibly $\mu_R(q,a,q)>0$.
\item $\mu_W:W\times \myalp\times(R\times W)\rightarrow [0,1]$ is the 
	{\em   transition probability function for output states} satisfying
  \[
  (\forall q\in W)(\forall a\in \myalp)~~\sum_{p\in R\cup W} \mu_W(q,a,p)=1\mbox{.}
  \]
  The semantics of the function $\mu_W$ is: if the PFA $X$ is in output state 
  $q$, with probability $\mu_W(q,a,p)$, write the symbol $a$ and move into state 
  $p$. Note that possibly $\mu_W(q,a,q)>0$.
\item $\sigma:R\cup W\rightarrow [0,1]$ is the {\em initial probability 
  distribution}, i.e., $\sigma$ satisfies $\sum_{q\in R\cup W}\sigma(q)=1$. 
\end{itemize}

%\paragraph{String perturbation based on PFAs} 

We will identify with a PFA $X$ over the alphabet $\myalp$ a {\em  random mapping} 
$X:\myalpall \rightarrow\myalpall $, mapping finite of infinite strings to finite 
or infinite strings.
A {\em  computation of a PFA $X$ on an input symbol $a\in\myalp$} starts in some 
input state and stops when $X$ moves into an input state, again. 
The (possibly empty) output of the computation is composed by concatenating all 
output symbols of transitions leaving output states along which $X$ moved during 
the computation.
A computation of $X$ on an input string $t\in\myalpall $ is composed by the 
concatenation of the computations on the successive symbols of the string $t$, 
where the computation of $X$ on the symbol $t[i+1]$ starts in that input
 state in which the computation of $X$ on the symbol $t[i]$ stopped. 
The computation stops when $X$ reaches an input state and there is no more input 
symbol left to read. 
If $t$ is infinite, the computation never stops. 
The output of the computation is composed by concatenating all outputs of the 
computations on the individual symbols $t[1],t[2],\ldots$. 
A computation of $X$ is said to have {\em  output length} $m$ if the output has
length $m$ and is said to have {\em  input length} $l$ if it has read $l$ symbols 
of the input.

%\paragraph{Drawing PFAs} In all figures to follow, states in circles are input states and states in boxes 
%are output states and as usual transitions are only drawn if their probability is strictly positive. 
%Transitions are labeled by a tuples ``$a/x$'' where $a\in \myalp$ and $0< x\le1$. 
%The semantics is as follows: for a reading transition, $a/x$ means ``if we read
%symbol $a$, then we move along the respective transition with probability $x$''; 
%for a writing transition, ``$a/x$'' means ``with probability $x$, we move along the 
%respective transition and write $a$''. 
% Rarely, transitions will also by labeled by
% ``$a_1|\ldots|a_k/x$'', where for $i\in\{1,\ldots, k\}$ it holds that $a_i\in\myalp$ and again $0<x\le 1$.
% This kind of labeling is mainly introduced to improve the presentation of the figures. It is an abbreviation for $k$
% transitions that are labeled with the tuples $a_i/ x$ for $i\in\{1,\ldots,k\}$.

\paragraph{Edit Perturbations of Binary Strings.}
\label{sec:edit-pert-binary}

Edit operations, i.e., {\em substituting}, {\em deleting} or {\em inserting} 
symbols, are among the most fundamental operations for locally manipulating 
strings.  
Therefore, a smoothed analysis with respect to perturbation functions that
resemble these operations provide a better understanding of the good practical 
performance of tries.  
We say that a perturbation function on strings is an {\em edit perturbation} 
if it perturbs the input by randomly substituting, inserting or deleting symbols. 
Let $p,q\in(0,1)$.
The perturbation function $\SUB_p$ substitutes each symbol in 
the input string with its opposite symbol independently with probability $p$; 
the perturbation function $\INS_{pq}$ inserts before each symbol in the input 
string a number of $k$ symbols $a_1,\ldots,a_k$, where for $i\in\{1,\ldots, k\}$, 
$a_i$ equals $0$ with probability $q$ and $1$ with probability $1-q$. 
The number of inserted symbols is geometrically distributed with parameter $p$.
Finally, the perturbation function $\DEL_p$ reads the input string and deletes 
each symbols independently with probability $p$. 

Analyzing the smoothed trie height under each of the edit perturbations of binary
strings has been the starting point of our research in this field. 
It can be shown that the smoothed trie height under $\SUB_p$ and $\INS_{pq}$ is 
logarithmic and it is immediate that this does not hold for the function $\DEL_p$ 
because the input string $111\ldots$ is mapped to the output string $11\ldots$ 
deterministically.
For the convex combination of the edit perturbation matters are less trivial:
    let $p_S,p_I,q_I,p_D\in (0,1)$ be the respective parameters for the edit 
    perturbations and let $v=(v_S,v_I,v_D)\in[0,1]^3$ be such that $v_S+v_I+v_D=1$ 
    be the parameter vector for the convex combination.
    We say that a perturbation function $Y:\myalpall\rightarrow\myalpall$ 
    is the {\em  convex combination of the binary edit perturbations}, if $Y$ 
    can be represented by the PFA depicted in Figure~\ref{fig:pfa-convex}.

%%%%% ACHTUNG: MUSS UNBEDINGT EINGEBUNDEN WEDREN %%%%%
    
%\begin{figure}[ht]
\begin{figure}[ht]
%     \vspace*{6cm} %%%% NUR SOLANGE BILDEINBINDUNG NICHT FUNKTIONIERT
      \begin{center}\input{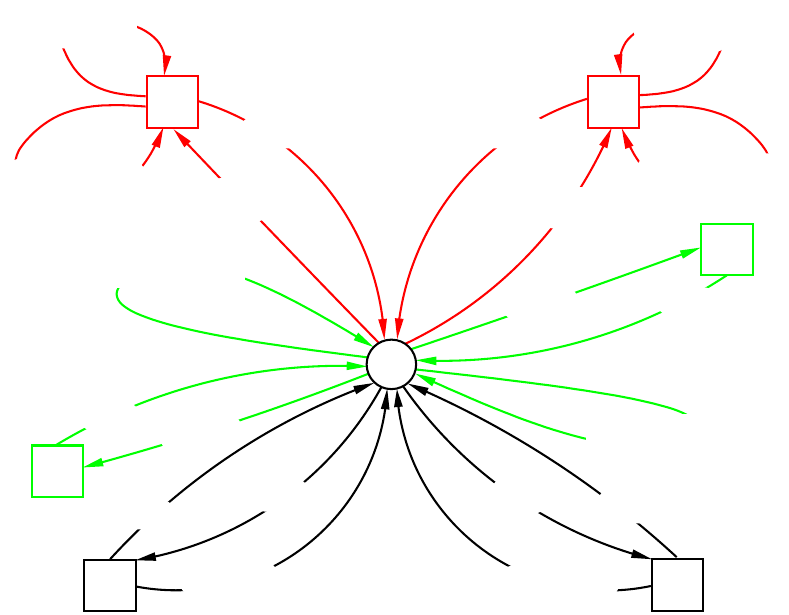tex_t}\end{center}
      \vspace*{-10pt}
      \caption[Convex Combination PFA]{PFA $Y$ representing the convex combination 
		of the PFAs $\INS_{p_Iq_I}$, $\SUB_{p_S}$ and $\DEL_{p_D}$. 
		States in circles are input states and states in boxes are output states 
		and as usual transitions are only drawn if their probability is strictly 
		positive. Transitions are labeled by a tuples ``$a/x$'' where $a\in \myalp$ 
		and $0< x\le1$. The semantics is as follows: for a reading transition, 
		$a/x$ means ``if we read symbol $a$, then we move along the respective 
		transition with probability $x$''; for a writing transition, ``$a/x$'' 
		means ``with probability $x$, we move along the respective transition 
		and write $a$''. }
   \label{fig:pfa-convex}
 \end{figure}

 \paragraph{Star-like Perturbation Functions.}

All of the perturbations considered in the last section have in common that 
there is exactly one input state and that the computations on the
individual symbols never move between distinct output states. 
We now formally define a class of perturbation functions which are characterized 
by exactly these properties. 
Since their representation is a directed star graph with multi-edges and loops, 
where the unique input state is the center vertex, the set of output states is 
the set of terminal vertices, and the transitions having strictly positive 
probability gives the set edges, we call those PFAs and their respective
perturbation functions {\em  star-like}.

\begin{definition}%[Star-like automata]
\label{def:star-like-pfa}
Let $\myalp$ be finite a alphabet and let $X=(R,W,\mu_R,\mu_W,\sigma)$ be a PFA over $\myalp$. 
$X$ is said to be {\em star-like} if the following hold:
\begin{itemize}
\item[(1)] $\|R\|=1$, i.e.,  $R=\{s\}$.
\item[(2)] The function $\mu_W$ is such that
\[
(\forall q,q'\in W,q\neq q')(\forall a\in\myalp)~~\mu_W(q,a,q')=0,
\]
i.e., the graph induced by the set $W$ and edge set $\{\{q,q'\}~:~(\exists a\in \myalp)\allowbreak
~\mu_W(q,a,q')>0\}$
consists of a number of connected components each of which is a single vertex.
\item[(3)] For all $q\in W$ it holds that $\sum_{ a\in\myalp}\mu_W(q,a,q)<1$, i.e., the probability that $X$ loops at
  $q$ is strictly less than one.
\end{itemize}
\end{definition}

Further, we consider a strict subclass of the star-like perturbation functions, namely the class of those perturbation
functions which are such that for each symbol $a\in\myalp$, there is exactly one output state, say $q_a$, that can be
reached from $s$ with positive probability when reading $a$. If additional to this the perturbation functions are 
{\em non-deleting}, i.e., there are no loops at $s$, then we say that they are {\em read-deterministic} perturbation
functions. Otherwise, i.e., there are symbols $a$ which are deleted with positive probability, we say that the
perturbation functions are {\em read-semi-deterministic}. It is easy to verify that all edit perturbations are star-like
perturbation functions and further that the functions $\INS_{pq}$ and $\SUB_p$ are read-deterministic and the function
$\DEL_p$ is read-semi-deterministic.

\begin{definition}%[Read-(semi-)deterministic automata]
  \label{def:read-det-star-like-pfa}
  Let $\myalp$ be a finite alphabet and let $X=(\{s\},W,\mu_R,\mu_W,\sigma)$ be a star-like PFA over $\myalp$.
  $X$ is said to be {\em  read-semi-deterministic}, if for all $a\in\myalp$, there exist a constant $p_a\in[0,1]$ 
  and {\em  exactly one} output state $q_a$ such that $\mu_R(s,a,s)=p_a$ and  $\mu_R(s,a,q_a)=1-p_a$. 
  Further, $X$ is said to be {\em  read-deterministic}, if for all $a\in\myalp$, $p_a=0$, i.e., $\mu_R$ has {\em  no loops at $s$}.
\end{definition}

 \subsection{Comparison to Previous Random String Models}
 \label{sec:comp-prev-rand}

 In this subsection we compare our semi-random string model to purely random 
 string models. 
 One property that the sequences from most random sources possess is the 
 {\em  mixing property}, which as we mentioned implies that R{\'{e}}nyi's 
 Entropy of second order, i.e., the limit~\eqref{eq:RenyiEntropy}, exists. 
 We show that these assumptions do not hold in general for sequences which 
 result from the perturbation of a non-random input sequence by means of a 
 star-like perturbation function. 
 To this end, let $X$ be a read-semi-deterministic PFA such that for two 
 distinct symbols $a$ and $b$ it holds that $Q_a\neq Q_b$, where for 
 $i\in\{a,b\}$,
 \[
 Q_i=(\mu_W(q_i,a_1,q_i)+\mu_W(q_i,a_1,s),\ldots,\mu_W(q_i,a_r,q_i)+\mu_W(q_i,a_r,s))\mbox{.}
 \]
 It is easy to verify that the output of the pairs $(aaa\ldots,X)$ and 
 $(bbb\ldots,X)$, respectively, have the same probability distributions as 
 memory-less random sources with parameter vectors $Q_a$ and $Q_b$, respectively.
 Then a standard calculation (cf.~\cite{Szpankowski-2001}) gives the following: For $i\in\{a,b\}$ the limit depends on the input string:
 \[
 \lim_{n\rightarrow\infty} \frac{-\ln{\sum_{\alpha\in\myalple{n}}\myprob{X(iii\dots)[1\ldots n]=\alpha}^2}}{2n} = Q_i 
\]
The enables us to give lower bounds on the smoothed trie height. 
%A formal proof of the following proposition can be found in the appendix.
\begin{proposition}
  \label{prop:lower-bound}
  Let $X=(\{s\},W,\mu_R,\mu_W,\sigma)$ be a read-semi-deterministic PFA over a 
  finite alphabet $\myalp$ in canonical form (for a definition see below) and 
  let 
 $P=\max_{a\in\myalp}\sum_{b\in\myalp} (\mu_W(q_a,b,q_a)+\mu_W(q_a,b,s))^2
  $.
  Then for all $\varepsilon>0$,
  \[
  H(\myalpinf ,n,X) \ge 2(1-\varepsilon)\log_{1/P}{n}-o(1)\mbox{.}
  \]
\end{proposition}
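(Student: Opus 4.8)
The plan is to evaluate $H(\myalpinf,n,X)$ on a well-chosen adversarial input for which the perturbed strings reduce to a memoryless random source, and then apply the lower-bound direction of the corresponding random-trie analysis. Let $a\in\myalp$ attain the maximum in the definition of $P$, so $P=\sum_{b\in\myalp}Q_a[b]^2$ with $Q_a$ as in the paragraph preceding the proposition. By the observation recalled there, $X(aaa\ldots)$ has exactly the law of a memoryless source with parameter vector $Q_a$, so $\myprob{X(aaa\ldots)[1\ldots k]=y}=\prod_{l=1}^k Q_a[y_l]$ for every $y\in\myalple k$; this source is mixing and, by~\eqref{eq:RenyiEntropy}, its R\'enyi entropy of second order equals $h=-\tfrac12\ln P$. (If $Q_a$ is a point mass then $P=1$, $X(aaa\ldots)$ is eventually deterministic, and the bound follows from the construction below by taking $k$ unbounded; so assume $P<1$, and note that the canonical form gives $p_a<1$.) Since $H(\myalpinf,n,X)$ is a maximum over all $n$-element input sets, fix $n$, fix a parameter $m=m(n)$ to be chosen large, and take $A=\{s_1,\dots,s_n\}$ with $s_i=a^m\tau_i$ for pairwise distinct $\tau_i\in\myalpinf$; then $H(\myalpinf,n,X)\ge\mymean{\max_{1\le i<j\le n}\lcp(X(s_i),X(s_j))}$.

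Write $X(s_i)=U_iV_i$, where $U_i$ is the output produced while reading the block $a^m$ and $V_i$ the remainder; the pairs $(U_i,V_i)$ are independent over $i$ and $U_i$ is distributed as $X(a^m)$. Its length $|U_i|$ is a sum of $m$ i.i.d.\ per-symbol output lengths, each at least $1$ with probability $1-p_a>0$, so a Chernoff bound gives $\myprob{|U_i|<k}\le e^{-\Omega(m)}$ whenever $m\ge 2k/(1-p_a)$, and on $\{|U_i|\ge k\}$ the first $k$ output symbols of $X(s_i)$ agree with those of $X(aaa\ldots)$. Hence $\bigl|\myprob{X(s_i)[1\ldots k]=y}-\prod_{l=1}^k Q_a[y_l]\bigr|\le e^{-\Omega(m)}$ for every $y\in\myalple k$. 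Now fix $k=k(n)=\lfloor 2(1-\varepsilon)\log_{1/P}n\rfloor$, so $P^k\ge n^{-2(1-\varepsilon)}$, and choose $m=m(n)$, say $m=\lceil(\log n)^2\rceil$, so that $\delta_n:=e^{-\Omega(m)}$ is super-polynomially small and thus negligible against $P^k$; then, up to this error, $X(s_1)[1\ldots k],\dots,X(s_n)[1\ldots k]$ are $n$ independent length-$k$ samples of the memoryless-$Q_a$ source. One may now finish by quoting~\eqref{eq:Height:Mixing}, since the height of a random trie over $n$ such strings is w.h.p.\ $(1-o(1))(\ln n)/h=(1-o(1))\cdot 2\log_{1/P}n$. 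Alternatively, arguing directly: let $N$ be the number of pairs $i<j$ with $X(s_i)[1\ldots k]=X(s_j)[1\ldots k]$; then $\mymean N\ge\tfrac12\binom n2 P^k=\Omega(n^{2\varepsilon})\to\infty$, while in $\mymean{N^2}$ the diagonal contributes $\mymean N=o(\mymean N^2)$, disjoint index-pairs contribute at most $\mymean N^2$ by independence, and the $O(n^3)$ index-pairs sharing a single index contribute at most $O(n^3)\sum_{y\in\myalple k}\prod_l Q_a[y_l]^3\le O(n^3)\bigl(\|Q_a\|_\infty P\bigr)^k=O(n^{4-\varepsilon}P^{2k})=o(\mymean N^2)$, the middle bound using $\|Q_a\|_\infty\le\|Q_a\|_2=\sqrt P$; hence $\mymean{N^2}=(1+o(1))\mymean N^2$ and Paley--Zygmund gives $\myprob{N>0}\to1$.

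In either case $\myprob{\max_{i<j}\lcp(X(s_i),X(s_j))\ge k}\to1$, so $\mymean{\max_{i<j}\lcp(X(s_i),X(s_j))}\ge k(1-o(1))$; since $k\ge 2(1-\varepsilon)\log_{1/P}n-1$, applying the bound with $\varepsilon/2$ in place of $\varepsilon$ (which absorbs both the floor and the $k\cdot o(1)$ term, using $\varepsilon\log_{1/P}n\ge1$ for $n$ large) yields $H(\myalpinf,n,X)\ge 2(1-\varepsilon)\log_{1/P}n-o(1)$. The main obstacle is the reduction in the second step: the correspondence with a memoryless source concerns the \emph{infinite} input $aaa\ldots$, whereas we must control the length-$k$ prefix of the output on the \emph{finite} block $a^m$, which is disturbed both by deletions (so that $|U_i|$ is random and possibly below $k$) and by the output of the suffix $\tau_i$; letting $m=m(n)$ grow fast enough makes these disturbances super-polynomially unlikely while $P^k$ stays only polynomially small. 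In the direct route the other delicate point is the variance bound for the index-sharing triples, where the sharp constant $2$ rests precisely on the elementary inequality $\|Q_a\|_\infty\le\|Q_a\|_2=\sqrt P$; any cruder bound would yield only a smaller constant.
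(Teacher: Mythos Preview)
Your argument is correct and follows essentially the same line as the paper's proof: choose the symbol $a$ maximizing $P$, feed the adversary strings with a long initial block $a^m$ so that the first $k$ output symbols coincide (with overwhelming probability) with a memoryless-$Q_a$ source, and then apply the second-moment method to the number of prefix collisions. The only differences are cosmetic: the paper takes $m=2n$ and cites the second-moment computation from Szpankowski, whereas you take $m=\lceil(\log n)^2\rceil$ and carry out the Paley--Zygmund step explicitly, including the neat bound $\sum_b Q_a[b]^3\le\|Q_a\|_\infty P\le P^{3/2}$ for the index-sharing triples.
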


\section[Main Result: Star-like Perturbation Functions]{Main Result:
Smoothed Trie Height under Star-like Perturbation Functions}
\label{sec:the-result}

\paragraph{A Dichotomous Result.}
In this section we present the main result of this work. 
Let $X$ be star-like and let $X=(\{s\},W,\mu_R,\mu_W,\sigma)$ be the representing PFA.
To ease the analysis we assume that perturbations start in the input state $s$ with 
probability one, i.e., that $\sigma(s)=1$ and for all $q\in W$, $\sigma(q)=0$ holds, 
and we say that such a perturbation function is in {\em canonical form}. 
The following dichotomous result for star-like perturbation functions 
over arbitrary input sets can be proven.

 \begin{theorem}%[Smoothed trie height for arbitrary input sets]
   \label{thm:star-like}
   Let $X$ be a star-like string perturbation function over a finite alphabet $\myalp$ in canonical form, 
   represented by the PFA $X=(\{s\},W,\mu_R,\mu_W,\sigma)$ such that for all $a\in\myalp$ it holds that $\mu_R(s,a,s)<1$.
   Then the following statements are equivalent.
   \begin{enumerate}
   \item[$(1)$] $(\forall a,b\in\myalp)~~    \mu_R(s,a,s)+\sum_{q\in W}\mu_R(s,a,q)\cdot(\mu_W(q,b,q)+\mu_W(q,b,s))<1$
%   \item[$(1)$] $(\forall a,b\in\myalp)~~    \mu_R(s,a,s)+\sum_{q\in W}\mu_R(s,a,q)\cdot(\mu_W(q,b,q)+\mu_W(q,b,s))<1$
   \item[$(2)$]  $H(\myalpinf,n,X)\in O(\log{n})$.
   \end{enumerate}
 \end{theorem}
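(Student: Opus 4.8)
The plan is to prove the two implications separately; essentially all of the work lies in $(1)\Rightarrow(2)$, for which I would use the multivariate rational generating functions (``weighted words'') already advertised in the introduction.

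\smallskip\noindent\emph{Reduction of $(1)\Rightarrow(2)$.} Write $\pi(a,b)$ for the left-hand side of $(1)$, so that $(1)$ reads $\pi(a,b)<1$ for all $a,b$. I claim it suffices to prove a \emph{uniform exponential coincidence bound}: there are $C\ge1$ and $\rho\in(0,1)$, depending only on $X$, with $\myprob{\lcp(X(s),X(t))\ge k}\le C\rho^{k}$ for every pair of infinite strings $s,t$ over $\myalp$ and every $k$ (note that distinctness of $s,t$ is not even required). Indeed, for $A\subseteq\myalpinf$ with $\|A\|=n$ the union bound gives $\myprob{\max_{s,t\in A}\lcp(X(s),X(t))\ge k}\le\binom{n}{2}C\rho^{k}$, which is also at most $1$; feeding this into $\mymean{Z}=\sum_{k\ge1}\myprob{Z\ge k}$ and splitting the sum at $k_0=\lceil2\log_{1/\rho}n\rceil$ yields $\mymean{\max_{s,t\in A}\lcp(X(s),X(t))}\le k_0+O(1)$, hence $H(\myalpinf,n,X)\le(2+o(1))\log_{1/\rho}n=O(\log n)$, which also provides the quantitative upper bound promised in the introduction.

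\smallskip\noindent\emph{The coincidence bound.} Here I would encode, in the weighted-words model, the pairs of computations of $X$ on $s$ and on $t$ that emit a common output prefix: one variable marks the length of the matched prefix, and auxiliary bookkeeping records the current output states of the two copies together with the \emph{overhang} by which one output has advanced past the other. The overhang is a priori unbounded, so the generating function is an algebraic combination whose ingredients are themselves rational generating functions summing over overhang length; it is exactly star-likeness --- a single input state and output states that never communicate --- that keeps these ingredients simple (they are in essence the geometric block-length generating functions of the individual output states). The heart of the matter, and the step I expect to be the main obstacle, is to show that the resulting rational function has dominant singularity \emph{strictly larger than $1$, uniformly in the adversarial choice of $s$ and $t$}; this is precisely the sought $\rho<1$. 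Condition $(1)$ enters through the identity $1-\pi(a,b)=\sum_{b'\ne b}\sum_{q}\mu_R(s,a,q)\bigl(\mu_W(q,b',q)+\mu_W(q,b',s)\bigr)$: under $(1)$ this is positive for all $a,b$, and since $\myalp$ is finite, $\delta:=\min_{a,b}\bigl(1-\pi(a,b)\bigr)>0$. The renewal estimate that the generating function carries out, and which one can also verify directly, is the following. Whenever one copy is \emph{between blocks} (about to read its next unread input symbol), the next matched output symbol equals any prescribed value with conditional probability at most $1-\delta$, and --- crucially --- this holds uniformly over which tail of the input is being read, so that ``how many input symbols have been consumed'' need not be tracked. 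Whenever both copies are \emph{inside a block}, condition $(3)$ makes each copy leave its current output state with probability bounded below by a positive constant per emitted symbol, so such a run has geometrically bounded length and --- using the canonical form for the very first matched symbol, and the later of the two block entries otherwise --- is always entered by a step that has already paid a $(1-\delta)$ factor. Hence, except with exponentially small probability, $\Omega(k)$ such factors accrue among the first $k$ matched symbols, which gives $\myprob{\lcp(X(s),X(t))\ge k}\le C\rho^{k}$. (Conversely, it is exactly the possibility $\pi(a,b)=1$ --- realizing a monochromatic output --- that would drive the singularity down to $1$.)

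\smallskip\noindent\emph{$(2)\Rightarrow(1)$, by contraposition.} Suppose $\pi(a,b)=1$ for some $a,b$. Since every summand $\mu_W(q,b,q)+\mu_W(q,b,s)$ is at most $\sum_{b'}\bigl(\mu_W(q,b',q)+\mu_W(q,b',s)\bigr)=1$ and $\mu_R(s,a,s)<1$ by hypothesis, every output state $q$ with $\mu_R(s,a,q)>0$ must have $\mu_W(q,b,q)+\mu_W(q,b,s)=1$, i.e.\ it writes only the symbol $b$; so reading a single $a$ produces either $\epsilon$ or $b^{m}$ with $m\ge1$, and since $\mu_R(s,a,s)<1$ a Borel--Cantelli argument gives $X(a^\infty)=b^\infty$ almost surely --- a deterministic image. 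I now exhibit input sets witnessing unbounded (hence non-logarithmic) smoothed height. If some symbol $c\ne a$ is deletable, $\mu_R(s,c,s)>0$, then $c\,a^\infty$ is mapped to $b^\infty$ with probability at least $\mu_R(s,c,s)$, so $\mymean{\lcp\bigl(X(a^\infty),X(c\,a^\infty)\bigr)}=\infty$. Otherwise every symbol $x\ne a$ emits at least one symbol when read, and for the two distinct inputs $a^{N}x\,a^\infty$ and $a^{N}x\,x\,a^\infty$ the common output prefix is at least the minimum of the two independent output lengths produced by the leading blocks of $a$'s, whose expectation tends to infinity with $N$ by a Chernoff bound; thus $H(\myalpinf,2,X)=\infty$. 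In every case $H(\myalpinf,n,X)\notin O(\log n)$, i.e.\ $\neg(2)$.
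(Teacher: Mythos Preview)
Your reduction of $(1)\Rightarrow(2)$ to a uniform exponential coincidence bound, followed by a union bound and a tail sum split at $k_0\approx 2\log_{1/\rho}n$, is exactly the paper's Lemma~\ref{lem:tailbound}. One simplification you miss, however, is carried out inside the proof of that lemma: by Cauchy--Schwarz,
\[
\sum_{\alpha}\myprob{\alpha\sqsubseteq X(s)}\,\myprob{\alpha\sqsubseteq X(t)}
\le\max_{u}\sum_{\alpha}\myprob{\alpha\sqsubseteq X(u)}^{2},
\]
so it suffices to bound the coincidence probability of two independent perturbations of the \emph{same} string. This removes any need to track two different input tapes.

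This matters, because for the coincidence bound itself your route and the paper's diverge substantially. You propose to encode the \emph{pair} of computations with an unbounded ``overhang'' and to locate the dominant singularity of the resulting function, backed by a renewal heuristic: block boundaries for either copy each cost a factor $1-\delta$, and $\Omega(k)$ of them occur. The paper instead works with a single computation. After Cauchy--Schwarz it splits $\sum_{\alpha}\bigl(\sum_{l}\mu_X(\alpha,t[1\ldots l])\bigr)^2$ according to whether $l\le\lceil dm\rceil$ or $l>\lceil dm\rceil$. For the first part the key device is Lemma~\ref{lem:corelemma}, proved by induction on input length via Jensen's inequality, which yields
\[
\sum_{\alpha\in\myalple{m}}\mu_X(\alpha,t[k\ldots l])^{2}\ \le\ \delta^{\,l-k}\sum_{\alpha\in\myalple{m}}\mu_X(\alpha,t[k\ldots l]),
\]
with $\delta=\max_{a,b}\pi(a,b)<1$. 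This converts the $\ell^2$ mass into an $\ell^1$ mass with an explicit $\delta^{l}$ weight, after which the weighted-words generating function encodes a \emph{single} computation --- no overhang variable --- and the pole analysis is routine. The tail $l>\lceil dm\rceil$ is handled separately by a crude deletion count. Your pair-process approach may well be made to work, but the step ``$\Omega(k)$ such factors accrue except with exponentially small probability'' hides real dependency issues (the block-boundary count for one copy is correlated with the very coincidence event being bounded, and within-block match probabilities depend on which output states the two copies occupy), and uniformity of the singularity over the adversarial tails of $s,t$ must still be argued. The paper's Cauchy--Schwarz plus Jensen route sidesteps all of this.

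Your contraposition for $(2)\Rightarrow(1)$ is correct and in fact more careful than the paper's, which simply notes $X(a^\infty)=b^\infty$ almost surely and declares the height unbounded without explicitly producing distinct inputs. Your case split is a little heavier than necessary: comparing $a^\infty$ with $a^{N}c\,a^\infty$ for any $c\neq a$ already gives $\lcp\ge L$, where $L$ is the output length of the leading block of $a$'s, and $\mymean{L}\to\infty$ as $N\to\infty$.
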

 
 Before we discuss the meaning of the above theorem, we note that it directly 
 yields the following corollary concerning the smoothed trie height under convex 
 combinations of edit perturbations of arbitrary binary strings.
 
 \begin{corollary}%[Smoothed trie height under convex combinations of edit perturbations]
  Let $p_S,p_I,q_I,p_D\in (0,1)$ and let $v=(v_S,v_I,v_D)\in[0,1]^3$ be such 
  that $v_S+v_I+v_D=1$ and let $Y$ be string perturbation function which is 
  computed by the PFA depicted in Figure~\ref{fig:pfa-convex}. 
  Then, $H(\{0,1\}^\omega,n,Y)\in O(\log{n})$ if and only if $v_D<1$.
  In other words, the smoothed trie height is logarithmic if and only if 
  the convex combination of edit perturbations does not collapse to deletions.
\end{corollary}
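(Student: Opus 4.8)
The plan is to derive the corollary as a direct instance of the equivalence in Theorem~\ref{thm:star-like}, applied to the PFA $Y$ of Figure~\ref{fig:pfa-convex}. First I would read off from Figure~\ref{fig:pfa-convex} the structural facts that are needed: $Y$ has the single input state $s$, the output states $S_0,S_1,I_0,I_1,D_0,D_1$ never communicate with one another, and each of them carries an outgoing transition back to $s$ of positive probability, so $Y$ is star-like; since $\sigma(s)=1$, it is in canonical form. Moreover the only transitions of $\mu_R$ returning to $s$ correspond to a deletion actually taking place on the symbol just read, which happens with probability $v_Dp_D\le p_D<1$ for either input symbol, so the hypothesis $\mu_R(s,a,s)<1$ of Theorem~\ref{thm:star-like} holds for every admissible parameter vector. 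Hence $H(\{0,1\}^\omega,n,Y)\in O(\log n)$ is \emph{equivalent} to condition~$(1)$ of Theorem~\ref{thm:star-like}, and the task reduces to checking that $(1)$ holds exactly when $v_D<1$.

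To analyze $(1)$ I would first rewrite it: since $\mu_R(s,a,s)+\sum_{q\in W}\mu_R(s,a,q)=1$, condition $(1)$ for a fixed pair $a,b$ is equivalent to the existence of an output state $q$ with $\mu_R(s,a,q)>0$ and $\mu_W(q,b,q)+\mu_W(q,b,s)<1$ --- informally, an output state reachable on reading $a$ from which the next emitted symbol is not forced to be $b$. For the implication $v_D<1\Rightarrow(1)$, note that $v_D<1$ forces $v_S>0$ or $v_I>0$. If $v_S>0$, I would take $q=S_a$: it is entered from $s$ on reading $a$ with probability $v_S>0$, and from $S_a$ the next output symbol is $a$ with probability $1-p_S$ and the complementary symbol with probability $p_S$, both $<1$ because $p_S\in(0,1)$. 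If $v_S=0$ (so $v_I=1-v_D>0$), I would take $q=I_a$: it is entered on reading $a$ with probability $v_I>0$, and from $I_a$ a freshly inserted $0$, a freshly inserted $1$, and the buffered symbol $a$ are emitted with probabilities $p_Iq_I$, $p_I(1-q_I)$ and $1-p_I$, respectively; since $p_I,q_I\in(0,1)$, each quantity $\mu_W(I_a,b,I_a)+\mu_W(I_a,b,s)$ is strictly below $1$. In both cases $(1)$ holds for all $a,b$, and Theorem~\ref{thm:star-like} gives $H(\{0,1\}^\omega,n,Y)\in O(\log n)$.

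For the converse $v_D=1\Rightarrow\neg(1)$, I would observe that then $v_S=v_I=0$ and $Y$ collapses to $\DEL_{p_D}$: on reading $a$ the automaton either loops at $s$ with probability $p_D$ or moves to $D_a$ with probability $1-p_D$, and from $D_a$ the symbol $a$ is written with probability one. Taking $b=a$ gives $\mu_R(s,a,s)+\sum_{q}\mu_R(s,a,q)\bigl(\mu_W(q,a,q)+\mu_W(q,a,s)\bigr)=p_D+(1-p_D)\cdot1=1$, so $(1)$ fails, whence $H(\{0,1\}^\omega,n,Y)\notin O(\log n)$; the same is visible directly, and quantitatively via Proposition~\ref{prop:lower-bound}, since $\DEL_{p_D}$ maps $1^\omega$ to $1^\omega$ deterministically.

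I do not expect a genuine obstacle here: once the transition probabilities have been read off Figure~\ref{fig:pfa-convex}, the argument is a finite verification. The one point that will need a little care is the insertion branch, where the symbol that is eventually produced is the original one; there one must check that already the \emph{first} output step after entering $I_a$ is non-deterministic, which is exactly where the non-degeneracy hypotheses $p_I,q_I\in(0,1)$ come in.
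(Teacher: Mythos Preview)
Your proposal is correct and matches the paper's approach: the paper does not spell out a proof but simply states that Theorem~\ref{thm:star-like} ``directly yields'' the corollary, and your argument is precisely the intended direct verification---checking that $Y$ is star-like in canonical form with $\mu_R(s,a,s)<1$, and then showing that condition~$(1)$ of Theorem~\ref{thm:star-like} holds for all $a,b\in\{0,1\}$ if and only if $v_D<1$. Your rewriting of~$(1)$ as the existence of a reachable output state with non-deterministic emission, together with the case split on $v_S>0$ versus $v_I>0$, is exactly the finite check the paper leaves to the reader.
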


 In general, statement~$(1)$ of the theorem gives a set of necessary and 
sufficient conditions such that the smoothed trie height $H(\myalpinf,n,X)$ 
is logarithmic in $n$ if those conditions are satisfied and unbounded, 
otherwise. 
These conditions are especially appealing, because they can be verified  
easily and efficiently by looking at the transition probability function 
of the representing PFA. 
For general star-like perturbation functions the verification can be done 
algorithmically in time $O(\|\myalp\|^2\cdot\|W\|)$. 
%  For read-semi-deterministic perturbation functions, the verification can be done even faster, i.e., in time
%  $O(\|\myalp\|\cdot\|W\|)$, which holds particularly, because for such perturbation function one can associate with each input
%  symbol exactly one output state. This directly yields the following corollary.
%  \begin{corollary}
%    Let $X$ be a semi-read-deterministic string perturbation function over a finite alphabet $\myalp$ in canonical form, 
%    represented by the PFA $X=(\{s\},W,\mu_R,\mu_W,\sigma)$ such that for all $a\in\myalp$ it holds that $p_a<1$.
%    Then the following statements are equivalent.
%    \begin{enumerate}
%    \item[$(1)$]  $ (\forall a,b\in\myalp)~~p_a+(1-p_a)(\mu_W(q_a,b,q_a)+\mu_W(q_a,b,s))<1$.
%    \item[$(2)$]  $H(\myalpinf,n,X)\in O(\log{n})$.
%    \end{enumerate}
%  \end{corollary}
 Note that the {\em  additional constraint} regarding the deletion probabilities, 
 i.e., that for all $a\in\myalp$ it holds that $\mu_R(s,a,s)<1$ cannot be 
 dropped: let $a\in\myalp$ be such that $\mu_R(s,a,s)=1$ and let $t=aaa\ldots$. 
 Then $X(a)=\epsilon$ with probability one and it becomes obsolete to speak of 
 smoothed trie height in this particular case.
  
\paragraph{Quantitative Analyses.}
  When performing a smoothed analysis it is usual to quantify the influence of 
  the parameters of the perturbation function on the smoothed complexity of a 
  problem. 
  We can give the following quantitative result on the smoothed trie height.  
  Let $X(\{s\},W,\mu_R,\mu_W,\sigma)$ be a star-like PFA over the finite alphabet
  $\myalp$ in canonical form.
  For $q\in W$ the {\em return probability from state $q$} is defined as
  \[
  \eta_q =_\df \sum_{a\in\myalp} \mu_W(q,a,s)\mbox{.}
  \]
  Also, for the sake of exposition, define for $a\in\myalp$ and $q\in W$
  \[
  \myread{a}{q} =_\df \mu_R(s,a,q)~~~~~\text{ and}~~~~~
  \mydel{a} =_\df \mu_R(s,a,s)\mbox{.}
  \]
  \begin{theorem}
    \label{thm:tight-for-star-like} 
    Let $X$ be a star-like string perturbation function over a finite alphabet $\myalp$ in canonical form, represented
    by the PFA $X=(\{s\},W,\mu_R,\mu_W,\sigma)$ such that for all $a\in\myalp$ it holds that $p_a<1$ and such that \[
    (\forall a,b\in\myalp)~~ \mu_R(s,a,s)+\sum_{q\in W}\mu_R(s,a,q)\cdot(\mu_W(q,b,q)+\mu_W(q,b,s))<1\mbox{,}\] where we
    denote the maximum term by $\delta$.  Let $\gamma = 1/\tilde{z}$, where $\tilde{z}$ is the pole of minimum modulus
    of the function
   \[
     \tilde{\mathcal{Z}}_X(z)
     =\prod_{i=1}^{r}\left(
         1-
         \delta\cdot\mydel{a_i} - 
         \sum\limits_{j=1}^{v} \frac{\delta\cdot\myread{a_i}{q_j}\cdot\eta_{q_j}\cdot z}{1-(1-\eta_{q_j})\cdot z}
       \right)^{-1}   
   \]
   Then, for $n$ sufficiently large and for all $\varepsilon>0$ it holds that 
\[
   H(\myalpinf ,n,X)\le 2\cdot\lceil(1+\varepsilon)\log_{1/\gamma}{n}\rceil + o(1).
\]
\end{theorem}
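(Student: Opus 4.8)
The plan is to union-bound over pairs of stored strings, reducing the theorem to an exponential tail estimate for the length of the common prefix of two independently perturbed worst-case strings, which I in turn control through the rational generating function $\tilde{\mathcal{Z}}_X(z)$. Concretely, for $u,w\in\myalpinf$ put $c_k(u,w)=\myprob{\lcp(X(u),X(w))\ge k}$, the two evaluations of $X$ using independent randomness. For any $A\subseteq\myalpinf$ with $\|A\|=n$, the quantity $\max_{s,t\in A}\lcp(X(s),X(t))$ is a nonnegative integer-valued random variable, so for every threshold $k_0$ one has $\mymean{\max_{s,t\in A}\lcp(X(s),X(t))}\le k_0+\sum_{k>k_0}\myprob{\max_{s,t\in A}\lcp(X(s),X(t))\ge k}\le k_0+\binom{n}{2}\sum_{k>k_0}\sup_{u,w}c_k(u,w)$. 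Hence it suffices to prove $\sup_{u,w}c_k(u,w)=O\big((1+k)^{O(1)}(\gamma')^k\big)$ for every $\gamma'>\gamma$; choosing $k_0=2\lceil(1+\varepsilon)\log_{1/\gamma}{n}\rceil$ and $\gamma'$ close enough to $\gamma$ makes $\binom{n}{2}(\gamma')^{k_0}\to 0$, so the tail is $o(1)$ and $H(\myalpinf,n,X)\le 2\lceil(1+\varepsilon)\log_{1/\gamma}{n}\rceil+o(1)$ for $n$ large.

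To reduce to a single input string, I use that by independence $c_k(u,w)=\sum_{\alpha\in\myalp^k}\myprob{\alpha\sqsubseteq X(u)}\,\myprob{\alpha\sqsubseteq X(w)}$, so by Cauchy--Schwarz $c_k(u,w)\le\max\{D_k(u),D_k(w)\}$, where $D_k(u):=\sum_{\alpha\in\myalp^k}\myprob{\alpha\sqsubseteq X(u)}^2=\myprob{\lcp(X'(u),X''(u))\ge k}$ for two independent copies $X',X''$ of $X$. It therefore suffices to bound $\sup_{u\in\myalpinf}D_k(u)$. It is essential here to keep this two-run, \emph{squared} quantity rather than a single prefix probability: the square is exactly what later lets each matched output symbol be split into one factor bounded by $\delta$ and one factor that yields a clean geometric term.

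The crux is a generating-function estimate. Fix $u$; I would build, via the weighted-words model, a rational majorant of $\sum_k D_k(u)z^k$ that does not depend on $u$ and whose smallest singularity is $\tilde z=1/\gamma$. A computation of the star-like PFA on $u$ splits canonically into maximal blocks between consecutive visits to the unique input state $s$: a block is opened by reading one input symbol $a_i$ (possibly after a run of deletions, each a loop at $s$) that moves into some output state $q_j$, after which symbols are written until the walk returns to $s$; by Definition~\ref{def:star-like-pfa}(2)--(3) the block length is geometric with parameter $\eta_{q_j}$, contributing the factor $\eta_{q_j}z/(1-(1-\eta_{q_j})z)$, where $z$ marks output symbols. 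For $D_k(u)$ one considers two independent runs producing the same length-$k$ output, sums over that word, and bounds every such coincidence $\sum_b x_b y_b$ by $(\max_b x_b)(\sum_b y_b)$; this keeps one geometric-length factor of each block intact while the matching probability of the block's first symbol, together with the deletion probability $\mydel{a_i}$ and the state-entry probability $\myread{a_i}{q_j}$, is replaced by the uniform upper bound $\delta$ of statement~(1) of Theorem~\ref{thm:star-like}. Since the per-block weights are scalars and hence commute, the blocks may be re-grouped according to which alphabet symbol opened them; summing the geometric series of ``deletions and blocks opened by $a_i$'' produces the factor $\big(1-\delta\,\mydel{a_i}-\sum_{j=1}^{v}\delta\,\myread{a_i}{q_j}\eta_{q_j}z/(1-(1-\eta_{q_j})z)\big)^{-1}$, and the $r$ such factors multiply to $\tilde{\mathcal{Z}}_X(z)$; in particular $\sum_k D_k(u)z^k$ is, up to a multiplicative constant depending only on $X$, dominated by $\tilde{\mathcal{Z}}_X(z)$ on $[0,\tilde z)$, uniformly in $u$.

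Finally, since the kernel $\delta\,\mydel{a_i}+\sum_j\delta\,\myread{a_i}{q_j}\eta_{q_j}z/(1-(1-\eta_{q_j})z)$ of each factor is increasing on $[0,\min_j 1/(1-\eta_{q_j}))$ and equals $\delta(\mydel{a_i}+\sum_j\myread{a_i}{q_j})=\delta<1$ at $z=1$ (with $\eta_{q_j}>0$ from Definition~\ref{def:star-like-pfa}(3)), while the geometric poles $1/(1-\eta_{q_j})$ also exceed $1$, the rational function $\tilde{\mathcal{Z}}_X$ has $\tilde z>1$, i.e.\ $\gamma<1$. Standard transfer of rational singularities gives $[z^k]\tilde{\mathcal{Z}}_X(z)=O\big((1+k)^{O(1)}\gamma^k\big)$, whence $\sup_u D_k(u)=O\big((1+k)^{O(1)}(\gamma')^k\big)$ for every $\gamma'>\gamma$, and the first two steps then complete the proof. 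The main obstacle is the third step: making the block decomposition of two \emph{out-of-phase} runs rigorous and checking that the ``bound one coincidence factor by $\delta$, keep the other, regroup by opening symbol'' manipulation indeed yields a series dominated, up to a constant, by $\tilde{\mathcal{Z}}_X(z)$ for every $u$; by contrast the union bound and the singularity analysis are routine.
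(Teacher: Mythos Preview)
Your outline matches the paper's on the framing steps: the union bound over pairs, the Cauchy--Schwarz reduction to a single input string (the paper's Lemma~\ref{lem:tailbound}), and the final rational-singularity transfer (Theorem~\ref{thm:rational-expansion}) are all exactly as in the paper. Your weighted-words reading of $\tilde{\mathcal{Z}}_X$ is also correct: $z$ marks output symbols, each read of an input symbol carries a factor $\delta$ (whether it is a deletion or a move to an output state), and regrouping the commuting block-weights by input letter yields the product of the $r$ geometric factors.

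The gap is precisely in how the two out-of-phase runs collapse to a \emph{single}-run weighted word. Your proposed bound $\sum_b x_b y_b\le(\max_b x_b)(\sum_b y_b)$ cannot extract a factor $\delta$ per \emph{output} symbol: once in an output state $q$, a fixed symbol $b$ can be written with probability arbitrarily close to $1$. The quantity $\delta$ is only available per \emph{input} symbol, via the inequality $\mydel{a}+\myprob{b\sqsubseteq X(a)}\le\delta$. The paper therefore first aligns the two runs on input length: Cauchy--Schwarz in the form $\sum_\alpha(\sum_{l\le L}\mu_l)^2\le L\sum_l\sum_\alpha\mu_l^2$ (the factor $L$ is the harmless polynomial), followed by an induction on $l$ using Jensen's inequality (Lemma~\ref{lem:corelemma}) that gives $\sum_\alpha\mu_X(\alpha,t[1..l])^2\le\delta^{l-1}\sum_\alpha\mu_X(\alpha,t[1..l])$; only now is the right-hand side a single-run sum that embeds into $\tilde{\mathcal{Z}}_X$. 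Because Cauchy over $l$ needs $l$ bounded, this forces the input-length split $D_k\le\Phi+\Psi$ at a threshold $L=\lceil dk\rceil$: the long-input part $\Psi$ is bounded separately by observing that at least $l-k$ deletions must have occurred (Claim~\ref{claim:boundonpart2}), and $d$ is then chosen large enough that $\Psi$ decays faster than $\gamma^k$. You do not mention this split, and without it the infinite sum over input lengths in your sketch is not controlled; this is exactly the ``out-of-phase'' obstacle you flag, and the split plus Jensen induction is how the paper resolves it.
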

 Note that for the case of read-semi-deterministic perturbation functions we 
 also get a lower bounds from Proposition~\ref{prop:lower-bound}.
 Unfortunately, this lower bound does not match our quantitative upper bound.
 Non-matching upper and lower bounds can also be found in the dichotomous result 
 of Devroye~\cite{Devroye-1984} .

\section{Conclusions}
There are two main open problems posed by this paper: the first concerns the 
extension of our perturbation functions to more general string perturbation 
functions which can be represented by PFAs. 
Clearly, general PFAs which can model real-world string sources such as sensors 
more appropriately are one possible extension. 
We are particularly interested in \emph{probabilistic push-down automata} 
because they provide a way to model \emph{random transpositions}, which occur
quite frequently in non-random data such as DNA sequences. 
The second open problem concerns the smoothed analysis of other parameters and 
related data structures under our model. 
Particularly, we actually try to analyze the smoothed trie height of suffix 
trees. 
There, it is believed that the mixing condition is a necessary ingredient to 
prove logarithmic smoothed trie height (cf.~\cite{Szpankowski-1993a}). 
Since our model does not satisfy the mixing condition, a positive result would 
give new insights in the practical performance of such data structures.

 %\section{A detailed proof of Lemma~\ref{lem:core-revised}}

{\small
%\bibliographystyle{plain}
%\bibliography{SmoothedAnalysis}

}

\newpage

\appendix
\section{Proof of Proposition~\ref{prop:lower-bound}}
\begin{proof}
%[Proof of Proposition~\ref{prop:lower-bound}]
  Let $X=(\{s\},W,\mu_R,\mu_W,\sigma)$ be a read-semi-deterministic PFA over a finite alphabet $\myalp=\{a_1,\ldots,a_r\}$ in canonical
  form. Let $A$ be a set of $n$ infinite strings each of which starts with $2n$ repititions the symbol
  $a\in\myalp$ such that
  \[
  P=\max_{a\in\myalp}\left(\sum_{b\in\myalp} (\mu_W(q_a,b,q_a)+\mu_W(q_a,b,s))^2\right)
  \]
  is maximal. It holds that
  \[
  H(\myalpinf ,n,X)\ge 
  \mymean{\max_{s,t\in A} \lcp(X(s),X(t))}\mbox{.}
  \]
  Let $k = 2\lceil\log_{1/P}{n}\rceil$. 
  Now, for each string $s\in A$ the probability that \[|X(s[1\ldots 2 n])|\ge k,\]
  i.e., the computations of $X$ on the prefix of $s$ of input length $2 n$ has length at least $k$ satisfies
  \begin{eqnarray*}
    \myprob{|X(s[1\ldots 2 n])|\ge k}&=&1-    \myprob{|X(s[1\ldots 2 n])|< k}\\
    &\ge&1-\myprob{\text{at least $2n-k$ symbols are deleted}}\\
    &=&1-\sum_{i=0}^{k-1}\binom{2 n}{i}\cdot (p_a)^{2n-i}\\
    &\ge& 1-k\cdot \binom{2 n}{k}\cdot (p_a)^{2 n-k} \\
    &=&1-o((p_a)^n)\mbox{.}
  \end{eqnarray*}
  Now, with probability $1-o((p_a)^n)$ the prefix of length $k$ of output of the computation of $X$ on $s$
  has the same distribution as the prefix of an string that is written by a Memory-less random source with parameter
  vector $p\in(0,1)^r$, where for $i\in\{1,\ldots, r\}$, $p_i=\mu_W(q_a,a_i,q_a)+\mu_W(q_a,a_i,s)$.
  For such a source and two random strings $s',s''$ it holds for every $k\in\Nat_+$ that
  \[
  \myprob{\lcp(s',s'')\ge k} = P^k\mbox{.}
  \]
  Let $A=\{s_1,\ldots,s_n\}$ and $k=2(1-\varepsilon)\log_{1/P}{n}$ for $\varepsilon>0$ and  for $i,j\in\{1,\ldots,r\}$ let $C_{ij}=\lcp(s_i,s_j)$.
  From the preceeding,
  \[
  \myprob{C_{ij}\ge k } = P^k\cdot (1-o((p_a)^n)^2\mbox{.}
  \]
  This holds particularly, because $k$ is exponentially smaller than $n$, i.e., $k = 2\lceil\log_{1/P}{n}\rceil$.
  Let $H$ be the height of a trie which is build over the set $A$.
  Using the Second Moment Method, the following claim can be shown.
  \begin{claim}{{\em [see Section~$4.2.3$ in~\cite{Szpankowski-2001}] }}
    Under the above conditions, for any $\varepsilon>0$ it holds that
    \[
    \myprob{H> 2(1-\varepsilon)\log_{1/P}{n} } =1 - O(1/n^\varepsilon)\mbox{.}    
    \]
  \end{claim}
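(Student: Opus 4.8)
The plan is to prove the claim by the \emph{second moment method} (in the style of Section~4.2.3 of~\cite{Szpankowski-2001}), counting the pairs of strings in $A$ that share a long common prefix. Fix $\varepsilon>0$ and let $k$ be the least integer with $k>2(1-\varepsilon)\log_{1/P}n$, so that $2(1-\varepsilon)\log_{1/P}n<k\le 2(1-\varepsilon)\log_{1/P}n+1$; since $H=\max_{i<j}C_{ij}$, the event $\{H>2(1-\varepsilon)\log_{1/P}n\}$ is exactly the event that $C_{ij}\ge k$ for some pair $\{i,j\}$. Introduce the counting variable $W=\sum_{1\le i<j\le n}\mathbf 1[C_{ij}\ge k]$, so that $\myprob{H>2(1-\varepsilon)\log_{1/P}n}=\myprob{W\ge 1}$, and recall Chebyshev's inequality $\myprob{W=0}\le \var(W)/(\mymean W)^2$. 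It therefore suffices to show $\mymean W=\Theta(n^{2\varepsilon})$ and $\var(W)=O(n^{3\varepsilon})$.

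First I would compute the expectation. As observed just before the claim, on the good event (of per-string probability $1-o(p_a^{\,n})$) the relevant length-$k$ prefixes of the perturbed strings are distributed exactly like those of independent memory-less sources with parameter vector $p$, so $\myprob{C_{ij}\ge k}=P^k\,(1-o(p_a^{\,n}))^2=P^k(1+o(1))$, the correction being absorbed because $k=O(\log n)$ is exponentially smaller than $n$. Since $P^{2(1-\varepsilon)\log_{1/P}n}=n^{-2(1-\varepsilon)}$ and $0<P<1$, the choice of $k$ gives $P^k=\Theta(n^{-2+2\varepsilon})$, whence $\mymean W=\binom n2 P^k(1+o(1))=\Theta(n^{2\varepsilon})\to\infty$.

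Next I would bound the second moment. Writing $A_e=\{C_e\ge k\}$ for a two-element set $e\subseteq\{1,\dots,n\}$, split $\mymean{W^2}=\sum_{e,f}\myprob{A_e\cap A_f}$ according to $|e\cap f|\in\{0,1,2\}$. The block $|e\cap f|=2$ equals $\mymean W=O(n^{2\varepsilon})$. For $|e\cap f|=0$ the events $A_e,A_f$ depend on disjoint (independent) strings, so $\myprob{A_e\cap A_f}=\myprob{A_e}\myprob{A_f}$ and this block is at most $(\mymean W)^2$, which cancels against $(\mymean W)^2$ in $\var(W)=\mymean{W^2}-(\mymean W)^2$. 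The delicate block is $|e\cap f|=1$, say $e=\{i,j\}$, $f=\{i,l\}$ with $j\ne l$: conditioning on the shared string $s_i$ and requiring $s_j$ and $s_l$ each to reproduce its length-$k$ prefix gives $\myprob{A_e\cap A_f}=\sum_{w\in\myalple k}\bigl(\prod_t p_{w[t]}\bigr)^3(1+o(1))=P_3^{\,k}(1+o(1))$, where $P_3=\sum_{b\in\myalp}p_b^3$. Monotonicity of $\ell^p$-norms yields $P_3=\|p\|_3^3\le\|p\|_2^3=P^{3/2}$, and since there are $O(n^3)$ such ordered pairs $(e,f)$ this block contributes $O(n^3 P^{3k/2})=O(n^3\,n^{-3(1-\varepsilon)})=O(n^{3\varepsilon})$. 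Combining, $\var(W)=O(n^{2\varepsilon})+O(n^{3\varepsilon})=O(n^{3\varepsilon})$, so
\[
\myprob{W=0}\le\frac{\var(W)}{(\mymean W)^2}=O\!\left(\frac{n^{3\varepsilon}}{n^{4\varepsilon}}\right)=O(1/n^{\varepsilon}),
\]
and taking the (exponentially small) union bound over the $n$ per-string exceptional events completes the proof.

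I expect the main obstacle to be the one-shared-index block of the variance: one has to recognise that such a pair of long-prefix events contributes the \emph{third-order} coincidence probability $P_3^{\,k}$ rather than $P^{2k}$, and then that the bound $P_3\le P^{3/2}$ makes the resulting $O(n^3P^{3k/2})=O(n^{3\varepsilon})$ of strictly smaller order than $(\mymean W)^2=\Theta(n^{4\varepsilon})$ for \emph{every} $\varepsilon>0$. The remaining ingredients — the $1+o(1)$ absorption of the short-output corrections coming from the setup of Proposition~\ref{prop:lower-bound}, the first-moment computation, and the independence used in the disjoint block — are routine.
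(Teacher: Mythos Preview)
Your proof is correct and follows exactly the approach the paper indicates: the paper does not actually prove the claim but merely cites Section~4.2.3 of~\cite{Szpankowski-2001} and states that it follows from the Second Moment Method, which is precisely the argument you have supplied in full. Your handling of the overlapping-pair block via the third-order coincidence probability and the bound $P_3\le P^{3/2}$ is the standard ingredient in that textbook argument, and the absorption of the $o(p_a^{\,n})$ corrections is harmless for the reasons you state.
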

  The above claim  implies that for every $\varepsilon>0$
  \[
  H(\myalpinf,n,X)
  \ge 
  \mymean{\max_{s,t\in A} \lcp(X(s),X(t))}    
  =  \mymean{H} \ge
  2(1-\varepsilon)\log_{1/P}{n}-o(1)\mbox{.}
  \]
  This proves the Theorem
\end{proof}

%%%%%%%%%%%%% EINFÜGUNG VON SVEN %%%%%%%%%%%%%%%%%%%%%%%%%%%%%%%%%%%%%%%

\section{Overview on the Proofs of Theorem~\ref{thm:star-like} and Theorem~\ref{thm:tight-for-star-like}}
\label{sec:smoothed-trie-height-star-like}

In this section we give an overview on the proofs of Theorem~\ref{thm:star-like} Theorem~\ref{thm:tight-for-star-like}.  
The details of the proofs are given in the subsequent sections.

First, we show that $H(S,n,X)$ grows at most as
$2\log_{1/\gamma}{n}$, if the {\em coincidence probability of length $m$} of two independent perturbations of the same
string $s\in S$, i.e., $\myprob{\lcp(X(s),X(s))\ge m}$, can be bounded from above by $\gamma^m$ for some $\gamma<1$.
The following lemma holds for arbitrary string perturbation functions.  

Its formal proof can be found in Section~\ref{sec:proof-lemma-tailbound}.

  \begin{lemma}%[Tail-bound for Smoothed Trie Height]
    \label{lem:tailbound}
    Let $\myalp$ be a finite alphabet and let $m_0\in\Nat$ and $\gamma\in\Real$ satisfying $0<\gamma<1$. Let
    $X:\myalpinf \rightarrow \myalpall $ be a perturbation function and let $S\in\myalpinf $ be a non-empty
    set of infinite strings. Let $n>\gamma^{-m_0/2}$. If there is a polynomial $\Pi(z)$ of fixed degree $d\in\Nat$, such that
    for all $s\in S$ and all $m\ge m_0$ it holds that the coincidence probability of two independent perturbations of $s$
    satisfies
    \[
    \myprob{\lcp(X(s),X(s))\ge m}\le \Pi(m)\cdot\gamma^m,
    \]
    then for all $\varepsilon>0$ it holds that
    $
    H(S,n,X) \le 2\cdot\lceil(1+\varepsilon) \log_{1/\gamma}{n}\rceil + o(1)$.
  \end{lemma}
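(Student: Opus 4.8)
The plan is to deduce the stated bound from the hypothesis by a routine layer-cake (tail-sum) estimate, the one non-bookkeeping step being a Cauchy--Schwarz argument that upgrades the assumed coincidence bound for two independent perturbations of the \emph{same} string into a bound for the longest common prefix of perturbations of two \emph{distinct} strings. First I would fix an arbitrary admissible set $A=\{s_1,\dots,s_n\}\subseteq S$ and put $M=\max_{i\neq j}\lcp(X(s_i),X(s_j))$; since $H(S,n,X)$ is the maximum of $\mymean{M}$ over all such $A$, it suffices to bound $\mymean{M}$ uniformly in $A$. The preliminary observation is that for \emph{any} $s,t\in S$ (distinct or not) and any fixed $m$, writing $\pi_u(w)=\myprob{|X(u)|\ge m\ \text{and}\ X(u)[1\ldots m]=w}$ for $w\in\myalple{m}$, independence of $X(s)$ and $X(t)$ gives $\myprob{\lcp(X(s),X(t))\ge m}=\sum_{w\in\myalple{m}}\pi_s(w)\,\pi_t(w)$, and Cauchy--Schwarz bounds this by $\big(\sum_{w}\pi_s(w)^2\big)^{1/2}\big(\sum_{w}\pi_t(w)^2\big)^{1/2}$. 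But $\sum_{w}\pi_u(w)^2$ is exactly the coincidence probability $\myprob{\lcp(X(u),X(u))\ge m}$ of two independent perturbations of $u$, which by hypothesis is at most $\Pi(m)\gamma^m$ for $m\ge m_0$. Hence $\myprob{\lcp(X(s),X(t))\ge m}\le\Pi(m)\gamma^m$ for all $m\ge m_0$ and all $s,t\in S$.

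Next I would union-bound over the at most $\binom{n}{2}\le n^2$ pairs in $A$ to obtain $\myprob{M\ge m}\le n^2\,\Pi(m)\,\gamma^m$ for $m\ge m_0$, and then apply $\mymean{M}=\sum_{m\ge 1}\myprob{M\ge m}$ (valid in $[0,\infty]$ for the nonnegative integer-valued $M$, and finite once the estimate below is in place). Set $m^\ast=2\lceil(1+\varepsilon)\log_{1/\gamma}{n}\rceil$; the assumption $n>\gamma^{-m_0/2}$ is precisely what forces $m^\ast>m_0$, so that the tail bound applies for every $m\ge m^\ast$. Splitting $\mymean{M}\le m^\ast+\sum_{m>m^\ast}\myprob{M\ge m}$ and using $\gamma^{m^\ast}\le\gamma^{2(1+\varepsilon)\log_{1/\gamma}{n}}=n^{-2(1+\varepsilon)}$ (since $\gamma<1$ and $\lceil\cdot\rceil\ge\cdot$), one gets $\myprob{M\ge m}\le n^{-2\varepsilon}\,\Pi(m)\,\gamma^{\,m-m^\ast}$ for $m\ge m^\ast$, hence $\sum_{m>m^\ast}\myprob{M\ge m}\le n^{-2\varepsilon}\sum_{k\ge 1}\Pi(m^\ast+k)\gamma^{k}$. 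As $\Pi$ has fixed degree $d$, the series is a convergent geometric series times a degree-$d$ polynomial in $m^\ast$, so it is $\bigo((m^\ast)^d)=\bigo((\log n)^d)$; therefore the tail is $\bigo(n^{-2\varepsilon}(\log n)^d)=o(1)$, giving $\mymean{M}\le m^\ast+o(1)$. Taking the maximum over all admissible $A$ yields $H(S,n,X)\le 2\lceil(1+\varepsilon)\log_{1/\gamma}{n}\rceil+o(1)$, as claimed.

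The argument is essentially standard once the coincidence bound is available, so the only genuinely delicate points are, first, the Cauchy--Schwarz reduction from the one-string hypothesis to the two-distinct-string estimate (which is what makes the lemma's single-string assumption strong enough to control the height), and second, the bookkeeping that absorbs the polynomial prefactor $\Pi$ and the $\binom{n}{2}$ pair count into the additive $2\lceil(1+\varepsilon)\log_{1/\gamma}{n}\rceil$ term while leaving only an $o(1)$ error, with the hypothesis $n>\gamma^{-m_0/2}$ entering exactly to guarantee that the cutoff $m^\ast$ lies in the regime $m\ge m_0$ where the tail bound is valid.
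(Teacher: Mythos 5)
Your proposal is correct and follows essentially the same route as the paper's proof: the layer-cake representation of the expectation, a union bound over the $O(n^2)$ pairs, and the Cauchy--Schwarz step reducing $\myprob{\lcp(X(s),X(t))\ge m}$ to the single-string coincidence probability, followed by the cutoff at $k=2\lceil(1+\varepsilon)\log_{1/\gamma}n\rceil$ and the geometric-times-polynomial tail estimate. The only cosmetic difference is that you fix $A$ first and take the maximum at the end, whereas the paper interchanges the maximum with the tail sum before bounding; the content is identical.
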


 \begin{proof}[Proof of Theorem~\ref{thm:star-like}]
 Let $X=(\{s\},W,\mu_R,\mu_W,\sigma)$ be a star-like PFA over the alphabet 
 $\myalp=\{a_1,\ldots,a_{r}\}$ in canonical form.
 In order to prove the equivalence of the two statements, we claim that 
 $(2)\Rightarrow(1)$ and that $(1)\Rightarrow(2)$. 
 Then, the theorem follows. 
 The first claim, i.e., that $(2)\Rightarrow(1)$, can easily be established by 
 contraposition.
   
 \begin{claim}
 In the setting of Theorem~\ref{thm:star-like}, it holds that $(2)\Rightarrow(1)$.
 \end{claim}
 
 \begin{proof}
 We prove the claim by contraposition: to this end assume that $(1)$ does not hold, 
  i.e., there are symbols $a,b\in\myalp$ such that 
   \[
   \mu_R(s,a,s)+\sum_{q\in W}\mu_R(s,a,q)\cdot(\mu_W(q,b,q)+\mu_W(q,b,s)=1\mbox{.}
   \]
  Thus $\myprob{b\sqsubseteq X(a)} = 1-\mu_R(s,a,s)$. 
  Let $t=aaa\ldots$ and let $s=bbb\ldots$. 
  Then $X$ maps $t$ to $s$ with probability one. 
  Therefore, $H(\myalpinf,n,X)$ is unbounded. 
  The claim follows.
 \end{proof}

 The second claim is less easy to prove: in order to show that $(1)\Rightarrow(2)$, 
 we prove that $(1)$ is a sufficient condition such that the tail-bound 
 (Lemma~\ref{lem:tailbound}) can be applied. 
 Particularly, we show that under the assumption that~$(1)$, for arbitrary 
 $t\in\myalpinf $ and $m\in\Nat_+$ sufficiently large there are suitable positive 
  constants $u,v,\gamma$ satisfying $0<\gamma<1$ such that
 \[
  \myprob{\lcp(X(t),X(t)\ge m} = \sum_{\alpha\in\myalple{m} } \myprob{\alpha\sqsubseteq X(t)}^2
   \le ( u\cdot m + v )\cdot\gamma^m\mbox{.}
 \]
 To this end, for $\alpha,t\in\myalpfin$ let $\toalpxy{X}{\alpha}{t}$
 %\begin{equation}
%   \label{eq:8}
 %  \toalpxy{X}{\alpha}{t}~~~~\text{, where}~\alpha,t\in\myalpfin \text{,}  
% \end{equation}
 be the probability that a computation of $X$ on $t$ that has input length 
 $|t|$ has the prefix $\alpha$.
 Then for $t\in\myalpinf $ and $\alpha\in\myalpfin $ we have the following 
 identity
 \begin{equation}
   \nonumber
   \myprob{\alpha\sqsubseteq X(t)} = \sum_{l=1}^{\infty}\toalpx{X}{\alpha}{t}{1}{l}
 \end{equation}
 and thus for $m\in\Nat_+$ we have
 \begin{equation}
   \nonumber   \label{eq:12a}
   \myprob{\lcp(X(t),X(t)\ge m} =  
		\sum_{\alpha\in\myalple{m} }\Biggl(\sum_{l=1}^{\infty}\toalpx{X}{\alpha}{t}{1}{l}\Biggr)^2\mbox{.}
 \end{equation}
 Next we split the right-hand side of the above equation into two suitable parts 
 by an application of Cauchy's Inequality: let $d\in\Real_+$ be a constant to be 
 defined in a moment. 
 Then
 \begin{eqnarray}\nonumber
 \lefteqn{  \sum_{\alpha\in\myalple{m} }\Biggl(\sum_{l=1}^{\infty}\toalpx{X}{\alpha}{t}{1}{l}\Biggr)^2}\\
   &=&
   \sum_{\alpha\in\myalple{m} }\Biggl(\sum_{l=1}^{\lceil d \cdot m \rceil}\toalpx{X}{\alpha}{t}{1}{l}
   +\sum_{l=\lceil d \cdot m \rceil+1}^{\infty}\toalpx{X}{\alpha}{t}{1}{l}\Biggr)^2 ~~~~~~~~~~~~~~~
    \\\label{eq:13}
   &\le&
   2\cdot\sum_{\alpha\in\myalple{m} }\Bigl(\sum_{l=1}^{\lceil d \cdot m \rceil}\toalpx{X}{\alpha}{t}{1}{l}\Bigr)^2
   +
   2\cdot\sum_{\alpha\in\myalple{m} }\Bigl(\sum_{l=\lceil d \cdot m \rceil+1}^{\infty}\toalpx{X}{\alpha}{t}{1}{l}\Bigr)^2\mbox{.}
 \end{eqnarray}
 Then, we prove an exponentially decreasing upper bound on each of the two addends 
 in~\eqref{eq:13} under the assumption that~$(1)$. 
 To this end, we define for $m\in\Nat_+$, $d\in\Real_+$ and $t\in\myalpinf $:
 \begin{eqnarray*}
 \Phi(t,m,d) &=_\df& 2\cdot\!\!\sum_{\alpha\in\myalple{m} }\Bigl(\sum_{l=1}^{\lceil d \cdot m \rceil}\toalpx{X}{\alpha}{t}{1}{l}\Bigr)^2 \\
 \Psi(t,m,d) &=_\df& 2\cdot\!\!\sum_{\alpha\in\myalple{m} }\Bigl(\sum_{l=\lceil d \cdot m \rceil+1}^{\infty}\toalpx{X}{\alpha}{t}{1}{l}\Bigr)^2
  \mbox{.}
 \end{eqnarray*}
 %Particularly, we claim.
 \begin{claim}
   \label{claim:boundonpart1}
   Let $d\in\Real_+$ be fixed and let $\gamma =
    1/\tilde{z}$, where $\tilde{z}$ is the pole of minimum modulus of the function
   \[
     \tilde{\mathcal{Z}}_X(z)
     =\prod_{i=1}^{r}\left(
         1-
         \delta\cdot\mydel{a_i} - 
         \sum\limits_{j=1}^{v} \frac{\delta\cdot\myread{a_i}{q_j}\cdot\eta_{q_j}\cdot z}{1-(1-\eta_{q_j})\cdot z}
       \right)^{-1}   \mbox{.}
       \]Under the assumption that~$(1)$, there is polynomial  $\Pi(z)$ of fixed degree $\le r$
       such that $ \Phi(t,m,d) \le \Pi(m)\cdot \gamma^m$.
 \end{claim}
 
 \begin{claim}
   \label{claim:boundonpart2}
   Let $d\in\Real_+$ be fixed.  For a star-like perturbation function $X$ as in the setting of
   Theorem~\ref{thm:star-like}, there exist constants $c,\gamma_2\in\Real$ satisfying $\gamma_2<1$ and such that $
   \Psi(t,m,d) \le c\cdot \gamma_2^m$.
 \end{claim}
\noindent
%We postpone the proofs of the two claims to Sections~\ref{sec:bounding-psi} and~\ref{sec:bounding-phi}
The detailed proofs of the two claims can be found in Section~\ref{sec:detailed-proof-of-thm}. 
We now fix $d$. The above directly yields
 \[
 \myprob{\lcp(X(t),X(t))\ge m} \le  \Psi(t,m,d) + \Phi(t,m,d) \le (\Pi(m)+c)\cdot\tilde{\gamma}^m
 \]
 for $\tilde{\gamma}=\max\{\gamma_1,\gamma_2\}<1$. Thus we may apply the tail-bound.
Together this shows the sought-after claim.
\begin{claim}
In the setting of Theorem~\ref{thm:star-like}, it holds that $(1)\Rightarrow(2)$.
\end{claim}
This proves Theorem~\ref{thm:star-like}.
 \end{proof}
\noindent
Note that it can also be shown (see Appendix~\ref{sec:detailed-proof-of-thm}) 
that for $d$ sufficiently large,
$\lim_{m\rightarrow\infty } \Phi(t,m,d)/\Psi(t,m,d)=0$
%\[
%\lim_{m\rightarrow\infty }\frac{\Phi(t,m,d)}{\Psi(t,m,d)}=0\mbox{,}
%\]
which implies Theorem~\ref{thm:tight-for-star-like}.

\section{Proof of Lemma~\ref{lem:tailbound}}
\label{sec:proof-lemma-tailbound}
  \begin{proof}
 Let $S$ be a non-empty set of infinite strings over a finite alphabet $\myalp$.
 Let $\varepsilon>0$ and let $k\in\Nat_+$ be arbitrary. Then
  \begin{eqnarray}\nonumber
    H(S,n,X) &=&
     \max\limits_{\genfrac{}{}{0pt}{2}{A\subseteq S}{\|A\|=n}}
     \mymean{\max\limits_{s,t\in A}
       \lcp(X(s),X(t))}\\\nonumber
       &=& \max\limits_{\genfrac{}{}{0pt}{2}{A\subseteq S}{\|A\|=n}}\sum_{i=1}^{\infty}\myprob{\max\limits_{s,t\in A}
       \lcp(X(s),X(t))\ge i}\\\nonumber
     &\le& \sum_{i=1}^{\infty}\max\limits_{\genfrac{}{}{0pt}{2}{A\subseteq S}{\|A\|=n}}\myprob{\max\limits_{s,t\in A}
       \lcp(X(s),X(t))\ge i}\\\label{eq:prof:splitting-lemma:1}
      &\le& k
     + \sum_{i=k+1}^{\infty}\max\limits_{\genfrac{}{}{0pt}{2}{A\subseteq S}{\|A\|=n}}\myprob{\max\limits_{s,t\in A}
       \lcp(X(s),X(t))\ge i}\\\label{eq:prof:splitting-lemma:2}
     &\le& k+  n^2 \cdot\sum_{i=k+1}^{\infty}\max\limits_{s,t \in S}\myprob{\lcp(X(s),X(t))\ge i}\mbox{.}
   \end{eqnarray}
   Inequality~\eqref{eq:prof:splitting-lemma:2} follows from Boole's Inequality and~\eqref{eq:prof:splitting-lemma:1}
   holds, because the in sum of probabilities each addend of the first $k$ addends can by bounded by one.
   Now we expand each addend of the right-hand side  and apply Cauchy's Inequality in its standard from : 
   \begin{eqnarray}\nonumber
   \max\limits_{s,t \in S}\myprob{\lcp(X(s),X(t))\ge i}
   &=& \max\limits_{s,t\in S}\sum_{\alpha\in\myalp^i}\myprob{\alpha\sqsubseteq X(s)}\cdot\myprob{\alpha\sqsubseteq X(t)}\\\nonumber
      &\le& \max\limits_{s,t\in S}\sqrt{\sum_{\alpha\in\myalp^i}\myprob{\alpha\sqsubseteq X(s)}^2}\cdot\sqrt{\sum_{\alpha\in\myalp^i}\myprob{\alpha\sqsubseteq X(t)}^2}\\\nonumber
      &\le& 
      \max\limits_{s\in S} \sum_{\alpha\in\myalp^i}\myprob{\alpha\sqsubseteq X(s)}^2  \\\nonumber
      &=&\max\limits_{s\in S}
      \myprob{\lcp(X(s),X(s))\ge i}\mbox{.}
 \end{eqnarray}  
 Now, we have that for all $k\in\Nat_+$ it holds that
 \[
 H(S,n,X) \le k + n^2\cdot \sum_{i=k+1}^{\infty}\max_{s\in S} \myprob{\lcp(X(s),X(s))\ge i}\mbox{.}
 \]
 Let $d\in\Nat_+$ and let $\Pi(z)$ be a polynomial of degree $d$ such that the assumption of the theorem holds. 
 Set $k=2\cdot\lceil(1+\varepsilon)\log_{1/\gamma}{n}\rceil\ge m_0$. Then
 \begin{eqnarray*}
 H(S,n,X) 
 &\le& k + n^2\cdot \sum_{i=k+1}^{\infty}\max_{s\in S} 
 \myprob{\lcp(X(s),X(s))\ge i}\\
 &\le& 2\cdot\lceil(1+\varepsilon)\log_{1/\gamma}{n}\rceil +
 \sum_{i=2\cdot \lceil(1+\varepsilon)\log_{1/\gamma}{n}\rceil+1}^{\infty} \Pi(i)\cdot n^2\cdot \gamma^{i}
 \end{eqnarray*}
 It is easy to see that the latter term is in $o(1)$:
\begin{eqnarray*}
\lefteqn{  \sum_{i=2\cdot\lceil(1+\varepsilon)\log_{1/\gamma}{n}\rceil+1}^{\infty} \Pi(i)\cdot n^2\cdot
    \gamma^{i}}~~~~~~~~~~~~~~~~~~~~~~~\\
&=&
\sum_{i=1}^{\infty} \Pi(2\cdot\lceil(1+\varepsilon)\log_{1/\gamma}{n}\rceil+i )\cdot n^2\cdot
\gamma^{2\cdot\lceil(1+\varepsilon)\log_{1/\gamma}{n}\rceil}\cdot\gamma^{i}\\
  &\le&
\sum_{i=1}^{\infty} \Pi(2\cdot\lceil(1+\varepsilon)\log_{1/\gamma}{n}\rceil+i )\cdot n^2\cdot n^{-2-2\varepsilon}\cdot\gamma^{i}\\
&=&
n^{-2\varepsilon}\cdot\sum_{i=1}^{\infty} \Pi(2\cdot\lceil(1+\varepsilon)\log_{1/\gamma}{n}\rceil+i )\cdot
\gamma^{i}\in o(1)\mbox{.}
\end{eqnarray*}
This proves the lemma.
\end{proof}
 
\section{Detailed Proof of Theorem~\ref{thm:star-like}}
\label{sec:detailed-proof-of-thm}
 
Before we actually start with proving the first claim,
we first show how to express the term $\sum_{\alpha\in\myalple{m} }\toalpx{X}{\alpha}{t}{1}{l}$  subject to the transition
probabilities of $X$. Then we establish Claim~\ref{claim:boundonpart1}  in Section~\ref{sec:bounding-phi}. Afterwards, we turn to the
proof of Claim~\ref{claim:boundonpart2}  in Section~\ref{sec:bounding-psi}. 
This then proves Theorem~\ref{thm:star-like}.

\subsection{Prerequisites: computations of star-like PFAs}
\label{sec:comp-star-like}
In this section, we prove the following Lemma which will be one of the important ingredients in the proofs of
Claims~\ref{claim:boundonpart1} and~\ref{claim:boundonpart2}. In particular, the lemma gives a (nearly exact)
expression of the term $\sum_{\alpha\in\myalple{m} }\toalpx{X}{\alpha}{t}{1}{l}$, i.e., the probability that a computation of input length
$l$ on the prefix of $t$ has output length at least $m$, subject to the transition
probabilities of $X$.

\begin{lemma}
  \label{lem:expanding-the-sum}
  Let $X=(\{s\},W,\mu_R,\mu_W,\sigma)$ be a star-like PFA over the finite alphabet $\myalp$ in canonical form
  and let $t\in\myalpinf $ and $l\in\Nat_+$. Let $f:\Nat\rightarrow \{0,1\}$ the following defined function: for $x\in\Nat$,
     \[
     f(x)=_\df
     \left\{
       \begin{array}{ll}
         1 & \text{, if $x =0$}\\
        0 & \text{, otherwise.}
      \end{array}
    \right.
    \]
    The function $f$ is used to indicate deleted symbols in the computations of $X$. Then
    \begin{eqnarray*}
    \lefteqn{\sum_{\alpha\in\myalple{m} }\toalpx{X}{\alpha}{t}{1}{l}}~~~~~~~~~~
    \\
    &\le& \frac{1}{\tilde{\eta}}\cdot\!\!\!\! 
      \sum_{m_1+\ldots+m_l=m}      
      \prod_{i=1}^{l}
      \Bigl(\!
      f(m_i)\!\cdot\!\mydel{t[i]}\!+\!
      (1\!-\!f(m_i))\!\cdot\!\!\sum_{q\in W}\myread{t[i]}{q}\eta_q(1\!-\!\eta_q)^{m_i-1}
      \!\Bigr)\mbox{,}
      \end{eqnarray*}
    where $\tilde{\eta}=\min_{q\in W}\eta_q$ denotes the {\em  minimum return probability}.
\end{lemma}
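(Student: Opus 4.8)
The plan is to decompose the computation of $X$ on $t[1\ldots l]$ into its $l$ consecutive sub-computations --- the $i$-th being the computation on the single symbol $t[i]$ --- and to work entirely with the output lengths of these sub-computations. Since $X$ is star-like, $R=\{s\}$, so each sub-computation starts in $s$ (the first one because $X$ is in canonical form, every later one because a sub-computation ends exactly when it reaches an input state), and hence the $l$ sub-computations are mutually independent; this independence is essentially the only place where star-likeness and the canonical form enter. First I would analyse a single sub-computation on a symbol $a$ started in $s$: with probability $\mydel{a}=\mu_R(s,a,s)$ it loops at $s$ and writes nothing, and otherwise, with probability $\myread{a}{q}=\mu_R(s,a,q)$, it moves to the output state $q$, where --- by conditions $(2)$ and $(3)$ of star-likeness --- it stays, writing one symbol per transition, until it first returns to $s$; since every transition leaving $q$ returns to $s$ with probability $\eta_q=\sum_{b\in\myalp}\mu_W(q,b,s)\ge\tilde\eta>0$ independently, the number of symbols written during the visit equals $k\ge 1$ with probability $(1-\eta_q)^{k-1}\eta_q$. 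Writing $M_i$ for the output length of the $i$-th sub-computation, this shows $\myprob{M_i=k}=g_i(k)$, where $g_i(k)=f(k)\cdot\mydel{t[i]}+(1-f(k))\cdot\sum_{q\in W}\myread{t[i]}{q}\,\eta_q(1-\eta_q)^{k-1}$ is exactly the $i$-th factor of the product on the right-hand side of the lemma.

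Next I would re-express the left-hand side. By definition, for $|\alpha|=m$ the term $\toalpx{X}{\alpha}{t}{1}{l}$ is the probability that the output of $X$ on $t$ begins with $\alpha$ and that its $m$-th symbol is produced while reading $t[l]$; summing over all $\alpha\in\myalple{m}$ drops the condition specifying which prefix is produced, so $\sum_{\alpha\in\myalple{m}}\toalpx{X}{\alpha}{t}{1}{l}=\myprob{M_1+\cdots+M_{l-1}<m\le M_1+\cdots+M_l}$. Conditioning on the (independent) outcomes $m_1,\dots,m_{l-1}$ of the first $l-1$ sub-computations and setting $r=m-(m_1+\cdots+m_{l-1})\ge 1$, this equals $\sum_{m_1,\dots,m_{l-1}\ge0:\,m_1+\cdots+m_{l-1}<m}\bigl(\prod_{i=1}^{l-1}g_i(m_i)\bigr)\,\myprob{M_l\ge r}$.

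Finally I would insert a tail estimate for $M_l$: for $r\ge 1$, $\myprob{M_l\ge r}=\sum_{q\in W}\myread{t[l]}{q}\,\eta_q\sum_{k\ge r}(1-\eta_q)^{k-1}=\sum_{q\in W}\myread{t[l]}{q}(1-\eta_q)^{r-1}\le\frac1{\tilde\eta}\sum_{q\in W}\myread{t[l]}{q}\,\eta_q(1-\eta_q)^{r-1}=\frac1{\tilde\eta}\,g_l(r)$, where the inequality uses $\eta_q\ge\tilde\eta$. Substituting this and renaming $m_l=r$ gives $\sum_{\alpha\in\myalple{m}}\toalpx{X}{\alpha}{t}{1}{l}\le\frac1{\tilde\eta}\sum_{m_1+\cdots+m_l=m,\,m_l\ge1}\prod_{i=1}^{l}g_i(m_i)\le\frac1{\tilde\eta}\sum_{m_1+\cdots+m_l=m}\prod_{i=1}^{l}g_i(m_i)$, the last step merely restoring the nonnegative $m_l=0$ terms; this is the claimed bound (we may take $m\ge 1$). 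The two sources of slack --- the factor $1/\tilde\eta$ from $\eta_q\ge\tilde\eta$, and the discarded $m_l=0$ terms --- are precisely what leaves the right-hand side in the clean product form that the later generating-function arguments (Claims~\ref{claim:boundonpart1} and~\ref{claim:boundonpart2}) need.

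I expect the main obstacle to be the bookkeeping of the middle step: pinning down exactly what $\toalpx{X}{\alpha}{t}{1}{l}$ counts --- the event that the $m$-th output symbol appears precisely while $t[l]$ is read, not merely that the first $l$ sub-computations together emit at least $m$ symbols --- and then cleanly identifying $\sum_{\alpha\in\myalple{m}}\toalpx{X}{\alpha}{t}{1}{l}$ with the length event $\{M_1+\cdots+M_{l-1}<m\le M_1+\cdots+M_l\}$. Everything else --- the geometric law of a single output-state visit and the elementary tail bound --- is routine once the decomposition into independent sub-computations is set up.
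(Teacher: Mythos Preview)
Your proposal is correct and follows essentially the same route as the paper's own proof: decompose the computation on $t[1\ldots l]$ into $l$ independent single-symbol sub-computations, compute the exact law of each output length $M_i$ (deletion with probability $\mydel{t[i]}$, otherwise geometric with parameter $\eta_q$ after moving to $q$), and replace the tail $\myprob{M_l\ge r}$ by $\frac{1}{\tilde\eta}\,g_l(r)$ to obtain the uniform product form. Your treatment of the last sub-computation is in fact slightly cleaner than the paper's: you first restrict to $m_l\ge 1$ (which is forced by the event $\{M_1+\cdots+M_{l-1}<m\le M_1+\cdots+M_l\}$) and only afterwards reinstate the nonnegative $m_l=0$ terms, whereas the paper sums over all decompositions from the outset and uses $\mydel{t[l]}$ for the $m_l=0$ case without explicitly justifying that this yields an upper bound.
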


\ifthenelse{\value{prooftype}=1}{
  \begin{proof}
    Recall that the term which we seek to bound is the probability that the computation of $X$ on $t$ of input length $l$
    has output length at least $m$. Since $X$ is star-like and given in canonical form,
    each computation of $X$ on $t$ starts in the input state and then 
    moves into some output state, from which it writes the output, before it moves into the input state again, where
    it reads the next symbol of the input and continues the computations as described above. Thus, each computation can be
    decomposed into the computations on the successive individual symbols of $t$. For a computation of input length $l$ and
    output length $m$, there are $\binom{m+l-1}{l-1}$ possibilities to concatenate $l$ such computations on individual
    symbols such that they give a computation of output length $m$: this equals the number of decompositions of $m$ into $l$
    non-negative addends. Note, that addends might be equal to zero, because computations might have output length zero,
    The computations on the first $l-1$ input symbols {\em  must} return into the input state, whereas the
    computation on the $l$-th and last input symbol may either loop at its output state or return back into the input state after having
    written the $m$-th and last symbol of the output.\\

    Now, consider a fixed decomposition $m_1+\ldots +m_l=m$ into
    possibly empty computations. For $i\in\{0,\ldots,l\}$, if $m_i=0$ then the probability that the computation of $X$
    on the symbol $t[i]$ has output length zero is 
    \begin{equation}
      \label{eq:16a}
      \myprob{X(t[i])=\epsilon} = \mydel{t[i]}\mbox{.}
    \end{equation}
    For $i\in\{1,\ldots,l-1\}$, if $m_i>0$ then the probability that the computation of $X$
    on the symbol $t[i]$ has output length {\em  exactly} $m_i$ is equal to
    \begin{equation}
      \label{eq:16b}
      \sum_{\alpha\in\myalple{m_i}}\myprob{X(t[i])=\alpha }= \sum\limits_{q\in W} \myread{a}{q}\cdot\eta_q\cdot(1-\eta_q)^{m_i-1}
    \end{equation}
    and the probability that the computation of $X$  on the symbol $t[l]$ has output length {\em  at least} $m_l>0$ is equal to
    \begin{equation}
      \nonumber
      \sum_{\alpha\in\myalple{m_l}}\myprob{\alpha\sqsubseteq X(t[l])}=
      \sum\limits_{q\in W} \myread{a}{q}\cdot(1-\eta_q)^{m_l-1}\mbox{.}      
    \end{equation}
    Let $\tilde{eta}=\min_{q\in W} \eta_q$ be the minimum return probability. The term~\eqref{eq:16c} can be bounded as

    \begin{equation}
      \label{eq:16c}
      \sum\limits_{q\in W} \myread{a}{q}\cdot(1-\eta_q)^{m_l-1}
    \le\frac{1}{\tilde{\eta}}\cdot\sum\limits_{q\in W} \myread{a}{q}\cdot\eta_q\cdot(1-\eta_q)^{m_l-1}\mbox{.}
  \end{equation}
  Now, using the indicator function $f$ to choose the correct term for $i\in\{1,\ldots,l\}$, i.e., the
    term~\eqref{eq:16a}, if $m_i=0$ and the term~\eqref{eq:16b} if $m_i>0$ and $i<l$ or the term~\eqref{eq:16c} if 
    $m_l>0$ and $i=l$,  the probability that the computation of $X$ on $t$ of input length $l$ that can be decomposed as
    $m_1+\ldots +m_l=m$ has length at least $m$ is can be bounded by the product
    \[
    \frac{1}{\tilde{\eta}}
    \cdot
    \prod_{i=1}^{l}\Bigl(
    f(m_i)\cdot \mydel{t[i]}+(1-f(m_i))\cdot\sum\limits_{q\in W} \myread{a}{q}\cdot\eta_q\cdot(1-\eta_q)^{m_i-1}\Bigr)\mbox{.}      
    \]
    Summing over all possible decompositions, we get
    \begin{eqnarray*}
    \lefteqn{\sum_{\alpha\in\myalple{m} }\toalpx{X}{\alpha}{t}{1}{l}}~~~~~~~~~~\\
    &\le&\frac{1}{\tilde{\eta}}\cdot\!\!\!\! 
    \sum_{m_1+\ldots+m_l=m}      
    \prod_{i=1}^{l}
    \Bigl(\!
    f(m_i)\!\cdot\!\mydel{t[i]}\!+\!
    (1\!-\!f(m_i))\!\cdot\!\!\sum_{q\in W}\myread{t[i]}{q}\eta_q(1\!-\!\eta_q)^{m_i-1}
    \!\Bigr)
    \end{eqnarray*}
    which proves the lemma.
  \end{proof}
}{
Since $X$ is star-like and given in canonical form,
each computation of $X$ on some infinite input string starts in the input state and then 
moves into some output state, from which it writes the output, before it moves into the input state again, where
it reads the next symbol of the input and continues the computations as described above. Thus, each computation can be
decomposed into the computations on the successive individual symbols of the input. For a computation of input length $l$ and
output length $m$, there are $\binom{m+l-1}{l-1}$ possibilities to concatenate $l$ such computations on individual
symbols such that they give a computation of output length $m$: this equals the number of decompositions of $m$ into $l$
non-negative addends. Note, that addends might be equal to zero, because computations might have output length zero.
Also, note that the computations on the first $l-1$ input symbols {\em  must} return into the input state. The
computation on the $l$-th  and last input symbol may either loop at its output state or return back into the input state after having
written the $m$-th and last symbol of the output. The above is captured in the following proposition.

\begin{proposition}
  \label{pro:decomposition1}
    Let $X=(\{s\},W,\mu_R,\mu_W,\sigma)$ be a star-like PFA over the finite alphabet $\myalp$ in canonical form
  and let $t\in\myalpinf $ and $k,l\in\Nat_+$ and $\alpha\in\myalpfin $. Then
\[
\toalpx{X}{\alpha}{t}{k}{l}=
\left\{
  \begin{array}{ll}
    1 & \text{, if $|\alpha|=0$}\\
    0 & \text{, $|\alpha|>0$}\\
    & \text{~~and $k>l$}\\
    \sum\limits_{\alpha=\beta_0\ldots \beta_{l-k}}\prod\limits_{i=0}^{l-k-1}
    \myprob{X(t[k\!+\!i])=\beta_i}\myprob{\beta_{l-k}\sqsubseteq X(t[l])}
    & \text{, otherwise.}
  \end{array}
\right.
\] 
\end{proposition}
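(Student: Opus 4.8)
The plan is to reduce everything to one structural fact: a star-like PFA in canonical form processes its input strictly symbol by symbol, always returning to the unique input state $s$ between consecutive symbols. First I would dispose of the two degenerate branches. If $|\alpha|=0$ then $\alpha=\epsilon$ is a prefix of every string, so every computation of $X$ on $t[k\ldots l]$ outputs a string with prefix $\alpha$ and $\toalpx{X}{\alpha}{t}{k}{l}=1$. If $|\alpha|>0$ and $k>l$ then $t[k\ldots l]=\epsilon$ by the substring convention; by canonical form the computation starts in $s$, and having no symbol to read it halts at once at $s$ with empty output, which cannot have the non-empty $\alpha$ as a prefix, so $\toalpx{X}{\alpha}{t}{k}{l}=0$.

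For the main branch $|\alpha|>0$, $k\le l$, I would first prove an auxiliary claim about the shape of a computation. Because $\|R\|=1$, because $X$ is in canonical form, and because part~(3) of Definition~\ref{def:star-like-pfa} forces the number of steps spent at any output state before returning to $s$ to be almost surely finite (so the machine really does return to $s$), any computation of $X$ on $t[k\ldots l]$ decomposes as the concatenation of the sub-computations on the successive symbols $t[k],t[k+1],\dots,t[l]$: the sub-computation on $t[k+i]$ begins in $s$, emits the random string $X(t[k+i])$, and returns to $s$ — with the sole caveat that the last sub-computation, on $t[l]$, need only be followed until the $|\alpha|$-th output symbol has appeared. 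Since the only state carried from one sub-computation to the next is the fixed state $s$, the outputs $X(t[k]),\dots,X(t[l])$ are mutually independent. This expresses the event underlying $\toalpx{X}{\alpha}{t}{k}{l}$ as: there is a splitting $\alpha=\beta_0\beta_1\cdots\beta_{l-k}$ into (possibly empty) blocks such that the sub-computation on $t[k+i]$ emits precisely $\beta_i$ for each $i\le l-k-1$ and the sub-computation on $t[l]$ emits a string with $\beta_{l-k}$ as a prefix; conditioning on that splitting and invoking independence gives probability $\prod_{i=0}^{l-k-1}\myprob{X(t[k+i])=\beta_i}\cdot\myprob{\beta_{l-k}\sqsubseteq X(t[l])}$, and summing over all splittings yields the claimed identity. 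Empty blocks $\beta_i=\epsilon$ are allowed and encode deleted symbols, contributing factors $\myprob{X(t[k+i])=\epsilon}=\mydel{t[k+i]}$, and the case $l=k$ is just the base case in which the product is empty.

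The hard part will be the bookkeeping at the right end: keeping straight the asymmetry between the internal symbols $t[k],\dots,t[l-1]$, whose outputs must reconstruct a prefix of $\alpha$ \emph{exactly}, and the final symbol $t[l]$, whose output need only \emph{extend} the remaining block, while making sure the splitting $\alpha=\beta_0\cdots\beta_{l-k}$ is uniquely pinned down by each realisation of the output, so that no realisation is double-counted or dropped — in particular being precise about which event the quantity $\toalpx{X}{\alpha}{t}{k}{l}$ stands for when an internal symbol's output already overshoots length $|\alpha|$. I would discharge this by induction on $l-k$: the base case is a single symbol (for which the right-hand side is just $\myprob{\alpha\sqsubseteq X(t[l])}$), and the inductive step peels off $t[l]$ by conditioning on the (prefix-of-$\alpha$) output $\beta$ of the computation on $t[k\ldots l-1]$, requiring the independent output $X(t[l])$ to carry the complementary block $\gamma$ (where $\alpha=\beta\gamma$) as a prefix, and applying the inductive hypothesis; alternatively one can simply read the splitting off the positions at which the symbol boundaries fall within the first $|\alpha|$ output symbols.
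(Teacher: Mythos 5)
Your proposal takes essentially the same route as the paper: the paper justifies this proposition solely by the structural observation that a star-like PFA in canonical form returns to its unique input state $s$ between consecutive input symbols, so the computation on $t[k\ldots l]$ factors into independent sub-computations whose outputs concatenate to a splitting $\alpha=\beta_0\cdots\beta_{l-k}$, with the internal blocks matched exactly and only the final block required to be a prefix of $X(t[l])$ --- exactly your argument, including the treatment of empty blocks as deletions. The ``overshoot'' subtlety you flag is genuine but is absorbed by the intended semantics of $\mu$ (the index $l$ marks the input symbol during whose processing the prefix $\alpha$ is completed, as is needed anyway for the identity $\myprob{\alpha\sqsubseteq X(t)}=\sum_{l}\toalpx{X}{\alpha}{t}{1}{l}$); the paper itself states the proposition without a formal proof, so your sketch is, if anything, more careful than the original.
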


We consider a computation of $X$ on an individual input symbol.
\begin{proposition}
\label{prop:decomp2}
  Let $X=(\{s\},W,\mu_R,\mu_W,\sigma)$ be a star-like PFA over the finite alphabet $\myalp$ in canonical form and let
  $m\in\Nat$. Then for $\alpha\in\myalpfin $ and $a\in\myalp$,
  \begin{equation}\label{eq:22}
  \sum_{\alpha\in\myalple{m} }\myprob{X(a)=\alpha}=
  \left\{
    \begin{array}{ll}
      \mydel{a} & \text{, if $m=0$ and}\\
      \sum\limits_{q\in W} \myread{a}{q}\cdot\eta_q\cdot(1-\eta_q)^{m-1} & \text{, otherwise}
    \end{array}
  \right.
  \end{equation}
  and
  \begin{equation}\label{eq:23}
  \sum_{\alpha\in\myalple{m} }\myprob{\alpha\sqsubseteq X(a)}=
  \left\{
    \begin{array}{ll}
      1 & \text{, if $m=0$ and}\\
      \sum\limits_{q\in W} \myread{a}{q}\cdot(1-\eta_q)^{m-1} & \text{, otherwise.}
    \end{array}
  \right.
  \end{equation}
\end{proposition}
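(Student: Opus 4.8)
The plan is to reduce both identities to a single structural fact about a computation of a star-like PFA on one input symbol, after which each formula follows from an elementary geometric sum. First I would unwind what such a computation looks like. By canonical form it starts in the input state $s$. On reading $a\in\myalp$ the automaton either loops at $s$ --- with probability $\mydel{a}=\mu_R(s,a,s)$, in which case the output is empty --- or moves to some output state $q\in W$ --- with probability $\myread{a}{q}=\mu_R(s,a,q)$; these possibilities are exhaustive since $\mu_R(s,a,\cdot)$ is a probability distribution on $\{s\}\cup W$. Once in $q$, property~(2) of star-likeness forbids any move to a different output state, so every subsequent transition writes one output symbol and either loops at $q$, with total probability $1-\eta_q=\sum_{b\in\myalp}\mu_W(q,b,q)$, or returns to $s$, with total probability $\eta_q=\sum_{b\in\myalp}\mu_W(q,b,s)$; property~(3) guarantees $\eta_q>0$, so the return occurs almost surely. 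Hence, conditioned on having entered $q$, the output length of the computation on $a$ is geometric: it equals $j\ge 1$ with probability $(1-\eta_q)^{j-1}\eta_q$ and is at least $j$ with probability $(1-\eta_q)^{j-1}$.

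Summing over the first transition, which partitions the sample space, the probability that the computation on $a$ has output length exactly $m$ is $\mydel{a}$ when $m=0$ and $\sum_{q\in W}\myread{a}{q}\,\eta_q(1-\eta_q)^{m-1}$ when $m\ge 1$, while the probability that it has output length at least $m$ is $1$ when $m=0$ and $\sum_{q\in W}\myread{a}{q}(1-\eta_q)^{m-1}$ when $m\ge 1$. It then remains only to match the two left-hand sides of the proposition with these probabilities. The events $\{X(a)=\alpha\}$ over $\alpha\in\myalple{m}$ are pairwise disjoint with union $\{|X(a)|=m\}$, so the first sum equals $\myprob{|X(a)|=m}$; the events $\{\alpha\sqsubseteq X(a)\}$ over $\alpha\in\myalple{m}$ are pairwise disjoint with union $\{|X(a)|\ge m\}$, so the second sum equals $\myprob{|X(a)|\ge m}$. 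Substituting the two computed probabilities then yields both displayed formulas, the cases $m=0$ included --- note that for $m=0$ every string has $\epsilon$ as a prefix, so the second sum is trivially $1$.

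There is no real obstacle here; the only point demanding care is the bookkeeping for the non-standard automaton model, namely that a computation on the single symbol $a$ genuinely decomposes as ``delete $a$, or enter exactly one output state $q$ and then loop at $q$ until returning to $s$'', which is precisely the geometric picture used above --- and that, since $\eta_q>0$, the computation returns to $s$ almost surely, so the event that $X(a)$ is infinite has probability $0$ and it is immaterial whether it is counted on the right-hand side of the ``$\ge m$'' identity.
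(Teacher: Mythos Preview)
Your argument is correct. You reduce both sums to probabilities of length events --- $\sum_{\alpha\in\myalple{m}}\myprob{X(a)=\alpha}=\myprob{|X(a)|=m}$ and $\sum_{\alpha\in\myalple{m}}\myprob{\alpha\sqsubseteq X(a)}=\myprob{|X(a)|\ge m}$ --- and then read off these length probabilities from the geometric structure of the output-state loop. The disjointness observations are sound, and the star-like conditions are used exactly where needed.

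The paper takes a more computational route: for $m\ge 1$ it expands the sum over $\alpha\in\myalple{m}$ explicitly as
\[
\sum_{b\in\myalp}\sum_{\beta\in\myalple{m-1}}\sum_{q\in W}\myread{a}{q}\cdot\prod_{i=1}^{r}\mu_W(q,a_i,q)^{|\beta|_{a_i}}\cdot\mu_W(q,b,s),
\]
then factors out the $q$-sum and applies the multinomial theorem to collapse $\sum_{\beta\in\myalple{m-1}}\prod_i\mu_W(q,a_i,q)^{|\beta|_{a_i}}$ to $(1-\eta_q)^{m-1}$ and $\sum_b\mu_W(q,b,s)$ to $\eta_q$; the prefix version is identical but with the last factor replaced by $\mu_W(q,b,s)+\mu_W(q,b,q)$, summing to $1$. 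Your probabilistic shortcut bypasses this symbol-by-symbol bookkeeping by observing at the outset that the only thing that matters after summing over all $\alpha$ of a fixed length is the length distribution --- which is precisely what the multinomial step accomplishes in the paper. The two arguments are therefore equivalent in content; yours is shorter and more transparent, while the paper's makes the dependence on the individual transition probabilities $\mu_W(q,a_i,\cdot)$ visible before it is summed away.
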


\begin{proof}
  Let $\myalp=\{a_1,\ldots,a_r\}$.
  We distinguish two cases: first, assume that $m=0$. Then $ \sum_{\alpha\in\myalple{m} }\myprob{X(a)=\alpha}= \mydel{a}$
  and $\myprob{\epsilon\sqsubseteq X(t)} = 1$, by definition. Second, assume that $m>0$. We first consider the
  term~\eqref{eq:22}: 
  \[
  \sum_{\alpha\in\myalple{m} }\myprob{X(a)=\alpha}=\sum_{b\in\myalp}\sum_{\beta\in\myalp^{m-1}}\sum_{q\in W}\myread{a}{q}\cdot\prod_{i=1}^{r}\mu_W(q,a_i,q)^{|\beta|_{a_i}}\mu_W(q,b,s)
  \]
 This can be seen as follows: in order to write a fixed string $\alpha=\beta b$ of length $m$, the PFA must move into some output
 state $q$ (with probability $\myread{a}{q}$) and then loop at $q$ exactly $m-1$ times, before it moves back into state
 $s$. While in state $q$, it must write the symbol $a_i$ exactly $|\beta|_{a_i}$ times for $i\in\{1,\ldots,r\}$. Also, it
 must write $b$ with returning into state $s$. The order in which the transitions must be made is given by $\beta$ 
 and thus the probability equals the product over the
 individual transition probabilities. Now, factoring out the respective terms according to the involved writing states
 and thereafter applying the multinomial theorem gives
 \begin{eqnarray*}
    \lefteqn{\sum_{b\in\myalp}\sum_{\beta\in\myalp^{m-1}}\sum_{q\in W}\myread{a}{q}\cdot\prod_{i=1}^{r}\mu_W(q,a_i,q)^{|\beta|_{a_i}}\mu_W(q,b,s)}\\\nonumber
    &=&\sum_{q\in W}\myread{a}{q}\cdot\sum_{b\in\myalp}\sum_{\beta\in\myalp^{m-1}}\prod_{i=1}^{r}\mu_W(q,a_i,q)^{|\beta|_{a_i}}\mu_W(q,b,s)\\\nonumber
   &=&\sum_{q\in W}\myread{a}{q}\cdot\Bigl(\sum_{b\in\myalp}\mu_W(q,b,s)\Bigr)\cdot\Bigl(\sum_{\beta\in\myalp^{m-1}}\prod_{i=1}^{r}\mu_W(q,a_i,q)^{|\beta|_{a_i}}\Bigr)\\\nonumber
   &=& \sum_{q\in W}\myread{a}{q}\cdot\eta_q\cdot(1-\eta_q)^{m-1}\mbox{}
 \end{eqnarray*} 
 Next, we consider the term~\eqref{eq:23}: We may proceed very similar, only that we must take into account the
 possibility that instead of returning from a writing state $q$ with the last symbol of $\alpha$, the PFA may again loop
 at $q$. This small difference manifests itself in the calculation as follows: 
 \begin{eqnarray*}
   \lefteqn{\sum_{\alpha\in\myalple{m} }\myprob{\alpha\sqsubseteq X(a)}}\\
    &=&\sum_{b\in\myalp}\sum_{\beta\in\myalp^{m-1}}\sum_{q\in W}\myread{a}{q}\cdot\prod_{i=1}^{r}\mu_W(q,a_i,q)^{|\beta|_{a_i}}(\mu_W(q,b,s)+\mu_W(q,b,q))\\\nonumber
    &=&\sum_{q\in W}\myread{a}{q}\cdot\sum_{b\in\myalp}\sum_{\beta\in\myalp^{m-1}}\prod_{i=1}^{r}\mu_W(q,a_i,q)^{|\beta|_{a_i}}(\mu_W(q,b,s)+\mu_W(q,b,q))\\\nonumber
   &=&\sum_{q\in W}\myread{a}{q}\cdot\Bigl(\sum_{b\in\myalp}\mu_W(q,b,s)+\mu_W(q,b,q)\Bigr)\cdot\Bigl(\sum_{\beta\in\myalp^{m-1}}\prod_{i=1}^{r}\mu_W(q,a_i,q)^{|\beta|_{a_i}}\Bigr)\\\nonumber
   &=& \sum_{q\in W}\myread{a}{q}\cdot(1-\eta_q)^{m-1}\mbox{}
 \end{eqnarray*}
 This proves the proposition.
 \end{proof}

Now, we have established the necessary notation to prove Lemma~\ref{lem:expanding-the-sum}.
\begin{proof}[Proof of Lemma~\ref{lem:expanding-the-sum}]
  Let $l\in\Nat_+$ and $t\in\myalpinf $.
  From Proposition~\ref{pro:decomposition1} we have
  \begin{eqnarray}\nonumber
    \sum_{\alpha\in\myalple{m} }\toalpx{X}{\alpha}{t}{1}{l}
    &=&\sum_{\alpha\in\myalple{m} }
    \sum_{\alpha=\beta_1\dots\beta_l} \prod^{l-1}_{i=1}
    \myprob{X(t[i])=\beta_i}\cdot\myprob{\beta_l\sqsubseteq X(t[l])}
    \\\nonumber
    &=&
    \sum_{m_1+\ldots+m_l=m} \prod^{l-1}_{i=1}
    \Bigl(\!\sum_{\beta_i\in\myalple{m_{i}} }\myprob{X(t[i])=\beta_i}\!\Bigr)\!\cdot\!
    \Bigl(\!\sum_{\beta_l\in\myalple{m_{l}} }\myprob{\beta_l\sqsubseteq X(t[l])}\!\Bigr)
  \end{eqnarray}
  Now,  for $i\in\{1,\ldots,l-1\}$ the term corresponding to $\beta_i$ 
  can according to Proposition~\ref{prop:decomp2} be can expanded as
\[
\sum_{\beta_i\in\myalple{m_{i}} }\myprob{X(t[i])=\beta_i}
=f(m_i)\!\cdot\!\mydel{t[i]}\!+\!
(1\!-\!f(m_i))\!\cdot\!\sum_{q\in W}\myread{t[i]}{q}\!\cdot\!\eta_q(1\!-\!\eta_q)^{m_i-1}\mbox{.}
\]
Also, the term corresponding to $\beta_l$ in the above product-sum can  according to Proposition~\ref{prop:decomp2} 
be expanded as
 \begin{eqnarray}\nonumber
\sum_{\beta_l\in\myalple{m_{l}} }\myprob{\beta_l\sqsubseteq X(t[l])}
&=&f(m_l)\!\cdot\!\mydel{t[l]}\!+\!
(1\!-\!f(m_l))\!\cdot\!\sum_{q\in W}\myread{t[l]}{q}\!\cdot\!(1\!-\!\eta_q)^{m_l-1}\\\nonumber
&\le&\frac{1}{\tilde{\eta}}\cdot\Bigl(f(m_l)\!\cdot\!\mydel{t[l]}\!+\!
(1\!-\!f(m_l))\!\cdot\!\sum_{q\in W}\myread{t[l]}{q}\!\cdot\!\eta_q(1\!-\!\eta_q)^{m_l-1}\Bigr)\mbox{.}
 \end{eqnarray}
Together, this gives
    \begin{eqnarray*}
%      \lefteqn{\sum_{\alpha\in\myalple{m} }\toalpx{X}{\alpha}{t}{1}{l}}\\
      \sum_{\alpha\in\myalple{m} }\toalpx{X}{\alpha}{t}{1}{l}
      \le\frac{1}{\tilde{\eta}}\cdot\!\!\!\! 
      \sum_{m_1+\ldots+m_l=m}      
      \prod_{i=1}^{l}
      \biggl(
      f(m_i)\!\cdot\!\mydel{t[i]}\!+\!
      (1\!-\!f(m_i))\!\cdot\!\!\sum_{q\in W}\myread{t[i]}{q}\eta_q(1\!-\!\eta_q)^{m_i-1}
      \biggr)\mbox{,}
    \end{eqnarray*}
    which proves the lemma.
\end{proof}
}

\subsection{Bounding $\Phi(t,m,d)$}
\label{sec:bounding-phi}

In order to prove an exponentially decreasing upper bound on 
\[
\Phi(t,m,d)=2\cdot\!\!\sum_{\alpha\in\myalple{m} }\Bigl(\sum_{l=1}^{\lceil d\cdot m \rceil}\toalpx{X}{\alpha}{t}{1}{l}\Bigr)^2
\]
for fixed $d\in\Real_+$ an thereby prove Claim~\ref{claim:boundonpart1},
we first apply Cauchy's inequality and then bound by counting over all possible $l\in\Nat_+$:
\begin{eqnarray}\nonumber
2\cdot\!\!\sum_{\alpha\in\myalple{m} }\Bigl(\sum_{l=1}^{\lceil d\cdot m \rceil}\toalpx{X}{\alpha}{t}{1}{l}\Bigr)^2
&\le&2\lceil d\cdot m \rceil\cdot \!\sum_{\alpha\in\myalple{m} }\sum_{l=1}^{\lceil d\cdot m \rceil}\toalpx{X}{\alpha}{t}{1}{l}^2\\\label{eq:18}
&\le&2\lceil d\cdot m \rceil\cdot \!\sum_{l=1}^{\infty}\sum_{\alpha\in\myalple{m} }\toalpx{X}{\alpha}{t}{1}{l}^2
\end{eqnarray}
Here, exchanging the two sums does not change the value of the expression.
\paragraph{Bounding $\sum_{\alpha\in\myalple{m} }\toalpx{X}{\alpha}{t}{1}{l}^2$}
To proceed, we use the Conditions given by statement~$(1)$ of Theorem~\ref{thm:star-like} which as we will 
prove in Lemma~\ref{lem:corelemma} imply the existence of a
constant $\delta<1$ such that for all $l,m\in\Nat_+$ and $t\in\myalpinf $ 
the $l$-th addend of the outer sum of~\eqref{eq:18} can be bounded by 
$\sum_{\alpha\in\myalple{m} }\delta^{l-1}\cdot \toalpx{X}{\alpha}{t}{1}{l}$.
Before we proceed to the lemma, we state the following Proposition
which is a direct consequence of the definition of $\toalpx{X}{\alpha}{t}{k}{l}$.

\label{sec:prer-apply-tailb}
\begin{proposition}
  \label{pro:decomposition}
Let $j,k,l\in\Nat+$ satisfying $k\le j<l$ and $\alpha\in\myalple{m} $. Then
\begin{equation}\nonumber
  \toalpx{X}{\alpha}{t}{k}{l}=
  \sum_{i=0}^{m}\myprob{X(t[k\!\ldots\!j])=\alpha[0\!\ldots\!i]}\cdot\toalpx{X}{\alpha[i\!+\!1\ldots m]}{t}{j\!+\!1}{l}\mbox{.}
\end{equation}  
\end{proposition}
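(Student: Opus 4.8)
The plan is to read the identity off the standard block decomposition of a star-like computation, essentially Proposition~\ref{pro:decomposition1} (which also underlies the proof of Lemma~\ref{lem:expanding-the-sum}). Fix $k\le j<l$ and $\alpha\in\myalple{m}$; the interesting case is $m\ge 1$. Because $X$ is star-like it has the unique input state $s$ (Definition~\ref{def:star-like-pfa}), so the computation reading $t[k\ldots l]$ re-enters $s$ immediately before each of $t[k],t[k+1],\ldots$, and by condition $(3)$ of that definition the computation on each \emph{individual} input symbol almost surely returns to $s$. Hence the per-symbol computations on $t[k],t[k+1],\ldots$ are mutually independent, a computation reading $t[k\ldots l]$ factors through $s$ after every symbol except possibly the last, and $\toalpx{X}{\alpha}{t}{k}{l}$ equals the sum, over ordered factorizations $\alpha=\beta_0\beta_1\cdots\beta_{l-k}$, of $\prod_{i'=0}^{l-k-1}\myprob{X(t[k+i'])=\beta_{i'}}\cdot\myprob{\beta_{l-k}\sqsubseteq X(t[l])}$.

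The key step is to regroup this sum according to the input position $j$. First I would cut each ordered factorization $\alpha=\beta_0\cdots\beta_{l-k}$ after the block $\beta_{j-k}$: the prefix $\beta_0\cdots\beta_{j-k}$ is the output produced while reading $t[k\ldots j]$, hence is itself a prefix $\alpha[0\ldots i]$ of $\alpha$ with $i:=|\beta_0\cdots\beta_{j-k}|\in\{0,\ldots,m\}$, and the suffix $\beta_{j-k+1}\cdots\beta_{l-k}$ is the complementary piece $\alpha[i+1\ldots m]$. Grouping the product along this cut and using that the computations on $t[k],\ldots,t[j]$ are independent of those on $t[j+1],\ldots,t[l]$, the defining sum becomes a sum over $i\in\{0,\ldots,m\}$ of the product of two quantities: the sum over factorizations of $\alpha[0\ldots i]$ into the first $j-k+1$ blocks of $\prod_{i'=0}^{j-k}\myprob{X(t[k+i'])=\beta_{i'}}$, and the sum over factorizations of $\alpha[i+1\ldots m]$ into the remaining $l-j$ blocks of $\prod_{i'=j-k+1}^{l-k-1}\myprob{X(t[k+i'])=\beta_{i'}}\cdot\myprob{\beta_{l-k}\sqsubseteq X(t[l])}$. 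By the formula for the law of a concatenation of independent strings the first quantity is $\myprob{X(t[k])\cdots X(t[j])=\alpha[0\ldots i]}$; and since, again by condition $(3)$, the computation on the finite string $t[k\ldots j]$ almost surely terminates in $s$, this concatenation is exactly its output, so the first quantity equals $\myprob{X(t[k\ldots j])=\alpha[0\ldots i]}$. After reindexing the blocks, the second quantity is the defining sum of $\toalpx{X}{\alpha[i+1\ldots m]}{t}{j+1}{l}$, i.e.\ Proposition~\ref{pro:decomposition1} restarted at input position $j+1$; the boundary case $i=m$ gives $\toalpx{X}{\epsilon}{t}{j+1}{l}=1$. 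Putting the pieces together yields the asserted identity.

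The only point that genuinely needs care, and the only place where the hypothesis $j<l$ is used, is the claim that on the event underlying $\toalpx{X}{\alpha}{t}{k}{l}$ the output generated by the prefix block $t[k\ldots j]$ is forced to be an honest prefix of $\alpha$ of length at most $m$, rather than a longer word merely having $\alpha$ as a prefix. This is true because in Proposition~\ref{pro:decomposition1} every input symbol \emph{strictly before} the last contributes a block $\beta_{i'}$ of an ordered factorization of $\alpha$ itself, so the total output of $t[k\ldots l-1]$, and a fortiori that of $t[k\ldots j]$ since $j\le l-1$, is a prefix of $\alpha$. With this observation in hand, the rest is bookkeeping --- splitting a finite product, relabelling indices, and matching the two pieces against the definitions --- and I do not anticipate any further obstacle.
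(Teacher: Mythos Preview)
Your proposal is correct and follows precisely the approach the paper has in mind: the paper does not spell out a proof at all but simply states that the proposition ``is a direct consequence of the definition of $\toalpx{X}{\alpha}{t}{k}{l}$,'' and your argument---cutting the block factorization of Proposition~\ref{pro:decomposition1} at input position $j$ and regrouping---is exactly how one unpacks that remark. Your observation about why $j<l$ is needed (so that the output on $t[k\ldots j]$ is an exact prefix of $\alpha$ rather than merely containing $\alpha$ as a prefix) is the only nontrivial point, and you handle it correctly.
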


\begin{lemma}
  \label{lem:corelemma}
  Let  $X=(\{s\},W,\mu_R,\mu_W,\sigma)$ be a star-like PFA in canonical form over the finite alphabet $\myalp$. Let
    \begin{equation}
    \label{eq:1}
    \delta = \max_{a,b\in\myalp} \left(\mydel{a}+\sum_{q\in W}\myread{a}{q}\cdot(\mu_W(q,b,q)+\mu_W(q,b,s))\right)\mbox{.}
  \end{equation}  
%   there exists two {\em  distinct}
%   symbols $b,c\in\myalp$ such that
%   \begin{equation}
%     \label{eq:1}
%     \myprob{b\sqsubseteq X(a)}>0 \wedge\myprob{c\sqsubseteq X(a)}>0
%   \end{equation}
  Then  for all infinite strings $t\in\myalpinf $ and all $k,l,m\in\Nat_+$ it holds that
  \begin{equation}\nonumber
    \label{implication:lem2}
    \sum_{\alpha\in\myalple{m} }\toalp{\alpha}{t}{k}{l}^2 \le \delta^{l-k}\!\cdot\!\sum_{\alpha\in\myalple{m} }\toalp{\alpha}{t}{k}{l}\mbox{.}    
  \end{equation}
\end{lemma}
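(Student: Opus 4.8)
The plan is to prove the inequality by (downward) induction on $l-k$, using the decomposition from Proposition~\ref{pro:decomposition} to peel off one input symbol at a time. First I would set up the base case $l=k$: here $\toalpx{X}{\alpha}{t}{k}{k}^2 \le \toalpx{X}{\alpha}{t}{k}{k}$ holds trivially since each $\toalpx{X}{\alpha}{t}{k}{k}$ is a probability in $[0,1]$, and $\delta^{0}=1$. For the inductive step, I would fix $l>k$ and write, via Proposition~\ref{pro:decomposition} applied with $j=k$,
\[
\toalpx{X}{\alpha}{t}{k}{l}=\sum_{i=0}^{m}\myprob{X(t[k])=\alpha[0\ldots i]}\cdot\toalpx{X}{\alpha[i+1\ldots m]}{t}{k+1}{l}.
\]
Squaring and summing over $\alpha\in\myalple{m}$, the main obstacle is to control the cross terms produced by the square of a sum over $i$. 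The key idea here is that for a \emph{fixed} output-length split $m=m'+(m-m')$, the symbol $\alpha[0\ldots i]$ (with $i$ the length of the part written while reading $t[k]$) and the suffix $\alpha[i+1\ldots m]$ range over disjoint blocks of positions, so the map $\alpha \mapsto (\alpha[1\ldots i], \alpha[i+1\ldots m])$ is a bijection once $i$ is fixed; this lets me regroup the double sum over $\alpha$ as a product-type sum over a prefix piece and a suffix piece.

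Concretely, I would argue that
\[
\sum_{\alpha\in\myalple{m}}\toalpx{X}{\alpha}{t}{k}{l}^2
\le
\sum_{i=0}^{m}\Bigl(\sum_{\beta\in\myalple{i}}\myprob{X(t[k])=\beta}^2\Bigr)\cdot\Bigl(\sum_{\zeta\in\myalple{m-i}}\toalpx{X}{\zeta}{t}{k+1}{l}^2\Bigr),
\]
where the cross terms are handled by Cauchy–Schwarz in the index $i$ together with the observation that the $i$-blocks partition the string; I would then bound $\sum_{\beta\in\myalple{i}}\myprob{X(t[k])=\beta}^2$ by $\delta^{?}$ times $\sum_{\beta\in\myalple{i}}\myprob{X(t[k])=\beta}$ — but since this peels only one input symbol, the correct bookkeeping is that the one-symbol "self-coincidence" contributes a factor at most $\delta$ (this is exactly where the defining maximum in \eqref{eq:1} enters: reading symbol $t[k]$ and then writing some symbol $b$, or deleting, the probability of two independent runs agreeing on their first output symbol is at most $\mydel{t[k]}+\sum_{q\in W}\myread{t[k]}{q}(\mu_W(q,b,q)+\mu_W(q,b,s))\le\delta$). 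Applying the induction hypothesis $\sum_{\zeta\in\myalple{m-i}}\toalpx{X}{\zeta}{t}{k+1}{l}^2\le \delta^{l-k-1}\sum_{\zeta\in\myalple{m-i}}\toalpx{X}{\zeta}{t}{k+1}{l}$ and then reassembling the sum over $i$ via Proposition~\ref{pro:decomposition} in the reverse direction yields the factor $\delta^{l-k}$ and the claimed bound.

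The step I expect to be the main obstacle is making the cross-term estimate fully rigorous: a naive square-of-a-sum produces terms $\myprob{X(t[k])=\alpha[0\ldots i]}\toalpx{X}{\alpha[i+1\ldots m]}{t}{k+1}{l}\cdot\myprob{X(t[k])=\alpha[0\ldots i']}\toalpx{X}{\alpha[i'+1\ldots m]}{t}{k+1}{l}$ with $i\ne i'$, and for these to be absorbed I need either that they vanish (because a fixed output string $\alpha$ forces a unique split point $i$ only after conditioning on the input length consumed — which is not literally true) or that an inequality such as $2xy\le x^2+y^2$ combined with the per-$i$ bounds suffices. I anticipate that the clean route is: for each fixed $\alpha$, the inner sum over $i$ has at most $m+1$ nonzero terms, apply Cauchy's inequality to pull out a factor $(m+1)$, then sum over $\alpha$; the spurious polynomial factor $(m+1)$ is harmless because it is later absorbed into the polynomial $\Pi$ in Claim~\ref{claim:boundonpart1}. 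Alternatively, and more in the spirit of the stated lemma (which has no polynomial prefactor), one shows the cross terms are genuinely zero because $\myprob{X(t[k])=\beta}$ is the probability of an output of length \emph{exactly} $|\beta|$ that returns to $s$, so the decomposition of a computation of input length $l-k+1$ into "first symbol" and "rest" has a unique break; I would check this carefully against Proposition~\ref{pro:decomposition1} and use whichever version the downstream proof actually needs.
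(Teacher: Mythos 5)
Your overall skeleton---induction on $l-k$, a trivial base case because probabilities are at most one, and peeling off the symbol $t[k]$ via Proposition~\ref{pro:decomposition}---matches the paper's proof. But the one step you yourself flag as the main obstacle, namely the cross terms $i\neq i'$ in the square of $\sum_{i=0}^{m}\myprob{X(t[k])=\alpha[0\ldots i]}\cdot\toalpx{X}{\alpha[i\!+\!1\ldots m]}{t}{k+1}{l}$, is exactly the step you do not resolve, and neither of your fallbacks works. The cross terms do not vanish: a fixed output $\alpha$ can be produced by computations with different split points $i$, so these are distinct, positively weighted paths. Pulling out a factor $(m+1)$ by plain Cauchy--Schwarz (or by $2xy\le x^2+y^2$) destroys the lemma: the statement has no polynomial prefactor, and since the bound must be iterated over all $l-k$ peeling steps and then summed over all $l$ up to infinity in Claim~\ref{claim:boundonpart1}, you would accumulate $((m+1)\delta)^{l-k}$, which is useless. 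Moreover, your displayed intermediate inequality, which keeps only the diagonal terms $\sum_{\beta}\myprob{X(t[k])=\beta}^2\cdot\sum_{\zeta}\toalpx{X}{\zeta}{t}{k+1}{l}^2$, silently discards the nonnegative cross terms and therefore goes in the wrong direction as stated.

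The paper's device, which is missing from your proposal, is a weighted Cauchy--Schwarz (Jensen) step: $\bigl(\sum_i x_iy_i\bigr)^2\le\bigl(\sum_i x_i\bigr)\cdot\bigl(\sum_i x_iy_i^2\bigr)$ with $x_i=\myprob{X(t[k])=\alpha[0\ldots i]}$ and $y_i=\toalpx{X}{\alpha[i\!+\!1\ldots m]}{t}{k+1}{l}$, combined with the observation that \emph{for each fixed $\alpha$} one has $\sum_{i=0}^{m}\myprob{X(t[k])=\alpha[0\ldots i]}\le\myprob{X(t[k])=\epsilon}+\myprob{\alpha[1]\sqsubseteq X(t[k])}\le\delta$, because the events $\{X(t[k])=\alpha[0\ldots i]\}$ for $i\ge 1$ are pairwise disjoint sub-events of $\{\alpha[1]\sqsubseteq X(t[k])\}$ and the definition of $\delta$ with $b=\alpha[1]$ bounds the total. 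This keeps the first factor linear (not squared) in the probabilities, so after summing over $\alpha$ and regrouping one obtains $\delta\cdot\sum_{i}\bigl(\sum_{\alpha_1}\myprob{X(t[k])=\alpha_1}\bigr)\bigl(\sum_{\alpha_2}\toalpx{X}{\alpha_2}{t}{k+1}{l}^2\bigr)$, to which the induction hypothesis applies and Proposition~\ref{pro:decomposition} reassembles the sum, yielding exactly one factor $\delta$ per peeled symbol. Without this weighted form and the fixed-$\alpha$ prefix bound, the induction does not close, so the proposal as written has a genuine gap.
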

\begin{proof}[Proof of Lemma~\ref{lem:corelemma}]
  Let $X$ be a star-like PFA and let $t\in\myalpinf $ be an arbitrary input string. 
%   Assume that~\eqref{eq:1} holds. Particularly, let
%   \[
%   \delta = \max_{a,b\in\myalp}  \left(\mu_R(s,a,s)+\sum_{q\in W}\mu_R(s,a,q)\cdot(\mu_W(q,b,q)+\mu_W(q,b,s))\right)\mbox{.}
%   \]
%   Then 
  For $a\in\myalp$ and  $\alpha\in\myalpall $ satisfying $|\alpha|\ge 1$ 
  we have
  \begin{eqnarray}\nonumber
    \sum_{i=0}^{|\alpha|}\myprob{X(a)\!=\!\alpha[0\ldots i]} 
    &=& \myprob{X(a)=\varepsilon} + \myprob{X(a)=\alpha[1]} + \ldots \\\nonumber
    &\le & \myprob{X(a)=\varepsilon} + \myprob{\alpha[1]\sqsubseteq X(a)}\\\label{eq:lem2:3}
    &\le& \delta\mbox{.}
  \end{eqnarray}
  We prove the lemma by induction on the length $\ell=l-k+1$ of the part of $t$ which is read. 
  First note that for $l<k$  the left and the right hand side of Inequality~\eqref{implication:lem2}
  are equal to zero. This holds particularly, because $X$ is given in canonical form.
  Therefore we may without loss of generality assume that $l\ge k$ holds.\\
  {\bf  Induction basis:} for $\ell=1$ it holds that
  \[
  \sum_{\alpha\in\myalple{m} }\toalpxy{X}{\alpha}{t[l]}^2\le  \sum_{\alpha\in\myalple{m} }\toalpxy{X}{\alpha}{t[l]}\mbox{,}
  \]
  because probabilities are less than one.\\
  {\bf  Induction step:} assume that~\eqref{implication:lem2} holds for $l-k \le \ell-2$. By
  Proposition~\ref{pro:decomposition} we get
  \[
  \sum_{\alpha\in\myalple{m} }\toalpx{X}{\alpha}{t}{k}{l}^2=\sum_{\alpha\in\myalple{m} }\biggl(\sum_{i=0}^{m}\myprob{X(t[k])\!=\!
    \alpha[0\!\ldots\!i]}
  \cdot\toalpx{X}{\alpha[i\!+\!1\ldots\! m]}{t}{k+1}{l}\biggr)^2
  \]
  We apply Jensen's Inequality: let $x_0,\ldots,x_m,x_0,\ldots,x_m \in \Real^+$. Then
  \begin{equation}
    \label{eq:66b}
    \Bigl(\sum_{i=1}^{m}x_i y_i\Bigr)^2 = 
    \Bigl(\sum_{i=0}^{m}x_i\Bigr)^2\cdot\Bigl(\frac{\sum_{i=0}^{m}x_i y_i}{\sum_{i=0}^{m}x_i}\Bigr)^2
    \le \Bigl(\sum_{i=0}^{m}x_i\Bigr)\cdot\Bigl(\sum_{i=0}^{m}x_i y_i^2\Bigr)
  \end{equation}
  For $i\in\{0,\ldots,m\}$ we set
  \[
  x_i = \myprob{X(t[k])=\alpha[0\ldots i]}
  \]
  and
  \[
  y_i=\toalpx{X}{\alpha[i+1\ldots m]}{t}{k+1}{l}
  \]
  in Inequality~\eqref{eq:66b}. Additionally we know from Inequality~\eqref{eq:lem2:3} that
  \[
  \sum_{i=0}^{m}x_i = \sum_{i=0}^{m}\myprob{X(t[k])\!=\!\alpha[0\ldots i]} \le \delta\mbox{.}
  \]
  Together, we get that
  \[
\Bigl(\sum_{i=0}^{m}x_iy_i\Bigr)^2\le
  \Bigl(\sum_{i=0}^{m}x_i\Bigr)\cdot\Bigl(\sum_{i=0}^{m}x_i y_i^2\Bigr)
  \le \delta\cdot \Bigl(\sum_{i=0}^{m}x_i y_i^2\Bigr)\mbox{}
  \]
  which after re-translating gives
  \begin{eqnarray*}
    \lefteqn{\sum_{\alpha\in\myalple{m} }
      \biggl(\sum_{i=0}^{m}\myprob{X(t[k])\!=\! \alpha[0\!\ldots\!i]}
      \cdot\toalpx{X}{\alpha[i\!+\!1\!\ldots\!m]}{t}{k\!+\!1}{l}\biggr)^2}\\
    &&~~~~~~~~~~~\le    
    \sum_{\alpha\in\myalple{m} }
    \delta\cdot\sum_{i=0}^{m}
    \myprob{X(t[k])\!=\!\alpha[0\!\ldots\!i]}\cdot\toalpx{X}{\alpha[i\!+\!1\!\ldots \!m]}{t}{k\!+\!1}{l}^2\mbox{.}
  \end{eqnarray*}
  Using this we proceed as follows:
  \begin{eqnarray}\nonumber
    \lefteqn{   \sum_{\alpha\in\myalple{m} }\delta\cdot\sum_{i=0}^{m}
     \myprob{X(t[k])\!=\!\alpha[0\ldots i]}\toalpx{X}{\alpha[i+1\ldots m]}{t}{k+1}{l}^2}\\\nonumber
  &=&\delta\cdot\sum_{i=0}^{m}\Bigl(\sum_{\alpha_1\in\myalp^i}\myprob{X(t[k])\!=\!\alpha_1}\Bigr)\cdot
   \Bigl(\sum_{\alpha_2\in\myalp^{m-i}}\toalpx{X}{\alpha_2}{t}{i+1}{l}^2\Bigr)\\\label{eq:14}
   &\le&\delta\cdot\sum_{i=0}^{m}\Bigl(\sum_{\alpha_1\in\myalp^i}\myprob{X(t[k])\!=\!\alpha_1}\Bigr)\cdot
   \Bigl(\delta^{l-k-1}\sum_{\alpha_2\in\myalp^{m-i}}\toalpx{X}{\alpha_2}{t}{k+1}{j}\Bigr)\\\nonumber
   &=&\delta^{l-k}\cdot\sum_{i=0}^{m}\Bigl(\sum_{\alpha_1\in\myalp^i}\myprob{X(t[k])\!=\!\alpha_1}\Bigr)\cdot
   \Bigl(\sum_{\alpha_2\in\myalp^{m-i}}\toalpx{X}{\alpha_2}{t}{k+1}{l}\Bigr)\\\nonumber
   &=&
   \delta^{l-k}\cdot\sum_{\alpha\in\myalple{m} }\toalpx{X}{\alpha}{t}{k}{l}\mbox{.}
   \end{eqnarray}
   Here, Inequality~\eqref{eq:14} follows from the induction hypothesis. Altogether, we have shown
   \[
   \sum_{\alpha\in\myalple{m} }\toalpx{X}{\alpha}{t}{k}{l}^2\le
   \delta^{l-k}\cdot\sum_{\alpha\in\myalple{m} }\toalpx{X}{\alpha}{t}{k}{l}\mbox{.}
   \]
This proves the lemma.
\end{proof}

%The above lemma now enables us to prove Claim~\ref{claim:boundonpart1}.
Lemma~\ref{lem:corelemma} tells us that the Conditions given by Statement~$(1)$ of Theorem~\ref{thm:star-like} allows us to bound
$\Phi(t,m,d)$ subject to the probability mass which is induced by the perturbation function $X$ on input
$t\in\myalpinf $ multiplied by a factor of $\delta<0$ for every input symbol which is read in the respective term. 
This is, using Inequality~\eqref{eq:18} from the beginning of this section and the lemma, we can bound   $\Phi(t,m,d)$ as
\[
  \Phi(t,m.d) 
  \le 2\lceil d\cdot m \rceil\cdot \!\sum_{l=1}^{\infty}\sum_{\alpha\in\myalple{m} }\toalpx{X}{\alpha}{t}{1}{l}^2\\\nonumber
  \le \frac{2\lceil d\cdot m \rceil}{\delta}\cdot \!\sum_{l=1}^{\infty}\sum_{\alpha\in\myalple{m} }\delta^{l}\cdot\toalpx{X}{\alpha}{t}{1}{l}\mbox{.}
\]
Now, we can expand (and bound) each addend of the last sum according to Lemma~\ref{lem:expanding-the-sum} as
    \begin{eqnarray}\nonumber
  \lefteqn{\sum_{\alpha\in\myalple{m} }\delta^{l}\cdot\toalpx{X}{\alpha}{t}{1}{l}}\\\label{eq:19}
&\le&%\frac{1}{\delta\tilde{\eta}}\cdot\!\!\!\!\!\!\!\! 
     1/\tilde{\eta}\cdot\!\!\!\!\!\!\!\!
   \sum_{m_1+\ldots+m_l=m}
      \prod_{i=1}^{l}
      \biggl(
      f(m_i)\!\cdot\!\delta \mydel{t[i]}\!+\!
      (1\!-\!f(m_i))\!\cdot\!\!\sum_{q\in W}\delta\myread{t[i]}{q} \eta_q(1\!-\!\eta_q)^{m_i-1}
      \biggr)\mbox{,}
    \end{eqnarray}
    where for $q\in W$, $\eta_q$ was defined as the return probability from state $q$. The minimum such probability was 
    $\tilde{\eta}=\min_{q\in W}\eta_q$ and $f:\Nat\rightarrow\{0,1\}$ was defined to be the indicator function of
    deleted letters. 
    
    \paragraph{Valid expressions}
    Call each non-zero addend in the above sum a {\em  valid expression} of $X$ on $t$. A valid expression is said to be
    of input length $l$ if it corresponds to a set of computations of input length $l$ and is said to be of output length $m$ if
    its corresponding set of computations has output length $m$.  Let $W=\{q_1,\ldots,q_{v}\}$.
    Each valid expression is a product over the set of variables 
    $\{\delta,\mydel{a_1},\mydel{a_r},\myread{a_1}{q_1},\ldots,\myread{a_r}{q_v},\eta_{q_1},\ldots,\eta_{q_1}\}$.
    The products have a regular structure which we exploit in order to bound the term~\eqref{eq:19}: to this end, let
    $ \myalpb$ be the following alphabet, where we interprete variables as letters (we intentionally use the term
    'letter' for an element of the alphabet and 'word' for a sequence of letters in order to avoid confusion)
    \begin{eqnarray*}
      \myalpb &=_\df&  \{\delta\}\\
      &\cup&\{\eta_q~:~q\in W\}\cup\{(1-\eta_q)~:~q\in W\}\\
      &\cup& \{\myread{a}{q}~:~q\in W\text{ and } a\in\myalp\} \cup \{\mydel{a}~:~a\in\myalp\}\mbox{.}
   \end{eqnarray*}
   Then, each valid expression in~\eqref{eq:19} is readily identifiable with a word $w$ over the alphabet 
    $\myalpb$: e.g., the word
    \[
    \delta\;\mydel{a}\;\delta\;\mydel{b}\;\delta\;\myread{a}{q}\;(1-\eta_q)\;(1-\eta_q)\;(1-\eta_q)\;\eta_q\mbox{}
    \]
    is corresponds to a valid expression of $X$ on $t=aba\ldots$ of input length $3$ and output length $4$ and thus
    to a set of computations of $X$ on $t$, where each computations deletes the first two letters and then moves into
    state $ q$ after having read the letter $a$, whereupon it loops three times at $q$ and then moves back to the input state again.
    Clearly, {\em  not all words} over $\myalpb$ are valid expression of $X$ on $t$. Call a word {\em  valid} if it does.
    Let $\mathcal{W}^{(t,l,m)}_X$ be  the set of all {\em  valid words} over $\myalpb$ that have input length
    $l$ and output length $m$, i.e., corresponding to a valid expression of $X$ on $t$ that has the respective input and
    output lengths. I.e., 
    \[
    \mathcal{W}_X^{(t,l,m)}=_\df\{w\in\myalpbfin~:~\text{$w$ is a valid word having input length $l$ and output length $m$}\}\mbox{.}
    \]    
    In order to evaluate the term~\eqref{eq:19} using the framework of valid expressions, 
    we follow the {\em  weighted words model}: we define the weight $\pi(w)$ of a word 
    $w\in\myalpbfin$ as the product of all letters which constitute $w$, where the multiplicity of a letter
    in the product equals the number of times it occurs in the word $w$:
    \[
    \pi(w)=_\df \sum_{x\in\myalpb} x^{|w|_x}\mbox{.}
    \] 
    E.g., for \[     w' = \delta\;\mydel{a}\;\delta\;\mydel{b}\;\delta\;\myread{a}{q}\;(1-\eta_q)\;(1-\eta_q)\;(1-\eta_q)\;\eta_q
    \]
    the example word from above, we have that
    \[
    \pi(w')=\delta^3\cdot\mydel{a}\cdot\mydel{b}\cdot\myread{a}{q}\cdot(1-\eta_q)^3\cdot\eta_q\mbox{,}
    \]
    as $|w'|_\delta=|w'|_{(1-\eta_q)}=3$ and
    $|w'|_{\mydel{a}}=|w'|_{\mydel{b}}=|w'|_{\myread{a}{q}}=|w'|_{\eta_q}=1$. 
    Also, the weight of a set is then defined as the weight of all elements in the set. Clearly, the weight of a valid
    word equals the value of its corresponding valid expression. It is easy to see that for $l,l'\in\Nat_+$ satisfying
    $l\neq l'$ it holds that $\mathcal{W}^{(t,l,m)}_X\cap\mathcal{W}^{(t,l',m)}_X=\emptyset$. Thus
    \begin{eqnarray*}
      \sum_{m_1+\ldots+m_l=m}
      \prod_{i=1}^{l}
      \biggl(
      f(m_i)\!\cdot\!\delta \mydel{t[i]}\!+\!
      (1\!-\!f(m_i))\!\cdot\!\!\sum_{q\in W}\delta\myread{t[i]}{q} \eta_q(1\!-\!\eta_q)^{m_i-1}
      \biggr)
      = \pi\left(\mathcal{W}_X^{(t,l,m)}\right)
    \end{eqnarray*}
    and therefore
    \begin{proposition}
      \label{prop:bound-by-valid-exp}
      \[
      \sum_{l=1}^{\infty}\sum_{\alpha\in\myalple{m} }\delta^{l}\cdot\toalpx{X}{\alpha}{t}{1}{l}
      \le
      \frac{1}{\tilde{\eta}\delta}\cdot\sum_{l=1}^{\infty} \pi\left(\mathcal{W}_X^{(t,l,m)}\right)\mbox{.}
      \]
    \end{proposition}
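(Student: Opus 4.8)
The plan is to assemble the proposition from ingredients that the preceding discussion has already put in place; it is essentially a bookkeeping statement and I would not expect it to need a new idea. The two ingredients are the termwise estimate~\eqref{eq:19}, which bounds $\delta^{l}\sum_{\alpha\in\myalple{m}}\toalpx{X}{\alpha}{t}{1}{l}$ by $1/\tilde{\eta}$ times the decomposition sum $\sum_{m_1+\ldots+m_l=m}\prod_{i=1}^{l}\bigl(f(m_i)\delta\mydel{t[i]}+(1-f(m_i))\sum_{q\in W}\delta\myread{t[i]}{q}\eta_q(1-\eta_q)^{m_i-1}\bigr)$, and the identity displayed just above the statement, which identifies that decomposition sum with the weight $\pi(\mathcal{W}_X^{(t,l,m)})$. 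Granting both, one gets $\delta^{l}\sum_{\alpha\in\myalple{m}}\toalpx{X}{\alpha}{t}{1}{l}\le(1/\tilde{\eta})\cdot\pi(\mathcal{W}_X^{(t,l,m)})$ for every $l\in\Nat_+$, and summing over $l$ yields the proposition.

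First I would verify the identity underlying the estimate: expanding the product $\prod_{i=1}^{l}(\cdots)$ and the sum over decompositions $m_1+\ldots+m_l=m$ produces exactly one non-vanishing term for each pair consisting of a decomposition $(m_1,\ldots,m_l)$ and, for every index $i$ with $m_i>0$, a choice of an output state $q_i$ reachable from $s$ on $t[i]$. Reading such a term off factor by factor gives the word $w=b_1\cdots b_l$ over $\myalpb$, where $b_i=\delta\,\mydel{t[i]}$ if $m_i=0$ and $b_i=\delta\,\myread{t[i]}{q_i}\,(1-\eta_{q_i})^{m_i-1}\,\eta_{q_i}$ if $m_i>0$. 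I would check that this assignment is a bijection onto $\mathcal{W}_X^{(t,l,m)}$ and that the term it came from equals $\pi(w)$, since $\pi(w)$ is by definition the product of the letters of $w$ taken with multiplicities; summing over $w$ then gives $\sum_{m_1+\ldots+m_l=m}\prod_{i=1}^{l}(\cdots)=\pi(\mathcal{W}_X^{(t,l,m)})$.

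Next I would perform the summation over $l$. Distinct input lengths give disjoint word sets, $\mathcal{W}_X^{(t,l,m)}\cap\mathcal{W}_X^{(t,l',m)}=\emptyset$ for $l\ne l'$, because the input length of a valid word is recoverable from it (e.g.\ as $|w|_{\delta}$), so adding up the per-$l$ inequalities gives $\sum_{l\ge1}\sum_{\alpha\in\myalple{m}}\delta^{l}\toalpx{X}{\alpha}{t}{1}{l}\le(1/\tilde{\eta})\sum_{l\ge1}\pi(\mathcal{W}_X^{(t,l,m)})$. Finally $\delta<1$ (which is exactly assumption~$(1)$ of Theorem~\ref{thm:star-like}, as $\delta$ is the maximum of the left-hand sides there) gives $1/\tilde{\eta}\le 1/(\tilde{\eta}\delta)$, which is the claimed bound. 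All quantities here are non-negative, so the rearrangements and the interchange of countable sums are valid in $[0,+\infty]$ with no separate convergence argument; finiteness of the right-hand side is not needed for this proposition and is obtained only later through the generating function $\tilde{\mathcal{Z}}_X$.

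The only place I expect genuine care to be needed is the bijection of the first step, and specifically the last read letter $t[l]$. For $i<l$ the factor of the product encodes ``the computation on $t[i]$ writes exactly $m_i$ symbols'', whereas for $i=l$ it must encode ``the computation on $t[l]$ writes at least $m_l$ symbols''; this asymmetry is precisely what produced the factor $1/\tilde{\eta}$ in Lemma~\ref{lem:expanding-the-sum}, so the block $b_l$ associated with $t[l]$ must be formed consistently with that over-approximation in order for the single letter alphabet $\myalpb$ and the single weight $\pi$ to apply uniformly across all $i$. One must also take care that a block with $m_i=0$ maps to the two-letter word $\delta\,\mydel{t[i]}$ and is not silently dropped, so that $|w|_{\delta}$ indeed equals the input length $l$.
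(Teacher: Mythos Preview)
Your proposal is correct and follows exactly the route the paper takes: the proposition is stated in the paper with nothing more than ``and therefore'' after displaying the identity $\sum_{m_1+\ldots+m_l=m}\prod_{i}(\cdots)=\pi(\mathcal{W}_X^{(t,l,m)})$, and you have simply spelled out that this identity together with~\eqref{eq:19} gives the per-$l$ bound $\delta^{l}\sum_{\alpha}\toalpx{X}{\alpha}{t}{1}{l}\le(1/\tilde\eta)\,\pi(\mathcal{W}_X^{(t,l,m)})$, then summed over $l$ and weakened by $\delta<1$ to absorb the extra $1/\delta$. Your added care about the bijection, the handling of the last block $b_l$, and the nonnegativity justification for rearranging sums goes beyond what the paper writes but changes nothing in the argument; the only cosmetic discrepancy is the letter order within a block (you write $(1-\eta_{q_i})^{m_i-1}\eta_{q_i}$, the regular specification has $\eta_{q_i}(1-\eta_{q_i})^*$), which is irrelevant since $\pi$ depends only on the multiset of letters---and indeed the paper's own example word uses your order.
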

    Let $\mathcal{W}_X^{(t,l)}\supset \mathcal{W}_X^{(t,l,m)}$ be the set of all valid words having input length $l$,
    but {\em  arbitrary output length}. The set     $\mathcal{W}_X^{(t,l)}$ is a {\em  regular language}: 
    let '$|$' denote the choice operator and '$^*$' denote the sequence operator for
    a possibly zero number of repetitions of the respective letter. The 
    regular specification is as follows:
     \begin{eqnarray*}
         \mathcal{W}_X^{(t,l)}= \{&w&\in\myalpball\\
         &w&:=W_1\;W_2\;\dots\;W_l\\
         &W_1&:=\delta\;\mydel{t[1]}\;
         \;|\;\delta\;\myread{t[1]}{q_1}\;\eta_{q_1}\;(1-\eta_{q_1})^*\;|\;\ldots\;
         |\;\delta\;\myread{t[1]}{q_{v}}\;\eta_{q_{v}}\;(1-\eta_{q_{v}})^*\\
         &\vdots&\\
         &W_l&:=\delta\;\mydel{t[l]}\;
         \;|\;\delta\;\myread{t[l]}{q_1}\;\eta_{q_1}\;(1-\eta_{q_1})^*\;|\;\ldots\;
     |\;\delta\;\myread{t[l]}{q_{v}}\;\eta_{q_{v}}\;(1-\eta_{q_{v}})^*
     \}
   \end{eqnarray*}
   Here, we use $W_1,\ldots,W_l$ as {\em  placeholder} for the below defined regular expressions. 
   Clearly, for all words $ w\in\mathcal{W}_X^{(t,l)}$, it holds that $\sum_{i=1}^{l}|w|_{\delta} = l$.
   Now, we can formally define
    \[
    \mathcal{W}_X^{(t,l,m)}=_\df\{w\in \mathcal{W}_X^{(t,l)}~:~\sum_{i=1}^{v}(|w|_{\eta_{q_i}}+|w|_{(1-\eta_{q_i})}) =m\}\mbox{.}
    \]
 
     \paragraph{Embedding valid words}
     In order to evaluate the sum over all valid words of $X$ on $t$ of output length $m$, we first construct a
     family of structurally simpler languages such for each set $\mathcal{W}_X^{(t,l,m)}$ there exists a corresponding
     set in the structually simpler family having the same weight and such that the set in the family are still
     disjoint. Thus, the sum over the weights of all such new sets equals weight of valid words of $X$ on $t$ of
     output length $m$. Still, the sum over the weights of these new sets depends on the structure of the input string
     $t$ which is unknown. Thus, in order to get rid of this dependence on $t$, we do not evaluate the sum over the
     weights of all new sets exactly, but we over-count slightly. This over-counting can once again be best expressed by
     constructing a structually even simpler language which contains each word that we need to account for (ans some
     more words). To this end,  let  $\mathcal{Y}_X^{(t,l,m)}$ be
     the new set of words. For a fixed input string $t\in\myalpinf$ and a prefix $t[1\ldots l]$ let for $i\in\{1,\ldots, r\} $,
     \[
     l_i=|t[1\ldots l]|_{a_i}
     \] 
     be the number of occurrences of the symbol $a_i$ in the prefix $t[1\ldots l]$. For each such prefix we give a canonical input string
     $t'$ such that the corresponding set of valid words is structually simpler and 
     such that there is a function $g:\myalpbfin\rightarrow \myalpbfin$ that describes a bijection from
     $\mathcal{W}_X^{(t,l,m)}$ to $\mathcal{W}_X^{(t',l,m)}$, from which it follows that
     $\pi\left(\mathcal{W}_X^{(t,l,m)}\right)=\pi\left(\mathcal{W}_X^{(t',l,m)}\right)$.
     The string $t'$ is defined as
     \[
     t'=\underbrace{a_1\;\ldots\;a_1}_{l_i\text{ times}}\;\ldots\;\underbrace{a_r\;\ldots\;a_r}_{l_r\text{ times}}
     \]
     The corresponding sets $\mathcal{W}_X^{(t',l,m)}$ is then such that  for each word 
     $w\in\mathcal{W}_X^{(t,l,m)}$ there is a word $w'\in\mathcal{W}_X^{(t',l,m)}$ that is composed
     of exactly the same set of sub-words, but in different ordering: 
     Consider the decomposition of $w$ as
     \[
     w = W_1\;\ldots W_l\mbox{,}
     \]
     where the  $i$-th sub-word $W_i$ for $i\in\{1,\ldots,l\}$ corresponded to the symbol $t[i]$; Now, for $w'$ with the
     decomposition as 
     \[
     w' = W'_1\;\ldots W'_l
     \]
     it holds that the the first $l_1$ sub-words $W'_1,\ldots,W'_{l_1}$ correspond to the symbol $a_1$, 
     the next $l_2$ sub-words $W'_{l_1+1},\ldots,W'_{l_1+l_2}$ correspond to the symbol $a_2$, and so on. 
     Thus, the function $g$ is a permutation of sub-words: assume w.l.o.g. that the $i$-th subword $W_i$ 
     corresponds to the symbols $t[i]=a_j$. Then $W_i$ is mapped to the position
     $\sum_{k=1}^{j-1}l_k + |t[1\ldots i]|_{a_j}$ in $w'$. Such a permutation is clearly {\em  weight-preserving}.
     Now, define for a string $t$ the set $\mathcal{Y}_X^{(t,l,m)}$ as 
     \[
     \mathcal{Y}_X^{(t,l,m)} = _\df \mathcal{W}_X^{(t',l,m)}~~\text{where $t'$ is the canonical input string
       corresponding to $t[1\ldots l]$}\mbox{.}
     \]
     Clearly, for $l,l'\in\Nat_+$ is still holds that $     \mathcal{Y}_X^{(t,l,m)}\cap
     \mathcal{Y}_X^{(t,l',m)}=\emptyset$.
     \begin{proposition}
       \label{prop:lang-valid-ordered}
       \[
       \pi\left(\mathcal{W}_X^{(t,l,m)}\right)=\pi\left(\mathcal{Y}_X^{(t,l,m)}\right)
       \]
    \end{proposition}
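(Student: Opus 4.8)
The plan is to make precise the weight-preserving bijection $g$ that was sketched in the discussion preceding the statement, and to verify its three relevant properties: that $g$ is a well-defined bijection $\mathcal{W}_X^{(t,l,m)}\to\mathcal{W}_X^{(t',l,m)}=\mathcal{Y}_X^{(t,l,m)}$, that it preserves input and output length, and that it leaves the weight of every word unchanged. Then the claim follows by summing, $\pi\bigl(\mathcal{W}_X^{(t,l,m)}\bigr)=\sum_{w\in\mathcal{W}_X^{(t,l,m)}}\pi(w)=\sum_{w\in\mathcal{W}_X^{(t,l,m)}}\pi(g(w))=\sum_{w'\in\mathcal{Y}_X^{(t,l,m)}}\pi(w')=\pi\bigl(\mathcal{Y}_X^{(t,l,m)}\bigr)$.

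First I would fix $t\in\myalpinf$ and $l\in\Nat_+$, set $l_i=|t[1\ldots l]|_{a_i}$ for $i\in\{1,\ldots,r\}$, and recall that $t'$ is the canonical string in which the $l_i$ copies of $a_i$ are grouped in the order $a_1,\ldots,a_r$. Every word $w\in\mathcal{W}_X^{(t,l)}$ admits a \emph{unique} decomposition $w=W_1\cdots W_l$ into the sub-words prescribed by the regular specification of $\mathcal{W}_X^{(t,l)}$: each alternative for a placeholder begins with the letter $\delta$ and contains no further occurrence of $\delta$, so the block boundaries are exactly the positions of $\delta$ in $w$, and the $i$-th block $W_i$ is one of the alternatives attached to the symbol $t[i]$. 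Define $g$ by moving the block $W_i$ — say $t[i]=a_j$ — into slot number $\sum_{k=1}^{j-1}l_k+|t[1\ldots i]|_{a_j}$ of the image word and concatenating the blocks in slot order. This assignment is precisely the \emph{stable} sort of $\{1,\ldots,l\}$ under the key $i\mapsto t[i]$: blocks attached to the same symbol keep their relative order, and blocks attached to different symbols are grouped according to the symbol order. Hence $g$ permutes $\{1,\ldots,l\}$ and is invertible, and — the specification of $\mathcal{W}_X^{(t',l)}$ being likewise $\delta$-delimited and unambiguous — $g$ is a bijection $\mathcal{W}_X^{(t,l)}\to\mathcal{W}_X^{(t',l)}$.

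Next I would check that $g$ respects validity and output length. In $t'$ the slots $\sum_{k<j}l_k+1,\ldots,\sum_{k\le j}l_k$ all carry the symbol $a_j$, the image slot of $W_i$ (when $t[i]=a_j$) is one of them, and the admissible blocks at a slot depend only on the symbol read there — the alternatives being $\delta\,\mydel{a_j}$ and $\delta\,\myread{a_j}{q}\,\eta_q\,(1-\eta_q)^*$ for $q\in W$ — so a block legal at position $i$ of $t$ remains legal at its image position in $t'$; therefore $g(w)\in\mathcal{W}_X^{(t',l)}$. Moreover the output length $\sum_{i=1}^{v}\bigl(|w|_{\eta_{q_i}}+|w|_{(1-\eta_{q_i})}\bigr)$ is a sum of contributions of the individual blocks and is thus invariant under reordering, so $g$ restricts to a bijection $\mathcal{W}_X^{(t,l,m)}\to\mathcal{W}_X^{(t',l,m)}=\mathcal{Y}_X^{(t,l,m)}$. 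Finally, weight preservation is immediate: $\pi(w)$ is the product of the letters of $w$ counted with multiplicity, and $g$ only permutes the blocks $W_1,\ldots,W_l$, so $|g(w)|_x=|w|_x$ for every letter $x\in\myalpb$ and hence $\pi(g(w))=\pi(w)$.

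The argument is entirely routine; the only point I would be careful to state explicitly is that $g$ is a bijection of \emph{words} and not merely of block decompositions — one needs the uniqueness of the $\delta$-delimited decomposition on both sides so that $g$ is well defined on words and $g^{-1}$ is too. I do not expect any genuine obstacle beyond this bookkeeping.
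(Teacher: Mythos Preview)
Your proposal is correct and follows exactly the approach the paper sketches in the paragraph preceding the proposition: the same stable-sort permutation of sub-words $W_i\mapsto$ slot $\sum_{k<j}l_k+|t[1\ldots i]|_{a_j}$ is used, and the paper simply asserts that ``such a permutation is clearly weight-preserving'' without further argument. Your addition of the $\delta$-delimiter observation to justify uniqueness of the block decomposition (and hence that $g$ is a well-defined bijection on \emph{words}) is a detail the paper leaves implicit, but it is the right point to make explicit.
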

    Now, we get rid of the dependency on the input string $t$: let $\mathcal{Y}_X$ be the following regular language over the alphabet 
    $\myalpb$, where again $Y_1,\ldots,Y_l$ are placeholder for regular expressions:
    \begin{eqnarray*}
      \mathcal{Y}_X= \{&w&\in\myalpball\\
      &w&:=Y_1\;Y_2\;\dots\;Y_r\\
      &Y_1&=\left(\delta\;\mydel{a_1}\;|\;
        \delta\;\myread{a_1}{q_1}\;\eta_{q_1}\;(1-\eta_{q_1})^*\;|\;\ldots\;|
        \delta\;\myread{a_1}{q_v}\;\eta_{q_v}\;(1-\eta_{q_v})^*
      \right)^*\\
      &\vdots&\\
      &Y_r&:=\left(\delta\;\mydel{a_r}\;|\;
        \delta\;\myread{a_r}{q_1}\;\eta_{q_1}\;(1-\eta_{q_1})^*\;|\;\ldots\;|
        \delta\;\myread{a_r}{q_v}\;\eta_{q_v}\;(1-\eta_{q_v})^*
      \right)^*     \}
    \end{eqnarray*}
    Clearly, $\mathcal{Y}^{(t,l,m)}\subset \mathcal{Y}_X$. We have, 
    \begin{eqnarray*}
      \mathcal{Y}_X^{(t,l,m)}&=& \{w \in\mathcal{Y}_X~:~\text{there exits $w'\in\mathcal{Y}^{(t,l,m)}$ such that $w=g(w')$}\}\\
      &=&   \{w\in\mathcal{Y}_X~:~\sum_{i=1}^{v}(|w|_{\eta_{q_i}}+|w|_{(1-\eta_{q_i})})=m\text{ and }\sum_{i=1}^{v}|w|_\delta=l\}\mbox{.}
    \end{eqnarray*}
    Now, define
    \[
    \mathcal{Y}^{(m)}_X=_\df 
    \{w\in\mathcal{Y}_X~:~\sum_{i=1}^{v}(|w|_{\eta_{q_i}}+|w|_{(1-\eta_{q_i})})=m\}\mbox{.}
    \]
   Clearly, for $m\in\Nat_+$ it holds that$     \mathcal{Y}_X^{(t,l,m)}\subset      \mathcal{Y}^{(m)}_X
   $
   and thus we have
   \begin{equation}
     \label{eq:15}
    \sum_{l=1}^{\infty}\pi\left(\mathcal{Y}_X^{(t,l,m)}\right) \le \pi\left(     \mathcal{Y}^{(m)}_X\right)\mbox{.}
   \end{equation}
   Altogether, we have established the following relation between the sum over all terms~\eqref{eq:19} over all
   $l\in\Nat_+$ and the weight of the set $\mathcal{Y}_x^{(m)}$:
   \begin{lemma}
     For $m\in\Nat_+$ and $t\in\myalpinf $, 
     \label{lem:final-words}\[
     \sum_{l=1}^{\infty}\sum_{\alpha\in\myalple{m} }\delta^{l}\cdot\toalpx{X}{\alpha}{t}{1}{l}
     \le
     1/(\tilde{\eta}\cdot\delta)\cdot\pi\left(\mathcal{Y}^{(m)}_X \right)\mbox{.}\]
     \end{lemma}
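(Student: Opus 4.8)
The plan is to obtain the bound by composing the three statements proved immediately above, since together they already cover the entire chain. First I would invoke Proposition~\ref{prop:bound-by-valid-exp}, which expresses the left-hand side in terms of the weights of the valid-word languages:
\[
\sum_{l=1}^{\infty}\sum_{\alpha\in\myalple{m}}\delta^{l}\cdot\toalpx{X}{\alpha}{t}{1}{l}\le\frac{1}{\tilde{\eta}\delta}\cdot\sum_{l=1}^{\infty}\pi\left(\mathcal{W}_X^{(t,l,m)}\right).
\]
Next, Proposition~\ref{prop:lang-valid-ordered} lets me replace, term by term, each $\pi\left(\mathcal{W}_X^{(t,l,m)}\right)$ by $\pi\left(\mathcal{Y}_X^{(t,l,m)}\right)$, the weight of the corresponding ``sorted'' language over the canonical input string, without changing the value. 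Finally, inequality~\eqref{eq:15} --- valid because the $\mathcal{Y}_X^{(t,l,m)}$ are pairwise disjoint for distinct $l$ and all contained in $\mathcal{Y}^{(m)}_X$, so their total weight is at most the weight of the super-language --- upgrades the sum over $l$ to the single term $\pi\left(\mathcal{Y}^{(m)}_X\right)$. Chaining the three steps gives precisely the asserted inequality with constant $\frac{1}{\tilde{\eta}\delta}$.

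The one point worth spelling out is that $\mathcal{Y}^{(m)}_X$ is an infinite regular language and $\pi$ is a sum of nonnegative terms over possibly infinitely many words, so all the rearrangements above are series manipulations of nonnegative numbers; they are legitimate regardless of convergence, the inequalities being read in $[0,+\infty]$. Finiteness of $\pi\left(\mathcal{Y}^{(m)}_X\right)$ is not needed for this lemma --- it is exactly what the generating-function analysis of $\tilde{\mathcal{Z}}_X(z)$ in the following subsection will extract, together with the exponential decay rate governed by its pole $\tilde{z}$ of minimum modulus.

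I do not anticipate a real obstacle: every ingredient --- the per-$l$ expansion via Lemma~\ref{lem:expanding-the-sum}, the weight-preserving permutation $g$ of sub-words, and the containment in the $t$-independent super-language --- has already been established, so this lemma is a bookkeeping step that packages the reduction ``valid expression $\rightarrow$ valid word $\rightarrow$ ordered valid word $\rightarrow$ super-language $\mathcal{Y}^{(m)}_X$'' into a single inequality, on which the rational generating function $\tilde{\mathcal{Z}}_X$ can then be brought to bear to finish the proof of Claim~\ref{claim:boundonpart1}.
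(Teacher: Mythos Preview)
Your proposal is correct and follows exactly the same three-step chain as the paper's proof: Proposition~\ref{prop:bound-by-valid-exp}, then Proposition~\ref{prop:lang-valid-ordered}, then inequality~\eqref{eq:15}. Your additional remark about nonnegative series and reading the inequality in $[0,+\infty]$ is a sound clarification that the paper omits.
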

     \begin{proof}
       The lemma is easy to proof: 
       \begin{eqnarray*}
       \sum_{l=1}^{\infty}\sum_{\alpha\in\myalple{m} }\delta^{l}\cdot\toalpx{X}{\alpha}{t}{1}{l}
       &=&1/(\tilde{\eta}\cdot\delta)\cdot\sum_{l=1}^{\infty}\pi\left(\mathcal{W}_X^{(t,l,m)}\right)
       ~~~~~~\text{by Proposition~\ref{prop:bound-by-valid-exp}}\\
       &=&1/(\tilde{\eta}\cdot\delta)\cdot\sum_{l=1}^{\infty}\pi\left(\mathcal{Y}_X^{(t,l,m)}\right)
       ~~~~~~~\text{by Proposition~\ref{prop:lang-valid-ordered}}\\
       &\le&1/(\tilde{\eta}\cdot\delta)\cdot
       \pi\left(\mathcal{Y}^{(m)}_X \right)~~~~~~~~~~~~~~~~~~~~~~~~~~~\text{by~\eqref{eq:15}}
     \end{eqnarray*}
       This proves the lemma.
     \end{proof}
     
     \paragraph{A crude bound on $\pi\left(\mathcal{Y}^{(m)}_X \right)$ using the saddle point method}
     In order to get a bound on the term $\pi\left(\mathcal{Y}^{(m)}_X \right)$,
     we proceed as follows: after having given the regular specification of the set $\mathcal{Y}_X$, we
     translate this specification into the language of generating
     functions, where we use the variable $z$ to mark the length, i.e., 
     the number $\sum_{i=1}^{v}(|w|_{\eta_{q_i}}+|w|_{(1-\eta_{q_i})})$ for a word $w\in\mathcal{Y}_X$. 
     Also, we symbolically use the letters as variables.
     A regular specification for a set of combinatorial objects translates into a {\em  invariably positive rational}
     generating function, where we have the following relationship between the operators of the regular description
     and the algebraic operators: let $a,b\in\myalpb$. Then {\em  union}, i.e., '$a|b$', corresponds to
     '${a}+{b}$', {\em  combinatorial product}
     , i.e., '$a\; b$', corresponds to '${a} \cdot {b}$' and {\em  sequence building}, i.e., '$a^*$', corresponds to
     $1/(1-{a})$ (where $a\neq \epsilon$). Thus, the regular specification of the language $\mathcal{Y}_X$ readily lends
     itself to the following  ordinary multivariate generating function. 
     
     \begin{lemma}\label{lem:generatingf}
     The ordinary multivariate generating function corresponding to the language $\mathcal{Y}_X$
     is
     \[
     \mathcal{Z}_X(\mathbf{z},{\delta},\vec{{\eta}},\vec{{\rho}})
     =\prod_{i=1}^{r}\left(
         1-
         \delta\!\cdot\!\mydel{a_i} - 
         \sum\limits_{j=1}^{v} \frac{\delta\!\cdot\!\myread{a_i}{q_j}\!\cdot\!\eta_{q_j}\!\cdot\!\mathbf{z}}{1-(1-\eta_{q_j})\!\cdot\!\mathbf{z}}
       \right)^{-1}
     \]
     where $\vec{{\eta}}=({\eta}_{q_1},\ldots,{\eta}_{q_k},{(1-\eta_{q_1})},\ldots,{(1-\eta_{q_v})})$ and
     $\vec{\rho}=({\mydel{t[1]}},\ldots,{\mydel{t[i]}},{\myread{t[1]}{q_1}},\ldots{\myread{t[l]}{q_v}})$. Here the
     variable $z$ marks the number $\sum_{i=1}^{v}(|w|_{\eta_{q_i}}+|w|_{(1-\eta_{q_i})})$ and the other variables mark
     the number of occurences of the respective letters.
   \end{lemma}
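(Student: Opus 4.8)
The plan is to read off $\mathcal{Z}_X$ directly from the regular specification of $\mathcal{Y}_X$ via the standard symbolic transfer dictionary for unambiguous regular languages (see~\cite{Flajolet-2007}): disjoint choice `$|$' becomes a sum, concatenation becomes a product, and the sequence operator `$^*$' becomes the quasi-inverse $A\mapsto 1/(1-A)$ whenever the starred expression contains no empty word. I would treat every element of $\myalpb$ as a formal commuting indeterminate and, in addition, attach one extra factor of $z$ to each occurrence of a letter of the form $\eta_q$ or $(1-\eta_q)$; since by definition a word $w$ has output length $\sum_{i=1}^{v}(|w|_{\eta_{q_i}}+|w|_{(1-\eta_{q_i})})$, this makes $z$ mark exactly the output length, while the letters $\delta$, $\mydel{a}$ and $\myread{a}{q}$ carry no power of $z$.

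First I would translate a single atom of an inner alternation inside $Y_i$. The choice $\delta\;\mydel{a_i}$ has weight $\delta\cdot\mydel{a_i}$ and output length $0$, hence contributes the monomial $\delta\cdot\mydel{a_i}$. The choice $\delta\;\myread{a_i}{q_j}\;\eta_{q_j}\;(1-\eta_{q_j})^*$ contributes $\delta\cdot\myread{a_i}{q_j}$ (constant in $z$), times $\eta_{q_j}\cdot z$ for the single returning output symbol, times the sequence $\sum_{k\ge 0}\bigl((1-\eta_{q_j})\,z\bigr)^k = \frac{1}{1-(1-\eta_{q_j})\,z}$ arising from the star of the single looping output symbol; the product is $\frac{\delta\cdot\myread{a_i}{q_j}\cdot\eta_{q_j}\cdot z}{1-(1-\eta_{q_j})\,z}$. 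Summing the $v+1$ choices, the inner alternation translates to
\[
A_i(z) =_\df \delta\cdot\mydel{a_i} + \sum_{j=1}^{v}\frac{\delta\cdot\myread{a_i}{q_j}\cdot\eta_{q_j}\cdot z}{1-(1-\eta_{q_j})\,z}\mbox{.}
\]
Every summand of $A_i$ contains the letter $\delta$, so $A_i$ has no constant term and the outer star $Y_i=(\cdots)^*$ is admissible, translating to $(1-A_i(z))^{-1}$. Finally $\mathcal{Y}_X = Y_1\,Y_2\,\cdots\,Y_r$ is a concatenation, so its multivariate generating function is $\prod_{i=1}^{r}(1-A_i(z))^{-1}$, which is precisely the claimed $\mathcal{Z}_X(\mathbf z,\delta,\vec\eta,\vec\rho)$.

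The one point that requires an argument rather than a computation — and thus the main obstacle — is that the regular specification of $\mathcal{Y}_X$ is \emph{unambiguous}, so that the transfer dictionary yields the weight-exact generating function (each word $w$ counted once with weight $\pi(w)$). This holds because the parse of any $w\in\myalpbfin$ is forced: every atom begins with the letter $\delta$, which occurs nowhere else inside an atom, so the positions of the $\delta$'s split $w$ uniquely into atoms; within an atom the letter following $\delta$ is either some $\mydel{a_i}$, fixing it as a first-kind atom of block $Y_i$, or some $\myread{a_i}{q_j}$, fixing it as a second-kind atom of block $Y_i$ whose remaining letters are exactly the allowed output letters $\eta_{q_j},(1-\eta_{q_j})$ of that $q_j$; and the atoms are distributed among $Y_1,\dots,Y_r$ according to the index $i$ they carry, so empty blocks cause no ambiguity. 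The cosmetic ordering of $\eta_{q_j}$ relative to the $(1-\eta_{q_j})$'s within an atom is immaterial, since $\pi$ depends only on the multiset of letters and the indeterminates commute, leaving the weight and hence $\mathcal{Z}_X$ unchanged. This establishes Lemma~\ref{lem:generatingf}.
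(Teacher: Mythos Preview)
Your proof is correct and follows essentially the same route as the paper: both translate the regular specification of $\mathcal{Y}_X$ piece by piece via the symbolic transfer dictionary (choice $\mapsto$ sum, concatenation $\mapsto$ product, star $\mapsto$ quasi-inverse), arriving at the same inner term $A_i(z)$ and the same product over $i$. The paper introduces auxiliary Latin-letter variables and resubstitutes at the end, whereas you work directly with the symbolic letters, but the computation is identical. Your explicit discussion of unambiguity (using $\delta$ as an atom delimiter and the second letter to identify the block $Y_i$ and the branch within it) is a point the paper passes over in silence; it is a welcome addition rather than a deviation.
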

\begin{proof}
     In order to make the proof more readable, we mark the number of occurences of a letter by a variable with the
     corresponding latin symbol.
     \begin{itemize}
     \item $\delta$ is marked by $d$.
     \item For $i\in\{1,\ldots,r\}$, $\mydel{a_i}$ is marked by $r_{a_i}$.
     \item For $i\in\{1,\ldots,r\}$ and $j\in\{1,\ldots,v\}$, $\myread{a_i}{q_j}$ is marked by $r_{a_i,q_j}$.
     \item For $j\in\{1,\ldots,v\}$, $\eta_{q_j}$ is marked by $e_{q_j}$ and
     \item For $j\in\{1,\ldots,v\}$, $(1-\eta_{q_j})$ is marked by $(1-e_{q_j})$.
     \end{itemize}
     Also, $z$ marks the number $\sum_{i=1}^{v}(|w|_{\eta_{q_i}}+|w|_{(1-\eta_{q_i})})$.
     Consider the $i$-th addend in the product for $i\in\{1,\ldots r\}$. The set of words $\{\delta\;\mydel{a_i}\}$ is
     generated by the mulitvariate generating function (MGF)
     \[
     g_i(d,r_{a_i}) = d\cdot r_{a_i}\mbox{.}
     \]
     For $j\in\{1,\ldots,v\}$, the set of words 
     \[
     \{\delta\;\myread{a_i}{q_j}\;\eta_{q_j}, \delta\;\myread{a_i}{q_j}\;\eta_{q_j}(1\!-\!\eta_{q_j}),\delta\;\myread{a_i}{q_j}\;\eta_{q_j}(1\!-\!\eta_{q_j})(1\!-\!\eta_{q_j}),\ldots\}
     \]
     is generated by the MGF
     \[
     f_{ij}(z,d,r_{a_i,q_j},e_{q_j},(1\!-\!e_{q_j})) = \frac{d\cdot r_{a_i,q_j}\cdot e_{q_j}\cdot z}{1\!-\!(1\!-\!e_{q_j})\cdot z}
     \]
     Now the words corresponding to state $q_j$ are generated by the regular expression
     \begin{equation}
       \label{eq:10}
     \left(\delta\;\mydel{a_{i}}\;|\;
       \delta\;\myread{a_{i}}{q_{i}}\;\eta_{q_{i}}\;(1\!-\!\eta_{q_{i}})^*\;|\;\ldots\;|
       \delta\;\myread{a_{i}}{q_v}\;\eta_{q_v}\;(1\!-\!\eta_{q_v})^*
     \right)^*     
    \end{equation}
    are generated by the function
     \begin{eqnarray*}
\lefteqn{     f_i(z,d,r_{a_i},r_{a_i,q_1},\ldots,r_{a_i,q_v},e_{q_1},\ldots,e_{q_1},(1\!-\!e_{q_1}),\dots,(1\!-\!e_{q_1}))}\\
     &&~~~~~~~~~~~~~~~~~=
     \left(1 - g_i(d,r_{a_i}) -\sum_{j=1}^{v}     f_{ij}(z,d,r_{a_i,q_j},e_{q_j},(1\!-\!e_{q_j}))\right)^{-1}\\
     &&~~~~~~~~~~~~~~~~~=
     \left(1 - d\cdot r_{a_i} -\sum_{j=1}^{v}\frac{d\cdot r_{a_i,q_j}\cdot e_{q_j}\cdot z}{1\!-\!(1\!-\!e_{q_j})\cdot z}\right)^{-1}\mbox{.}
     \end{eqnarray*}
     Now, the set $\mathcal{Y}_X$, which is defined by a regular expression that is the concatenation of the regular
     expression~\eqref{eq:10} for state $q_j$  for $j\in\{1,\ldots,v\}$ is generated the the product over the
     corresponding MGF's. Resubstituting the respective variables proves the Lemma.
   \end{proof}

   In order to evaluate  $\pi\left(\mathcal{Y}^{(m)}_X \right)$, we follow the {\em  weighted words model}: this is, the former variables
   are treated as parameters.in the new generating function. The respective function is then
   \[   
     \tilde{\mathcal{Z}}_X(z)
     =\prod_{i=1}^{r}\left(
         1-
         \delta\!\cdot\!\mydel{a_i} - 
         \sum\limits_{j=1}^{v} \frac{\delta\!\cdot\!\myread{a_i}{q_j}\!\cdot\!\eta_{q_j}\!\cdot\! z}{1-(1-\eta_{q_j})\!\cdot\! z}
       \right)^{-1}\mbox{.}
   \]
   Now, there are (at least) two ways to proceed in order to derive the weight 
   $\pi\left(\mathcal{Y}^{(m)}_X \right)$: since the function $\tilde{\mathcal{Z}}_X(z)$ is a rational function,
   it lends itself to a partial fraction decomposition. Then, one can easily translate this form back into
   a formal power series  $A(z)=\sum_{i=1}^{\infty}a_iz^{i}$ and $\pi\left(\mathcal{Y}^{(m)}_X \right)$ equals the coefficient at $z^m$ of this power
   series, i.e.,
   \[
   \pi\left(\mathcal{Y}^{(m)}_X \right) = \left[z^m\right]A(z) = a_m\mbox{.}
   \]
   Since a partial fraction decomposition of the function $\tilde{\mathcal{Z}}_X(z)$ is quite involved, we do
   not follow this vein here: instead, we use the following Theorem on the expansion of rational functions
   
   \begin{theorem}[Expansion of rational functions]{{\em [Theorem~IV in~\cite{Flajolet-2007}]}}
     \label{thm:rational-expansion}
     If $f(z)$ is a rational functions that is analytic at zero and has poles at $z_1\le z_2\le\ldots\le z_k$ then
     its coefficients are a sum of  {\em  exponential polynomials}: there exist $k$ polynomials $\Pi_1(z),\ldots,\Pi_k(z)$
     such that for $m$ larger than some fixed $m_0$,
     \[
     \left[z^m\right]f(z)=\sum_{j=1}^{k}\Pi_j(m)\cdot \left(\frac{1}{z_j}\right)^m\mbox{.}
     \]
     Furthermore, the polynomial $Pi_j$ has degree equal to the order of the pole at $z_j$ minus one.
   \end{theorem}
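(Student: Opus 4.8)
The plan is to prove the statement by the classical partial fraction decomposition of $f$. First I would carry out polynomial division to write $f(z) = P(z) + N(z)/D(z)$, where $P$ is the (possibly zero) quotient polynomial and $N,D$ are polynomials with $\deg N < \deg D$, $\gcd(N,D)=1$, and $D(0)\neq 0$; the last condition holds because $f$ is analytic at zero. Setting $m_0 =_\df \deg P$, we have $[z^m]P(z)=0$ for every $m>m_0$, so from here on only the proper part $N(z)/D(z)$ matters, and its poles are exactly $z_1,\ldots,z_k$ with the same orders as those of $f$. Writing $D(z) = c\prod_{j=1}^{k}(z-z_j)^{d_j}$, with $d_j$ the order of the pole at $z_j$, the standard partial fraction expansion gives
\[
\frac{N(z)}{D(z)} = \sum_{j=1}^{k}\sum_{i=1}^{d_j}\frac{c_{j,i}}{(z-z_j)^{i}}\mbox{,}
\]
and the crucial bookkeeping fact to record is that $c_{j,d_j}\neq 0$: if it vanished, $z_j$ would be a pole of order strictly below $d_j$, contradicting the hypothesis.

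Next I would extract coefficients term by term. Writing $(z-z_j)^{-i} = (-z_j)^{-i}(1-z/z_j)^{-i}$ and expanding via Newton's binomial series $(1-u)^{-i}=\sum_{m\geq 0}\binom{m+i-1}{i-1}u^{m}$, one obtains
\[
[z^m]\,\frac{1}{(z-z_j)^{i}} = \frac{(-1)^{i}}{z_j^{i}}\binom{m+i-1}{i-1}\left(\frac{1}{z_j}\right)^{m}\mbox{.}
\]
Since $m\mapsto\binom{m+i-1}{i-1}$ is a polynomial in $m$ of degree exactly $i-1$ (leading coefficient $1/(i-1)!$), summing over $i$ and $j$ yields, for every $m>m_0$,
\[
[z^m]f(z) = \sum_{j=1}^{k}\Pi_j(m)\left(\frac{1}{z_j}\right)^{m}\mbox{,}\qquad \Pi_j(m) =_\df \sum_{i=1}^{d_j}\frac{(-1)^{i}c_{j,i}}{z_j^{i}}\binom{m+i-1}{i-1}\mbox{.}
\]
For the degree statement, note that inside $\Pi_j$ only the summand $i=d_j$ reaches degree $d_j-1$, all others having degree at most $d_j-2$; its coefficient is a nonzero scalar times $c_{j,d_j}$, which is nonzero by the remark above. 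Hence $\deg\Pi_j = d_j-1$, as claimed.

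This argument contains no genuinely hard step; the care needed is entirely in the bookkeeping, and I would expect the following three points to be where a careless write-up goes wrong. First, that the improper part $P$ perturbs only the finitely many coefficients of index $\le m_0$, which is what lets us restrict to $m>m_0$. Second, that the leading partial-fraction coefficient $c_{j,d_j}$ is nonzero, which makes the degree assertion tight rather than a mere upper bound. Third—if one also wants uniqueness of the $\Pi_j$, although the statement only asserts existence—that distinct poles yield distinct geometric factors $1/z_j$, so no cancellation between different indices $j$ is possible. Each of these is immediate from the partial fraction decomposition together with the fact that $z_1,\ldots,z_k$ are precisely the poles of $f$ with their correct multiplicities.
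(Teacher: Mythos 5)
Your proof is correct. The paper itself gives no proof of this statement---it is imported verbatim as Theorem~IV from Flajolet and Sedgewick's \emph{Analytic Combinatorics}---and your argument (polynomial division to isolate the improper part, partial fractions with the observation that the leading coefficient $c_{j,d_j}$ is nonzero, and Newton's binomial series to extract $[z^m](z-z_j)^{-i}$) is precisely the standard proof given in that reference, with the bookkeeping points (the role of $m_0$, the tightness of the degree bound, $z_j\neq 0$ from analyticity at the origin) handled correctly.
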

   
   By construction of the regular language $\mathcal{Y}_X$, all poles of $\tilde{\mathcal{Z}}$ are of order at most $r$, where $r$ is the
   cardinality of $\myalp$. Let $\tilde{z}_1,\ldots,\tilde{z}_{r'}$ where $r'\in\{1,\ldots,r\}$ be these poles (which have not yet been
   specified) and let $\tilde{z}_1$ the pole of smallest modulus. Then according to the above theorem we have that
   \begin{equation}
     \label{eq:1a}
   \left[z^m\right] \tilde{\mathcal{Z}}_X = \pi\left(\mathcal{Y}^{(m)}_X\right)
   = \sum_{i=1}^{r'} \Pi_i(m)\cdot\left( \frac{1}{\tilde{z}_i}\right)^m \le
   \left( \frac{1}{\tilde{z}_1}\right)^m  \cdot \sum_{i=1}^{r'} \Pi_i(m)\mbox{,}
 \end{equation}
 where for $i\in\{1,\ldots,r'\}$ is a polynomial of degree at most equal to the order of the pole at $z_i$ minus one.
  Now, we are in a position to prove the exponentially decreasing upper bound on $\Phi(t,m,d)$.
  
 \begin{proof}[Proof of Claim~\ref{claim:boundonpart1}]
   Recapitulating the previous calculation, we have
   \begin{eqnarray}\nonumber
     \Phi(t,m,d) &=& 2\cdot\!\!
     \sum_{\alpha\in\myalple{m} }\Bigl(\sum_{l=1}^{\lceil d\cdot m \rceil}\toalpx{X}{\alpha}{t}{1}{l}\Bigr)^2
     ~~~~~~~~~~~~~~~~~~~~~~~~~~~~~~~~~\text{(Def.)}\\\nonumber
     &\le&2\lceil d\cdot m \rceil\cdot \!\sum_{l=1}^{\infty}
     \sum_{\alpha\in\myalple{m} }\toalpx{X}{\alpha}{t}{1}{l}^2
     ~~~~~~~~~~~~~~~~~~~~~\text{(Ineq.~\eqref{eq:18})}\\\nonumber
     &\le& \frac{2\lceil d\cdot m \rceil}{\delta}\cdot \!\sum_{l=1}^{\infty}
     \sum_{\alpha\in\myalple{m} }\delta^{l}\cdot\toalpx{X}{\alpha}{t}{1}{l}
  ~~~~~~~~~~~~~~~~~~~\text{(Lem.~\ref{lem:corelemma})}\\\nonumber
  &\le &       \frac{2\lceil d\cdot m \rceil}{\tilde{\eta}\delta}\cdot\sum_{l=1}^{\infty} \pi\left(\mathcal{Y}_X^{(m)}\right)
  ~~~~~~~~~~~~~~~~~~~~~~~~~~~~~~~~~~~~\text{(Lem.~\ref{lem:final-words})}\\\nonumber
  &\le & \frac{2\lceil d\cdot m \rceil}{\tilde{\eta}\delta}\cdot \left[z^m\right]      \tilde{\mathcal{Z}}_X(z)
  ~~~~~~~~~~~~~~~~~~~~~~~~~~~~~~~~~~~~~~~~\text{(Lem.~\ref{lem:generatingf})}\\\nonumber
  &\le &  \frac{2\lceil d\cdot m \rceil}{\tilde{\eta}\delta}\cdot
    \sum_{i=1}^{r'} \Pi_i(m)\cdot \left(\frac{1}{\tilde{z}_1}\right)^m
  ~~~~~~~~~~\text{
    (Eq.~\eqref{eq:1a} and Thm.~\ref{thm:rational-expansion})}\mbox{,}
\end{eqnarray}
where $\tilde{z}_1$ is the pole of minimum modulus of the function $\tilde{\mathcal{Z}}_X(z)$. Now, since $\lceil d\cdot m\rceil<
(d+1)\cdot m$ the Claim follows with $\Pi(m)=\frac{2(d+1)}{\tilde{\eta}\delta}\cdot \sum_{i=1}^{r'} \Pi_i(m)$.
\end{proof}

 \subsection{Bounding $\Psi(t,m,d)$}\label{sec:bounding-psi}
 In this section, we derive the exponentially decreasing upper bound on the term $\Psi(t,m,d)$ for fixed $d\in\Real_+$.
 Remember that we fixed
 \[
 d =_\df \min\{d' \in \Real~:~ (\sqrt[d']{e}\cdot p_\mathrm{max}<1)~\wedge~
 (d'\cdot (p_\mathrm{max})^{d'}<\frac{(p_\mathrm{max})^2}{3})\}\mbox{,}
 \]
where $p_\mathrm{max} = \max_{a\in\myalp}\mydel{a}$ was the maximum deletion probability. Set 
 \[
 \gamma_{\text{\ref{claim:boundonpart2}}}=_\df e\cdot(d+1)\cdot(p_\mathrm{max})^{d-1}
 \]
 and
 \[
 c_{\text{\ref{claim:boundonpart2}}}=_\df \frac{\sqrt[d+1]{e}}{\tilde{\eta}\cdot(1-\sqrt[d+1]{e}\cdot p_\mathrm{max})}\mbox{.}
 \]
 Here,  $e\approx 2.71...$ is the base of the natural logarithm and $\tilde{\eta}=\min_{q\in W}\eta_q$.
% Note, that $\gamma_{\text{Lem~\ref{claim:boundonpart2}}}<1$.
We prove Claim~\ref{claim:boundonpart2} by showing that for the above choice of constants it holds that
\[
\Psi(t,m,d) \le c_{\text{\ref{claim:boundonpart2}}}\cdot (\gamma_{\text{\ref{claim:boundonpart2}}})^m\mbox{.}
\]
The choice of $d$ gives that $ \gamma_{\text{\ref{claim:boundonpart2}}}<1$. This justifies the choice.
\begin{proof}[Proof of Claim~\ref{claim:boundonpart2}]
  Let $d$ and $p_\mathrm{max}$ be defined as above. We start as follows:
  \begin{eqnarray*}
    \Psi(t,m,d)&=&2\cdot\sum_{\alpha\in\myalple{m} }\Bigl(\sum_{l=\lceil d \cdot m \rceil+1}^{\infty}\toalp{\alpha}{t}{1}{l}
    \Bigr)^2\\
          &\le&
          \sum_{\alpha\in\myalple{m} }\sum_{l=\lceil d \cdot m \rceil+1}^{\infty}\toalp{\alpha}{t}{1}{l}\\
          &=&    \sum_{l=\lceil d \cdot m \rceil+1}^{\infty}    \sum_{\alpha\in\myalple{m} }\toalp{\alpha}{t}{1}{l}\mbox{.}  
        \end{eqnarray*}
        This holds particularly, because we deal with probabilities, i.e., quantities less than one. Thus, we have
        bounded $\Psi(t,m,d)$ by the that part of the probability mass induced by $X$ on input $t$ which corresponds to
        the cases in which $X$ has read a relatively long prefix of $t$.
        Next, we consider the expansion of $\sum_{\alpha\in\myalple{m} }\toalp{\alpha}{t}{1}{l}$ for a fixed $l\ge \lceil d
       \cdot m \rceil+1$ due to Lemma~\ref{lem:expanding-the-sum}:
       \begin{eqnarray}\nonumber
        \lefteqn{\sum_{\alpha\in\myalple{m} }\toalp{\alpha}{t}{1}{l}}\\\nonumber
        &\le&\frac{1}{\tilde{\eta}}\cdot\!\!\!\! 
      \sum_{m_1+\ldots+m_l=m}      
      \prod_{i=1}^{l}
      \biggl(
      f(m_i)\!\cdot\!\mydel{t[i]}\!+\!
      (1\!-\!f(m_i))\!\cdot\!\!\sum_{q\in W}\myread{t[i]}{q}\eta_q(1\!-\!\eta_q)^{m_i-1}
      \biggr)\\\label{eq:17}
      &\le&
      \frac{1}{\tilde{\eta}}\cdot\binom{m+l-1}{l-1}\cdot (p_\mathrm{max})^{l-m}
      \mbox{,}
\end{eqnarray}
    Inequality~\eqref{eq:17} follows
    from the fact that for $l\ge \lceil d \cdot m \rceil +1$ in every decomposition
    $m=m_1+\ldots +m_l$ of $m$ into $l$ non-negative addends, there are at least $l-m$ indices $i$, where 
    $1\le i\le l$ such that $m_i=0$. For each such $m_i$, it holds that $f(m_i)=1$ and thus a factor of
    $\mydel{t[i]}\le p_\mathrm{max}$ is ``added'' in the product. Also, there are at most $\binom{m+l-1}{l-1}$
    such decompositions. Using Stirling's Approximation for the Binomial Coefficient and the Fact that $(n+1/n)^n<e$ we 
    may further bound as follows:
\begin{eqnarray}\nonumber
 \sum_{l=\lceil d \cdot m \rceil+1}^{\infty}    \sum_{\alpha\in\myalple{m} }\toalp{\alpha}{t}{1}{l}
      &\le&
      \frac{1}{\tilde{\eta}}\cdot\sum_{l=\lceil d\cdot m\rceil +1}^{\infty}
      \binom{m+l-1}{ l-1}\cdot(p_\mathrm{max})^{l-m}\\\nonumber
      &\le&\frac{1}{\tilde{\eta}}\cdot(p_\mathrm{max})^{(d-1)m}\cdot\!
      \sum_{l=0}^{\infty}
      \binom{\lceil d\cdot m\rceil + m +l}{m}\cdot(p_\mathrm{max})^{l}\\\nonumber
      &\le&\frac{\sqrt[d+1]{e}}{\tilde{\eta}}\cdot\bigl(e(d+1)(p_\mathrm{max})^{(d-1)}\bigr)^{m}\cdot\!
      \sum_{l=0}^{\infty} (\sqrt[d+1]{e}\cdot p_\mathrm{max})^l\\\label{eq:12}
      &=&
      \frac{
        \sqrt[d+1]{e}
      }{
      \tilde{\eta}\cdot(1-\sqrt[d+1]{e}\cdot p_\mathrm{max})
      }\cdot \bigl(e(d+1)(p_\mathrm{max})^{(d-1)}\bigr)^{m}\\\nonumber
      &=&c_{\text{\ref{claim:boundonpart2}}}\cdot(\gamma_{\text{\ref{claim:boundonpart2}}})^m\mbox{.}
\end{eqnarray}
Here,~\eqref{eq:12} holds, because  $\sqrt[d+1]{e}\cdot p_\mathrm{max}<1$ by our choice of $d$.
Hence, Claim~\ref{claim:boundonpart2} follows.
    \end{proof}
    
    \begin{remark}
      \label{remark1}
      Note that $\gamma_{\text{\ref{claim:boundonpart2}}}$ can be made {\em  arbitrarily} small, as
      $\lim_{d\rightarrow \infty}d\cdot (p_\mathrm{max})^d = 0$. Our choice of $d$ being minimal such that 
      the exponentially decreasing upper bound on $\Psi(t,m,d)$  can be shown can thus be improved such that 
      for $d$ sufficiently large,
      \[
      \lim_{m\rightarrow\infty }\frac{\Phi(t,m,d)}{\Psi(t,m,d)}=0
      \]
      and therefore the base of the logarithm for the smoothed trie height depends {\em  only} on the upper bound on 
      $\Phi(t,m,d)$.
    \end{remark}

    In smoothed analysis it is usual to quantify the influence of the perturbation function on the smoothed complexity.
    Here, the respective quality is the trie height. So far, we have ignored the quantitative influence of the
    perturbation function and have only given a qualitative result.  Note that by Remark~\ref{remark1} immediatly
    implies Theorem~\ref{thm:tight-for-star-like} .


\begin{thebibliography}{10}

\bibitem{Apostolico-1992}
A.~Apostolico, W.~Szpankowski.
\newblock Self-alignments in words and their applications.
\newblock {\em Journal of Algorithms}, 13(3):446--467, 1992.

\bibitem{Clement-2001}
J.~Cl{\'{e}}ment, P.~Flajolet, B.~Vall{\'{e}}e.
\newblock Dynamical sources in information theory: A general analysis of trie
  structures.
\newblock {\em Algorithmica}, 29(1-2):307--369, 2001.

\bibitem{Devroye-1982}
L.~Devroye.
\newblock A note on the average depth of tries.
\newblock {\em Computing}, 28:367--371, 1982.

\bibitem{Devroye-1984}
L.~Devroye.
\newblock A probabilistic analysis of the height of tries and the complexity of
  triesort.
\newblock {\em Acta Informatica}, 21(3):229--237, 1984.

\bibitem{Devroye-1992a}
L.~Devroye.
\newblock A study of trie-like structures under the density model.
\newblock {\em Annals of Applied Probability}, 2(2):402--434, 1992.

\bibitem{Eckhardt-2007}
S.~Eckhardt, S.~Kosub, J.~Nowak.
\newblock {\em Smoothed Analysis of Trie Height}.
\newblock Technical Report TUM-I0715, Institut f\"ur Informatik, Technische Universit\"at M\"unchen, 2007.

\bibitem{Flajolet-2007}
P.~Flajolet, R.~Sedgewick.
\newblock {\em Analytic Combinatorics}.
\newblock Web edition, 9th edition, 2007.

\bibitem{Fredkin-1960}
E.~Fredkin.
\newblock Trie memory.
\newblock {\em Communication of the ACM}, 3:490--500, 1960.

\bibitem{Knuth-1997}
D.~Knuth.
\newblock {\em The Art of Computer Programming}, volume Vol. 3: Sorting and
  Searching.
\newblock Addison-Wesley Publishing Co., Reading, MA, 1997.

\bibitem{Nilsson-2002}
S.~Nilsson, M.~Tikkanen.
\newblock An experimental study of compression methods for dynamic tries.
\newblock {\em Algorithmica}, 33(1):19--33, 2002.

\bibitem{Paz-1971}
A.~Paz.
\newblock {\em Introduction to Probabilistic Automata}.
\newblock %Computer Science and Applied Mathematics. A Series of Monographs and
  %Textbooks. 
	Academic Press, 1971.

\bibitem{Pittel-1985}
B.~Pittel.
\newblock Asymptotical growth of a class of random trees.
\newblock {\em Annals of Probability}, 13(2):414--427, 1985.

\bibitem{Pittel-1986}
B.~Pittel.
\newblock Paths in a random digital tree: Limiting distributions.
\newblock {\em Advances in Applied Probability}, 18(1):139--155, 1986.

\bibitem{Rabin-1963}
M.~Rabin.
\newblock Probabilistic automata.
\newblock {\em Information and Control}, 6(3):230--245, 1963.

\bibitem{Santha-1986}
M.~Santha, U.~Vazirani.
\newblock Generating quasi-random sequences from semi-random sources.
\newblock {\em Journal of Computer and System Sciences}, 33(1):75--87, 198.

\bibitem{Spielman-2004}
D.~Spielman, S.-H. Teng.
\newblock Smoothed analysis of algorithms: why the simplex algorithm usually
  takes polynomial time.
\newblock {\em Journal of the ACM}, 51(3):385--463, 2004.

\bibitem{Szpankowski-1991}
W.~Szpankowski.
\newblock On the height of digital trees and related problems.
\newblock {\em Algorithmica}, 6:256--277, 1991.

\bibitem{Szpankowski-1993a}
W.~Szpankowski.
\newblock A generalized suffix tree and its (un)expected asymptotic behaviors.
\newblock {\em SIAM Journal on Computing}, 22(6):1176--1198, 1993.

\bibitem{Szpankowski-2001}
W.~Szpankowski.
\newblock {\em Average Case Analysis of Algorithms on Sequences}.
\newblock %Wiley-Interscience Series in Discrete Mathematics and Optimization.
  John Wiley, New York, NY, 2001.

\bibitem{Valle-2001}
B.~Vall{\'{e}}e.
\newblock Dynamical sources in information theory: Fundamental intervals and
  word prefixes.
\newblock {\em Algorithmica}, 29(1-2):269--306, 2001.

\end{thebibliography}
\end{document}